\newtheorem{thm}{Theorem}[section] 
\newtheorem{defn}[thm]{Definition} 
\newtheorem{example}[thm]{Example}
\newtheorem{pro}[thm]{Proposition}
\newtheorem{cor}[thm]{Corollary}
\newtheorem{asp}[thm]{Assumption}
\theoremstyle{remark}
\newtheorem{remark}{Remark}
\newcommand{\E}{\mathbb{E}}
\newcommand{\tM}{\tilde{M}}
\newcommand{\cM}{\check{M}}
\newcommand{\ba}{\bm{\alpha}}
\newcommand{\ts}{\sigma}
\newcommand{\tp}{{P}}
\newcommand{\tf}{\tilde{\mathcal{F}}}
\newcommand{\tn}{\eta}
\newcommand{\ms}{\check{s}}
\begin{document}

\title{Stochastic Switching Games}

\author{{Liangchen Li}
	and {Mike Ludkovski}
		\thanks{Ludkovski is partially supported by NSF-CDSE 1521743.}}

\date{Department of Statistics and Applied Probability, \\ University of California, Santa Barbara, Santa Barbara, CA, USA 93106-3110 \\ \url{{li,ludkovski}@pstat.ucsb.edu} }

	\maketitle
	
	\begin{abstract}
		We study nonzero-sum stochastic switching games. Two players compete for market dominance through controlling (via timing options) the discrete-state market regime $M$. Switching decisions are driven by a continuous stochastic factor $X$ that modulates instantaneous revenue rates and switching costs. This generates a competitive feedback between the short-term fluctuations due to $X$ and the medium-term advantages based on $M$.
We construct threshold-type Feedback Nash Equilibria which characterize stationary strategies describing long-run dynamic equilibrium market organization. Two sequential approximation schemes link the switching equilibrium to (i) constrained optimal switching; (ii) multi-stage timing games. We provide illustrations using an Ornstein-Uhlenbeck $X$ that leads to a recurrent equilibrium $M^\ast$ and a Geometric Brownian Motion $X$ that makes $M^\ast$ eventually ``absorbed'' as one player eventually gains permanent advantage. Explicit computations and comparative statics regarding the emergent macroscopic market equilibrium are also provided.
	\end{abstract}
	
	

\section{Introduction}

Dynamic competition under uncertainty has motivated an ongoing stream of literature in Operations Research and Industrial Organization. It provides a natural generalization from the classical one-agent optimization problems and hence is applicable in a large variety of applied settings. In a typical setup, firms aim to maximize their profits while being exposed to exogenous stochastic shocks and competitive effects. The latter competition is generally indirect, e.g.~through price, and hence leads to non-zero-sum game formulation, as opposed to the adversarial, zero-sum case. Some examples include: (i) capacity expansion \cite{boyer2012dynamic}; (ii) technological innovation and adoption \cite{HuismanKort04}; (iii) market entry \cite{HubertsDawid15}.

In this paper we focus on a specific class of \emph{switching games} which are inspired by the two time-scale feature of many markets. Indeed, dynamic competition is often driven by infinitesimal shocks that determine the rapidly fluctuating short-run market conditions. These fluctuations yield ``local'' advantage to players. In turn, the firms convert such short-term effect into a more durable gain through market dominance. For example, advantageous investment costs in the present can be converted into longer-term capacity gains that will yield a larger market share. Likewise, cheaper R\&D costs can be locked-in through technological edge. The overall model then links exogenous stochastic shocks with the endogenized firms' decisions to obtain the dynamic equilibrium for the market organization.

This narrative is formalized by considering a ``microeconomic'' stochastic factor $(X_t)$ that drives market conditions vis-a-vis the ``macroeconomic'' market state $(M_t)$ that determines market power and relative profits of the firms. We then aim to solve for the competitive \emph{equilibrium} that determines $(M_t)$ from the evolution of $X$. Specifically, $(M_t)$ is taken to be fully controlled by the players through discrete/lumpy actions. In turn the actions are interpreted as controls available to the players and are affected by the local fluctuations captured by $(X_t)$. As a result, we consider a dynamic stochastic game with a micro/macro structure: in the short-term firms treat $M_t$ as fixed and optimize their next action like in a timing pre-emption game. In the long term, $(M_t)$ is a stationary market regime with endogenous market dynamics.

The two time-scales link the immediate competitive advantage identified by $X_t$ and the long-run market organization $M_t$. In effect, the latter ``tracks'' $X_t$: after a sustained period of advantageous market conditions we expect the respective firm to become dominant.  A lag is naturally introduced due to the fixed investment costs that create an opportunity cost. The novelty of our setup is to fully integrate this well-known idea within a non-cooperative game model. Thus, the players continuously compete for market dominance, with their actions influenced both by the short-term stochastic fluctuations driving $X_t$ and the strategic preemption/competition effects.
We focus on a duopoly that naturally mirrors the two-sided nature of the up/down market conditions represented by the one-dimensional $X$ and clarifies the competitive effects of responding to rival actions.

Our methodological interest in this model stems from three different directions. First, it extends our previous work \cite{aid2017capacity} on multi-stage timing games. In that version, the number of controls available to the players was a priori restricted; here we consider the more plausible situation of an infinitely-repeated game. One economic motivation is the capacity expansion problem under a growing stochastic environment (e.g.~demand) $X$. This yields a non-stationary model but the ultimate number of aggregate investments is unbounded, and must be modeled by a switching game. Second, we are interested in a \emph{stationary} switching game, where the market undergoes cyclical shocks (in the sense of $X$ being a recurrent Markov process).  We wish to find the endogenous dynamic equilibrium that will mirror this cyclicality through the strategically adjusted market regime. Describing such a recurrent stochastic investment-timing competition naturally links to switching duopoly games. Third, our model is motivated by the desire for tractability while allowing for dynamic cross-effects due to competition and stochastic shocks. The repeated stationary nature of the competition allows to remove the time-variable but still maintain the stochastic dynamics. As a result, the equilibrium structure is intuitive (summarized through switching thresholds) yet brings novel insights.

The switching game we consider naturally merges the single-agent switching models \cite{BrekkeOksendal94,HamadeneJeanblanc07,CarmonaLudkovski08,bayraktar2010one,ly2007explicit,PhamZhou09,JohnsonZervos10} and the non-zero-sum stopping games \cite{HamadeneZhang10,FerrariMoriarty15,Attard18,MartyrMoriarty17,RiedelSteg17}. Optimal switching arises in a number of important applications, and captures the market structure of a stochastic process $(X_t)$ linked with a controlled regime process $(M_t)$. Most relevant are the multi-mode models \cite{Pham07,PhamZhou09,HuTang10}. In particular, we leverage the related analytical results about the variational inequalities satisfied by the value functions and the construction of threshold-type strategies. We also make extensive use of the finite-control approximation \cite{bayraktar2010one} that offers an intuitive way to establish the Dynamic Programming Principle. From the game-theoretic perspective, switching controls are best approached as multi-stage stopping, making the respective analysis an essential building block. In turn a non-zero-sum stopping game requires handling the solution of an optimal stopping problem with an exit constraint \cite{leung2015optimal,MartyrMoriarty17,aid2017capacity}. Together, the above two strands provide the main intuition for the dynamics of the switching game:
\begin{itemize}
  \item As a multi-stage sequence of non-zero-sum stopping games;
  \item As a fixed point of associated best-response (constrained) optimal switching problems.
\end{itemize}
Through these lens, one may consider further versions by modifying the type of controls available to the players: path-dependent non-Markovian strategies (viewed via BSDEs); impulse controls (see \cite{aid2016nonzero}); singular controls (see e.g.~\cite{ChevalierVath13} for optimal switching with singular controls) and so on.

\section{Problem Formulation}\label{sec:PF}
We consider two firms, dubbed player $i,j\in\{1,2\}, i\neq j$, competing on the same market. The macro market state is described by a discrete-state process $(M_t)$  and represents the relative market dominance of each player. The domain of $(M_t)$ is a finite set $\mathcal{M}$; for simplicity we consider integer-valued $M_t$ and $\mathcal{M} = \{ \underline{m}, \underline{m}+1, \ldots, \overline{m}\}$. The players exercise switching-type controls to \textit{enhance} their market dominance; thus, Player $1$ can \emph{increase} $M_t$ by +1, and Player 2 can \emph{decrease} $M_t$ by -1. To exercise a switch, player $i$ must pay a cost $K^i(X_t, M_t)$.
Note that a switch by Player 1, followed by a switch by Player 2 completely neutralize each other and bring the market to its original state. The interpretation of $M_t$ as a relative dominance can be motivated by taking $M_t =M^1_t - M^2_t$, where $M^i_t \in \mathbb{N}$ represents the production capacity, or technology level of firm $i$. Thus, players repeatedly make competing investments to increase their capacity; investments by Player 1 raise $M_t$ and those by Player~2 lower it. The assumption that $(M_t)$ only moves between adjacent regimes $\Delta M = \pm 1$ can be easily relaxed, see \cref{rem:general-switching}.

To capture the local fluctuating market condition, we introduce an {exogenous} diffusion process $\left(X_t\right)_{t\geq0}$ on a probability space $(\Omega, \mathcal{F}, \mathbb{P})$, satisfying the stochastic differential equation (SDE)
\begin{align}\label{eq:PF_sde}
dX_t = \mu(X_t) \, dt + \sigma(X_t)\, dW_t,
\end{align}
where $\left(W_t\right)_{t\geq0}$ is a standard Brownian motion under $\mathbb{P}$. Denote by $\mathcal{D}:=(\underline{d}, \bar{d})$, with $-\infty\leq \underline{d}<\bar{d}\leq+\infty$, the domain of $(X_t)$ and $\mathbb{F}:=\left(\mathcal{F}_t\right)_{t\geq 0}$ the natural filtration generated by $(X_t)$. The coefficients $\mu: \mathcal{D}\to \mathbb{R}$ and $\sigma: \mathcal{D} \to \mathbb{R}_{+}$ are assumed to ensure a unique strong solution to \eqref{eq:PF_sde}. Moreover, we assume the boundaries of $\mathcal{D}$ are \textit{natural}, and $X$ is \textit{regular} in $\mathcal{D}$\footnote{Informally this means that starting at any $x \in \mathcal{D}$, $X$ will reach any $y \in \mathcal{D}$ with positive probability, and $\underline{d}, \bar{d}$ cannot be reached in finite time; see Ch.~2 of \cite{borodin2012handbook} for detailed exposition.}.

\begin{remark}
  More generally, one may take the coefficients $\mu,\sigma$ in \eqref{eq:PF_sde} to depend on $M_t$ which presents no technical difficulties beyond heavier notation. We revisit this extension in Section \ref{sec:Conc}.
\end{remark}

Players aim to maximize their expected future (discounted) profits on $[0,\infty)$ defined through revenue rates $\pi^i$'s that are driven by $\left(X_t, M_t\right)$. The integrated total profit is then given by $\int_0^\infty  e^{-r s} \pi^i(X_s, M_s)ds$ minus the net present value of switching costs, see \cref{def:game payoff} below. To match the intuition about the role of $(M_t)$, we postulate that:
(\textit{i}) Player 1 (resp.~P2) is dominant when $M_t>0$ (resp. $M_t < 0$);
(\textit{ii}) Player 1 (resp.~P2) prefers \textit{higher} (resp. \textit{lower}) $X_t$. The last assumption creates a positive feedback effect between $X$ and $M$: as $X$ rises, Player 1 gets more motivated to enhance her market dominance, eventually triggering her to act and make $M_t$ higher too; when $X$ falls sufficiently Player 2 gains short-term advantage and moves $M_t$ towards her preferred negative direction.

By way of illustration, we consider the following two representative examples:
\begin{example}[Mean-reverting Advantage]\label{eg:PF_MA}
	Local market fluctuations are mean-reverting, modeled by an Ornstein-Uhlenbeck process
	\begin{align*}
	dX_t=\mu(\theta-X_t)dt+\sigma dW_t,
	\end{align*}
	with $\mathcal{D}=\mathbb{R}$, $\mu,\sigma>0$ and $\theta\in\mathbb{R}$. Thus, the long-run market is stationary and market organization is expected to undergo a cyclical behavior as $X$ stochastically oscillates around $\theta$.  The players receive \textit{constant} profit rates based on deterministic profit ladders  $\pi^i_m$ that are independent of $X_t$ with
	\begin{align*}
	\pi^1_m < \pi^1_{m+1} \qquad\text{and}\qquad \pi^2_m > \pi^2_{m+1} \quad \forall m.
	\end{align*}
	Thus, Player~1 maximizes her revenue when $M_t$ is high and Player~2 when $M_t$ is low. In complement, the present market conditions $X_t$ affect the switching or investment costs.  Thus, when $X_t$ is high/low, $K^1$ is low/high ($K^2$ is high/low). Economically this could be interpreted as $X$ representing exchange rate, with dollar-denominated investment costs both for the domestic firm P1 and foreign firm P2.  For concreteness, we suppose the switching costs are exponential in $X_t$:
	\begin{align*}
	K^i(x, m) = c^i(m) + \alpha^i(m)e^{\beta^i(m)\cdot x}, \qquad i = 1, 2,
	\end{align*}
	where $c^i(m),\alpha^i(m) >0$, $\beta^1(m)<0$ and $\beta^2(m) >0$.
\end{example}

\begin{example}[Long-run Advantage]\label{eg:PF_LA} In the second example we suppose that in the long-run one player will possess the competitive advantage and become dominant. However, in the medium-term fluctuations $X$ creates uncertainty in $M$. This is captured by using a Geometric Brownian Motion (GBM) for $X$
	\begin{align}\label{eq:gbm}
	dX_t=\mu X_t \, dt+\sigma X_t \, dW_t,
	\end{align}
	with $\mathcal{D}=(0, +\infty)$, $\mu\in\mathbb{R}$ and $\sigma>0$. The players receive profit rates according to predetermined profit ladders $\pi^i_m$ as well; for the sake of diversity we use linear switching costs,
	\begin{align*}
	K^i(x, m) = \left[c^i(m) + \beta^i(m) \cdot x\right]_+,\qquad i = 1, 2
	\end{align*}
	where $\beta^1(m) < 0 $ and $\beta^2(m) >0$.
\end{example}

\subsection{Admissible Strategies}
As explained, we view $X$ as an exogenous stochastic risk factor(s), and $M$ as a fully endogenous macro market regime controlled via the joint strategy profile of the two players. Because $M_t$ affects the profitability of both players and brings a negative externality---as Player 1 profit rate increases, Player 2 revenue falls---we use a noncooperative game paradigm, with the players investing while taking into account that the rival will in the future switch  back to her unfavored states. In line with the discrete nature of $M$ we postulate the players adopt \textit{timing strategies}. Thus, the strategic interaction between them leads to a  non-zero-sum stochastic switching game.

To define game strategies, we need to introduce some technical constructs needed to precise closed-loop equilibrium. Informally, closed-loop strategies are based on the history of $(X_t)$ and the history of past switches. Because multiple switches are possible, $(M_t)$ is not sufficient on its own for that purpose.

We denote strategy of player $i$ by $\ba^i:=\{\tau^i(n):n \geq 1 \}$ where $\tau^i$ are certain stopping times. Admissibility of $\tau^i(n)$ is defined recursively, based on the initial state $(x,m)$. Let $(\mathcal{F}_t)_{t\geq0}$ be the natural filtration generated by $(X_t)$. Set $\ts_0=0$, $X_0=x$, $\tM_0=m$. For $n\geq 1$, we require $\tau^i(n)$ to be $\tf^{(n)}$-adapted and set
\begin{subequations}\label{eq:PF_construct}
	\begin{align}
	&\tf^{(n)}_t = \mathcal{F}_t \bigvee \sigma\big\{(\ts_k, \tp_k, \tM_k), k<n\big\}\label{eq:PF_constructF},\\
	&\sigma_{n} =  \tau^1(n) \wedge \tau^2(n),\\
	& \tp_{n} = 1 \cdot \mathds{1}_{\{\tau^1(n) < \tau^2(n)\}} + 2 \cdot \mathds{1}_{\{\tau^1(n) > \tau^2(n)\}} +\mathcal{H}_n \cdot\mathds{1}_{\{\tau^1(n) = \tau^2(n)\}},\label{eq:PF_constructP}\\
	&\tM_{n}=\tM_{n-1} +1 \cdot \mathds{1}_{\{\tp_{n}=1\}}- 1 \cdot \mathds{1}_{\{\tp_{n}=2\}}.\label{eq:PF_constructM1}
	\end{align}
\end{subequations}
The meaning of $n=1,\ldots,$ is the counter for the overall ``round'' of the switching game, with $\sigma_n$ recording the corresponding $n$-th switch time, $\tp_n$ the identity of the player who makes the $n$-th switch, and $\tM_n$ the macro market state after $n$ total switches are exercised.

In \eqref{eq:PF_constructP} we address scenarios that both players intend to switch at the same time by letting $\mathcal{H}_n$ denote the identity of the resulting leader. As a simple example, $\mathcal{H}_n\equiv 1$ if Player 1 has the instantaneous priority to switch. In general, resolving $\mathcal{H}_n$ requires consideration of auxiliary discrete-stage game \cite{grasselli2013priority,RiedelSteg17} that happens instantaneously at $\underline{\tau}_m$ on the event $\{\tau^1(n) = \tau^2(n)\}$; the latter could involve mixed strategies, i.e.~there is an additional random variable $\omega_t$ that determines the value of $\mathcal{H}_n$. This is another reason why we must explicitly augment the history of $\tp_k$ to the history of $(X_t)$ in \eqref{eq:PF_constructF}.

\begin{defn} (\textit{Admissible Strategies}) \label{def:admstrat}
The set of admissible closed-loop strategies $\mathcal{A}$ is $\ba^i:=\{\tau^i(n):n \geq 1 \}$ where $\tau^i(n)$ is adapted to $(\tf^{(n)}_t)$ with $\ts_n$, $\tp_n$, $\tM_n$ constructed in \eqref{eq:PF_construct}, and satisfying
	\begin{itemize}
		\item bounded domain of $(M_t)$: $\tau^1(n)=+\infty$ if $\tM_n=\overline{m}$, $\tau^2(n)=+\infty$ if $\tM_n=\underline{m}$;
		\item ordered in time: $\tau^i(n) \geq \ts_{n-1}, i\in\{1, 2\}, \quad \forall n\geq 1$;
		\item defined for all times: $\lim_{n\rightarrow\infty}\ts_n = +\infty$.
	\end{itemize}
\end{defn}

Note that strategies  in $\mathcal{A}$ are of Closed Loop Perfect State (CLPS) type, see a detailed exposition in  \cite[Ch.~3]{carmona2016lectures}. The three admissibility conditions state that only one player can act when $(M_t)$ is at its minimum/maximum value and also rule out a clustering of switches in finite time. The latter restriction $\lim_{n\rightarrow\infty}\ts_n = +\infty$ is mild, as it would be sub-optimal to make infinite switches, as soon as there are some switching costs. Note that \cref{def:admstrat} is joint over the profile $(\ba^1,\ba^2)$ and also depends on the initial condition. In the sequel we suppress this dependence for lighter notation.

Let us revisit the construction \eqref{eq:PF_construct} with $(\ba^1,\ba^2)$ denoting these players' strategies.  Given the strategy profile $(\ba^1,\ba^2)$,  the evolution of $(M_t)$ is admitted as
\begin{equation}\label{eq:PF_constructM2}
M_t := \tM_{\tn(t)}, \quad\text{with}\quad \tn(t) = \max\{n\geq0: \sigma_n\leq t\}.
\end{equation}
It is entirely possible and feasible that one player acts immediately, $\tau^{i}(n) = \sigma_{n-1}$, in which case $\sigma_{n} = \sigma_{n-1}$, hence $(M_t)$ formally undergoes multiple changes simultaneously. Furthermore, we describe the sequence of switching times realized by \textit{each player}, denoted by $\sigma^i_k$, $i\in\{1,2\}$, $k\geq1$ as
\begin{align}\label{eq:PF_constructIk}
\sigma^i_k:= \sigma_{\tn(i, k)},\quad\text{with}\quad \tn(i, k)=\min\{n\geq1:\sum^n_{l=1}\mathds{1}_{\{P_l=i\}}=k\}.
\end{align}

\subsection{Game Payoffs}
The game payoffs $J^i$'s received by these players are the total net present value (NPV) of future profits, namely the discounted expected future cashflows, minus the discounted switching costs.

\begin{defn}\label{def:game payoff}
	(\textit{Game Payoffs}) Given a strategy profile $(\ba^1,\ba^2)$, the NPV of future profits received by player $i$ is
	\begin{align}\label{eq:PF_game payoff}
	J^i_{m}(x; \bm{\alpha}^1, \bm{\alpha}^2) & := \E\left[\int_0^{\infty}e^{-rt}\! \pi^i\big(X_t,\tM_{\tn(t)}\big)dt-\sum_{n=1}^\infty \mathds{1}_{\{\tp_n=i\}}e^{-r\ts_n}\cdot K^i\big(X_{\ts_n}, \tM_{n-1}\big) \,\middle|\,X_0=x,M_0=m\right],\nonumber\\
 & = \E\left[\int_0^{\infty}e^{-rt}\! \pi^i\big(X_t,M_t\big)dt-\sum_{k } e^{-r \sigma^i_k} K^i\big(X_{\sigma^i_k}, { \tM_{\tn(i, k)-1}}\big) \,\middle|\,X_0=x,M_0=m\right]
	\end{align}
	where $r>0$ is the constant discount rate.
\end{defn}

Let us introduce the static discounted future cashflows
\begin{align}\label{eq:MS_dcf}
D^i_m(x):=\mathbb{E}\left[\int^\infty_0 e^{-rt}\pi^i(X_t, m)dt\,\middle|\,X_0=x\right],
\end{align}
which are assumed to satisfy the growth condition $D^i_m(x) \leq C(1+|x|)$ for $i\in\{1,2\}$ and all $m\in\mathcal{M}$. Because switching costs are non-negative, game payoffs are also of linear growth since they are dominated by $D^i$'s, in particular $J^1_m(x) \leq D^1_{\overline{m}}(x)$ while $J^2_m(x) \leq D^2_{\underline{m}}(x)$.

\subsection{Nash Equilibrium: Fixed Points of Best-response}

To describe players' behavior in this nonzero-sum game we utilize  Markov Nash equilibria (MNE).
\begin{defn}\label{defn:ne}(\textit{Nash Equilibrium})
	Let $J^i_m(x,\cdot)$ denote the game payoff received by player $i$ with $X_0=x, M_0 = m$. The strategy profile $(\bm{\alpha^{1,\ast}},\,\bm{\alpha^{2,\ast}})\in\mathcal{A}$ is said to be a \textit{Nash equilibrium} of the switching game if for any $x \in \mathcal{D}, m\in\mathcal{M}$ and
strategy $\ba^i(x)$ of player $i$ such that $(\ba^i(x), \ba^{j,\ast})$ is admissible
	\begin{equation}\label{eq:PF_ne}
	J^i_m(x;\,\ba^i(x),\ba^{j,\ast})\leq J^i_m(x;\,\ba^{i, \ast},\ba^{j,\ast}).
	\end{equation}
	 The corresponding $V^i_m(x) := J^i_m(x;\,\ba^{i,\ast},\ba^{j,\ast})$ is then named the \textit{equilibrium payoff} of player $i$.
\end{defn}

The Nash equilibrium criterion \eqref{eq:PF_ne} characterizes equilibrium strategies as a fixed point of each player's \textit{best-response} to her rival's strategy. Specifically, given an arbitrary rival's strategy $\bm{\alpha}^j$ define the resulting best-response payoff of player $i$
\begin{align}\label{eq:ES_BestRinf}
\widetilde{V}^i_m(x\,;\bm{\alpha}^j):= \sup_{\{\ba^i:(\ba^i, \ba^j)\in\mathcal{A}\} }J^i_{m}(x; \ba^i, \bm{\alpha}^j),\qquad x\in\mathcal{D},m\in\mathcal{M}.
\end{align}
Because (taking Player 1 as an example) game payoffs satisfy $ D^1_{\overline{m}}(x)  \geq \widetilde{V}^1_m(x; \bm{\alpha}^2) \geq J^1_m(x;\bar{\ba}^1, \ba^2) \geq D^1_{\underline{m}}(x),$
such best-response values are always well-defined. Equilibrium payoffs then satisfy:
\begin{align}\label{eq:PF_ChNE}
V^i_m(x)=\widetilde{V}^i_m(x\,;\bm{\alpha}^{j,\ast}), \quad i\in\{1, 2\}, \qquad j\neq i.
\end{align}

\section{Constructing an Equilibrium}
We now focus on a special class of strategies, which are \textit{stationary} and of \textit{threshold-type}, and allow us to explicitly construct a MNE. To do so, two key properties are needed. First, one must show that this class of strategies is closed under the best-response map~\eqref{eq:ES_BestRinf}. Second, a verification theorem is needed to show that the resulting fixed point of \eqref{eq:PF_ChNE}, defined through a system of equations, is indeed a MNE of the game. The programme starts in Section \ref{sec:SFPS} where we define threshold-type strategies and then characterize the best-response to such strategies as a solution to a system of coupled optimal stopping problems. Next, in Section \ref{sec:BVT} we  state the verification theorem which provides a system of nonlinear equations for the equilibrium threshold vectors $\bm{s}^{1,\ast},\bm{s}^{2,\ast}$. Lastly in Section \ref{sec:SMG_summary} we study the emerging equilibrium macro state $M^*$.

\subsection{Stationary and Threshold-type Strategies}\label{sec:SFPS}
The time-stationary Markovian strategies, also known as Feedback Perfect State (FPS) type defined in \cite[Ch.~3]{carmona2016lectures} depend only on the current $X_t$ and $\tM_{\eta(t)}$. Following the idea of a similar construction  in \cite{aid2016nonzero}, we define a strategy of player $i\in\{1, 2\}$ by $\ba^i:=\left(\Gamma^i_m\right)_{m\in\mathcal{M}}$, where $\Gamma^i_m$'s are fixed subsets of $\mathcal{D}$. Given a strategy profile $(\ba^1, \ba^2)$, a sequence of switches is uniquely determined as follows:
\begin{itemize}
	\item [---] when $M_t=m$,  player $i$ adopts the (feedback) switching region $\Gamma^i_m$: player $i$ exercises a switch (changes $M^i_t$ by $\pm 1$) at the first hitting time $\tau^i_m$ of $(X_t)$ to $\Gamma^i_m$ (with the convention that the hitting time of an empty set is $\infty$);
	\item [---] if both players want to switch, Player 1 has the priority.
\end{itemize}
Admissibility of the strategy profile $(\ba^1, \ba^2)$ in \cref{def:admstrat} now reduces to
	\begin{itemize}
	\item [---] $\Gamma^1_{\overline{m}}=\Gamma^2_{\underline{m}} = \emptyset$ ($M_t \in \mathcal{M}$)
	\item [---] $\Gamma^1_m\cap \Gamma^2_{m+1}=\emptyset$ for $m<\overline{m}$ and $\Gamma^1_{m-1}\cap \Gamma^2_{m}=\emptyset$ for $m>\underline{m}$. This rules out simultaneous switching loops; for instance if there were an $x \in \Gamma^1_m\cap \Gamma^2_{m+1}$ then starting in regime $m$, we would have that P1 switches up to $m+1$, but them immediately P2 switches back down to $m$, generating an infinite sequence of instantaneous switches.
\end{itemize}

Relying on the resulting Markov structure of threshold-type strategies, we revisit the formal game evolution which can now be constructed using independent auxiliary copies $\tilde{X}^{(n)}, n=1,\ldots,$ of the strong Markov $X$. Below, $X^x$ denotes the $X$-process started at $X_0 = x$.
Let $x\in\mathcal{D}$, $m\in\mathcal{M}$, and a strategy profile $(\ba^1, \ba^2)\in\mathcal{A}$. Set $\sigma_0=0$, $X_0=x$ and $\tM_0=m$. For $n\geq 0$, define
\begin{subequations}
\begin{align}
&\tilde{X}^{(n)}_t = X^x_{\sigma_n+t},\qquad\text{for } t\geq 0,\label{eq:STS_constructX}\\
&\tilde{\tau}^{i, n} = \inf\{s\geq 0\,:\,\tilde{X}^{(n)}_s\in \Gamma^i_{\tM_{n}}\},\qquad i\in\{1,2\},\label{eq:STS_constructTau}\\
&\sigma_{n+1} = \sigma_n + \tilde{\tau}^{1,n} \wedge \tilde{\tau}^{2,n},\label{eq:STS_constructsig}\\
& P_{n+1} = 1 \cdot \mathds{1}_{\{\tilde{\tau}^{1,n} < \tilde{\tau}^{2,n}\}} + 2 \cdot \mathds{1}_{\{\tilde{\tau}^{1,n} > \tilde{\tau}^{2,n}\}} + \mathcal{H}_{n+1} \mathds{1}_{\{\tilde{\tau}^{1,n} = \tilde{\tau}^{2,n}\}},\\
&\tM_{n+1}=\tM_n +1 \cdot \mathds{1}_{\{\tp_{n+1}=1\}}- 1 \cdot \mathds{1}_{\{\tp_{n+1}=2\}}.\label{eq:STS_constructM1}
\end{align}
\end{subequations}
Then the evolution of $(M_t)$ and the sequence of switching times of each player $(\sigma^i_k)_{k\geq1}$ are obtained as in \eqref{eq:PF_constructM2} and \eqref{eq:PF_constructIk}. The strong Markov property of $X$ implies that each $\tilde{X}^{(n)}$ can be considered as a fresh (independent) copy of $X$ starting at $\tilde{X}^{(n)}_0 = X^x_{\sigma_n}$. Consequently, given these players' strategies, the pair $(X_t, M_t)$ is Markovian.

Recall that Player 1 is in favor of high $X_t$ and large $M_t$, while Player 2 prefers the opposite; it is therefore natural to assume that P1 switches up when $X$ becomes high enough and P2 switches down when $X$ becomes low enough.
\begin{defn}\label{def:thstrat}
	(\textit{Threshold-type Strategies}) Let $\bm{s}^i:=(s^i_m)_{m\in\mathcal{M}}$ be a vector which characterizes subsets of $\mathcal{D}$ for the switching regions $\Gamma^i_m$ of player $i\in\{1,2\}$ according to
	\begin{align}
	\Gamma^1_m \equiv \Gamma^1_m( \bm{s}^1) :=[s^1_m, \overline{d}),\qquad\text{and}\qquad\Gamma^2_m \equiv \Gamma^2_m( \bm{s}^2) :=(\underline{d},s^2_m].
	\end{align}
	A strategy associated to $(\Gamma^i_m)_{m\in\mathcal{M}}$ is called of threshold-type and denoted by $\bm{s}^i$.
\end{defn}

\begin{figure}[bh!]
	\centering
	\includegraphics[width=0.7\textwidth,height=2.5in]{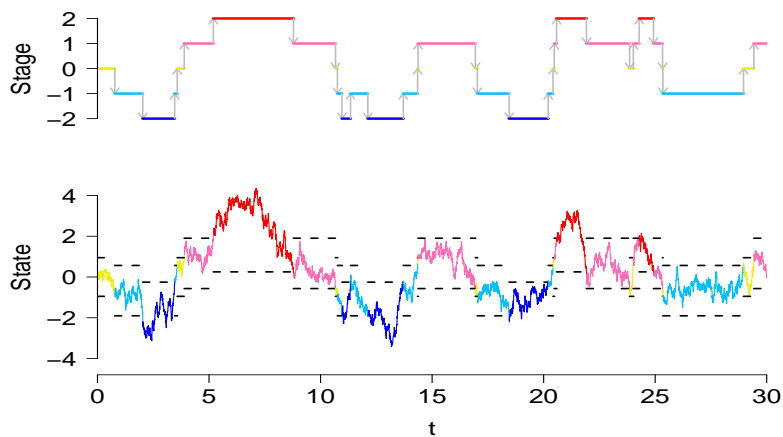}
	\caption{\label{fig:MS_Sample}A trajectory of $X$ and equilibrium $M^\ast$ starting at
		$X_0=0$, $M^\ast_0=0$. Here $X$ is an Ornstein-Uhlenbeck process and  $\mathcal{M}=\{-2, -1, 0,1, 2\}$. The equilibrium strategies are of \textit{threshold-type};  the dashed lines in the bottom plot indicate the respective switching thresholds $s^i_m$. Thus, the switching times $\sigma_n$ are hitting times of the above.}
\end{figure}

In \cref{fig:MS_Sample} we sketch the emerging equilibrium based on threshold-type strategies associated to one of our case studies. The players make a switch whenever the process $(X_t)$ hits the threshold $s^{1,\ast}_m$ from below or $s^{2,\ast}_m$ from above when at stage $m$, see the dashed lines in the bottom plot. The switching times $\sigma^i_k$ are described through the respective hitting times.  The top panel shows the resulting macro stage $(M^\ast_t)$ driven by $\sigma^i_k$'s along one realized trajectory of the local market fluctuations $(X_t)$. These players are ``at equal strength'' in the beginning, $M^\ast_0 = 0$; as $(X_t)$ drops, it enters Player 2's switching region first ($\tau^2(1)<\tau^1(1)$) leading her to exercise a switch and change $M^\ast_{\sigma^2_1} = -1$. The players then recursively wait for $(X_t)$ to hit either the threshold $s^{1,\ast}_{-1}$  or $s^{2,\ast}_{-1}$ ($\tau^1(2)\wedge\tau^2(2)$), to make further switches.

Note that in the above definition we require $\Gamma^i_m$ to be connected, so that they are fully characterized by their boundary $s^i_m$. In turn, threshold-type strategies allow to move from looking at the unstructured (in the sense of optimization) switching strategies defined by general $\Gamma^i_m$ to searching for equilibria parametrized by the $|\mathcal{M}|$-vectors $\bm{s}^1,\bm{s}^2$. In particular, this reduces the search for MNE to a $2|\mathcal M|$-dimensional setting where numerical resolution becomes possible. Towards this goal, the main aim in this section is to constructively find such threshold equilibria.

\begin{remark}
	(\textit{Boundary Stages}) Recall that admissible strategies defined in \cref{def:admstrat} imply Player~1 (resp. Player~2)
	cannot make any switches at stage $\overline{m}$ (resp. at stage $\underline{m}$). In terms of threshold-type strategies this is equivalent to simply taking $s^1_{\overline{m}}=\overline{d}$ and $s^2_{\underline{m}}=\underline{d}$ which can be viewed as a constraint on possible admissible controls.
\end{remark}

Given a \textit{threshold-type} strategy $\ba^j \equiv \bm{s}^j$ of player~$j$, we expect the best-response strategy of player~$i$ to be consistently of threshold-type (see \cref{cor:BestRVT}). The Dynamic Programming Principle (DPP) implies that her corresponding value function, $\widetilde{V}^i( \cdot; \bm{s}^j)$ defined in \eqref{eq:ES_BestRinf}, solves a system of coupled stopping problems (see \cite{bayraktar2010one}). Namely letting $\underline{\tau}_m := \tau^1_{m}\wedge\tau^2_{m}$ (with $\tau^j_m$ pre-specified hitting times according to $\bm{s}^j$), we expect that
\begin{align}\notag
\widetilde{V}^i_m(x\,; \bm{s}^j) & = \sup_{\tau^i_m\in\mathcal{T}}\E_x\Bigg[\int_0^{\underline{\tau}_m}e^{-rt}\pi^i_m(X_t)dt+e^{-r\underline{\tau}_m}\mathds{1}_{\{\tau^1_m>\tau^2_m\}}
\Big(\widetilde{V}^i_{m-1}(X_{\tau^2_m}\,;\bm{s}^j)-\mathds{1}_{\{i=2\}}K^i_m(X_{\tau^2_m})\Big)\\ \notag
&\qquad\qquad\qquad\qquad+e^{-r\underline{\tau}_m}\mathds{1}_{\{\tau^1_m<\tau^2_m\}}\Big(\widetilde{V}^i_{m+1}(X_{\tau^1_m}\,; \bm{s}^j)-\mathds{1}_{\{i=1\}}K^i_m(X_{\tau^1_m})\Big)\\ \label{eq:ES_BestR}
&\qquad\qquad\qquad\qquad+e^{-r\underline{\tau}_m}\mathds{1}_{\{\tau^1_m=\tau^2_m\}}\Big(\widetilde{V}^i_{m+1}(X_{\tau^i_m}\,; \bm{s}^j)-\mathds{1}_{\{i=1\}}K^i_m(X_{\tau^i_m})\Big)\Bigg],
\end{align}
for $i\in\{1,2\}, j\neq i$, $\forall x\in\mathcal{D}$ and all $m\in\mathcal{M}$. Above $\mathcal{T}$ denotes all $\mathbb{F}$-stopping times, but the optimizer $\tilde{\tau}^i_m$ is expected to be associated to a threshold $\tilde{s}^i_m$. We use the shorthand notation $\E_x\left[\cdot\right]:=\E\left[\cdot|X_0=x\right]$, and the subscript in $\widetilde{V}^i_m$ to indicate the conditioning on $M_0=m$. Intuitively, at regime $m$ player $i$ implements a timing strategy to exercise her control at $\tau^i_m$, and a realization of these two stopping times yields a ``\textit{leader}'',  who acts first and switches $M_{\underline{\tau}_m}$. Note that in the case $\{\tau^1=\tau^2\}$ Player 1 has the priority to switch.

To approach the coupled system \eqref{eq:ES_BestR} we first consider the corresponding generic local constrained optimal stopping problem (which uncouples \eqref{eq:ES_BestR} by removing $\widetilde{V}^i_{m-1}, \widetilde{V}^i_{m+1}$ from the right-hand-side) and then the game equilibrium that is characterized as the best response to $\bm{s}^{j,\ast}$. See \cite{bayraktar2010one} for a related analysis of unconstrained optimal switching problems.

\subsection{Building Block}
To find the best-response of player $i$, we consider a \emph{local} optimal stopping problem of the form
\begin{align}\label{eq:MS_problem}
\widetilde{v}^i(x;\tau^j)=\sup_{\tau^i\in\mathcal{T}}\mathbb{E}_x\left[\mathds{1}_{\{\tau^i<\tau^j\}}e^{-r\tau^i}h^i(X_{\tau^i})+
\mathds{1}_{\{\tau^i>\tau^j\}}e^{-r\tau^j}l^i(X_{\tau^j})+\mathds{1}_{\{\tau^i=\tau^j\}}e^{-r\tau^i}g^i(X_{\tau^i})\right],
\end{align}
where $\tau^j$ is a given stopping time, $h^i(\cdot)$ is the \textit{leader} payoff from switching before $\tau^j$, and $l^i(\cdot)$ is the \textit{follower} payoff from switching after $\tau^j$. $g^i(\cdot)$ denotes the payoff of player $i$ when both players want to switch simultaneously. In our setting, $g^1 = h^1$, while $g^2=l^2$ due to the priority of Player~1.

In order to obtain threshold-type equilibrium, one would expect the optimizer to \eqref{eq:MS_problem}, $\tilde{\tau}^i$ to be of threshold-type, given $\tau^j$ is of threshold-type. However, as discussed at length in \cite{aid2017capacity} this is not always true. If player $j$ behaves aggressively, player $i$ would try to preempt right before, leading to lack of optimal $\tau^i$.

\begin{asp}\label{asp:MS_asp}
	\begin{enumerate}[label=(\roman*)]
		\item The exogenous stopping time $\tau^j$ is of \textit{threshold-type},
		\begin{align*}
		\tau^j:= \inf\{t\geq0\,:\,X_t\in\Gamma^j\}, j\in\{1,2\},\quad\text{
		with } \Gamma^1:=[s^1, \overline{d}) \text{ and } \Gamma^2:=(\underline{d}, s^2]. \end{align*}
		\item There are two function classes $\mathcal{H}_{\text{inc}}$ and $\mathcal{H}_{\text{dec}}$ defined in \cref{AP_Methd} such that $h^1\in\mathcal{H}_{\text{inc}}$ and $h^2\in\mathcal{H}_{\text{dec}}$.
		\item\label{asp:MS_noPreemp} player $i$ is not incentivized to preempt at $s^j$, i.e.~$h^i(s^j) < l^i(s^j)$.
	\end{enumerate}
\end{asp}

Under the above assumptions, it is known that the solution of \eqref{eq:MS_problem} is of threshold-type. Specifically, this can be established using the smallest concave majorant method, see e.g.~\cite{aid2017capacity, dayanik2003optimal, FerrariMoriarty15}. Let us remark that  \cref{asp:MS_asp} \ref{asp:MS_noPreemp} is essential for this result and would be hard to check in the sequel. Nevertheless, if the rest of  \cref{asp:MS_asp} is fulfilled, there exists \textit{uniquely} a \textit{preemptive} best-response, see \cite{aid2017capacity}.

\begin{pro}\label{pro:MS_OSP}
	Suppose that all conditions of \cref{asp:MS_asp} are satisfied. Let $F,G$ be the solutions to $(\mathcal{L}-r)u=0$, where $\mathcal{L}$ is the infinitesimal generator of $X$. Set
\begin{align}\label{eq:MS_Wronsk}
W(x_1, x_2) &:=F'(x_1)G(x_2)-F(x_1)G'(x_2)\\
\mathcal{W}(x_1, x_2)&:=F(x_1)G(x_2)-F(x_2)G(x_1).
\end{align}
Then the value function of \eqref{eq:MS_problem} is admitted as
	\begin{align*}
	\widetilde{v}^i(x;\tau^j) = \begin{cases}
	h^i(x), & \qquad\text{for } x\in\Gamma^i,\\
	l^i(x), & \qquad\text{for } x\in\Gamma^j,\\
	\widetilde{\omega}^iF(x)+\widetilde{\nu}^iG(x), & \qquad \text{for }x\in\mathcal{D}\setminus\left(\Gamma^i\cup\Gamma^j\right),
	\end{cases}
	\end{align*}
	where the optimal stopping region $\Gamma^i = \Gamma(\tilde{s}^i)$ is of \textit{threshold-type} and defined \textit{uniquely} through the threshold $\tilde{s}^{i}:=\tilde{s}^{i}(s^j)$ (with $\tilde{s}^1>s^2\text{ and }\tilde{s}^2<s^1$) that satisfies
	\begin{align}\label{eq:MS_Threshold}
	h^i(\tilde{s}^i)W(\tilde{s}^{i}, s^j)-l^i(s^j)W(\tilde{s}^{i}, \tilde{s}^{i})-(h^i)'(\tilde{s}^{i})\mathcal{W}(\tilde{s}^{i}, s^j) = 0.
	\end{align}
The coefficients $\widetilde{\omega}^i:=\widetilde{\omega}^i(\tilde{s}^i, s^j)$ and $\widetilde{\nu}^i:=\widetilde{\nu}^i(\tilde{s}^i, s^j)$ are defined as
	\begin{align}\label{eq:MS_Coefficient}
	\widetilde{\omega}^i = \frac{h^i(\tilde{s}^i)G(s^j)-l^i(s^j)G(\tilde{s}^i)}{\mathcal{W}(\tilde{s}^i, s^j)},\qquad
	\widetilde{\nu}^i = \frac{l^i(s^j)F(\tilde{s}^i)-h^i(\tilde{s}^i)F(s^j)}{\mathcal{W}(\tilde{s}^i, s^j)}.
	\end{align}
\end{pro}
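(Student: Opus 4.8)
The plan is to establish Proposition~\ref{pro:MS_OSP} via the classical ``guess-and-verify'' route for one-dimensional optimal stopping, exploiting the structure imposed by \cref{asp:MS_asp}. First I would fix the region decomposition: on $\Gamma^i$ the player stops immediately and collects $h^i$; on $\Gamma^j$ the rival has already stopped, so the player is in the follower role and collects $l^i$ (here one uses \cref{asp:MS_asp}\ref{asp:MS_noPreemp}, $h^i(s^j)<l^i(s^j)$, to see it is strictly better \emph{not} to preempt exactly at $s^j$, hence $\Gamma^i$ does not touch $\Gamma^j$ and in fact $\tilde s^1>s^2$, $\tilde s^2<s^1$). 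On the remaining continuation region $\mathcal{D}\setminus(\Gamma^i\cup\Gamma^j)$ the candidate value must be $r$-harmonic, i.e.\ a linear combination $\widetilde\omega^i F+\widetilde\nu^i G$ of the two independent solutions of $(\mathcal{L}-r)u=0$; because this region is an interval with one endpoint at $s^j$ and the other at the free boundary $\tilde s^i$, the combination is pinned down by two boundary conditions.

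The second step is to write down and solve those boundary conditions. Value-matching at the exogenous boundary $s^j$ gives $\widetilde\omega^iF(s^j)+\widetilde\nu^iG(s^j)=l^i(s^j)$; value-matching at the free boundary gives $\widetilde\omega^iF(\tilde s^i)+\widetilde\nu^iG(\tilde s^i)=h^i(\tilde s^i)$. Solving this $2\times2$ linear system by Cramer's rule, with determinant $\mathcal{W}(\tilde s^i,s^j)=F(\tilde s^i)G(s^j)-F(s^j)G(\tilde s^i)$, yields exactly the formulas \eqref{eq:MS_Coefficient} for $\widetilde\omega^i,\widetilde\nu^i$. The remaining unknown $\tilde s^i$ is then fixed by the smooth-fit (first-order) condition at the free boundary, $\widetilde\omega^iF'(\tilde s^i)+\widetilde\nu^iG'(\tilde s^i)=(h^i)'(\tilde s^i)$; substituting the Cramer expressions and clearing the common denominator $\mathcal{W}(\tilde s^i,s^j)$ produces, after rearranging and recognizing the Wronskian-type combinations $W$ and $\mathcal{W}$, precisely equation \eqref{eq:MS_Threshold}. (No smooth-fit is imposed at $s^j$ since that boundary is not chosen by player $i$; only continuity holds there, consistent with the piecewise definition.)

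Third, I would argue that \eqref{eq:MS_Threshold} has a \emph{unique} root in the admissible range and that it indeed delivers a genuine threshold region. This is where \cref{asp:MS_asp}(ii)---membership of $h^1$ in $\mathcal{H}_{\text{inc}}$ and $h^2$ in $\mathcal{H}_{\text{dec}}$---does the work: these function classes are designed (in the referenced \cref{AP_Methd} and in \cite{aid2017capacity,dayanik2003optimal}) precisely so that the relevant transformed payoff has a single point where its smallest concave majorant peels off, equivalently so that the left-hand side of \eqref{eq:MS_Threshold} changes sign exactly once. Concretely, I would pass to Dynkin's scale $S$ and the transform $H^i:=h^i/F\circ(\text{scale})$ (or the analogous $h^i/G$ in the decreasing case), observe that $\widetilde v^i/F$ on the continuation region is affine in that coordinate, invoke the smallest-concave-majorant characterization of the value, and read off that the contact point is unique under \cref{asp:MS_asp}(ii); monotonicity of $h^i$ relative to $l^i$ together with \ref{asp:MS_noPreemp} then places it strictly on the correct side of $s^j$.

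Finally, a short verification step: with $\tilde s^i$, $\widetilde\omega^i$, $\widetilde\nu^i$ so defined, the candidate $\widetilde v^i$ is $C^1$ across $\tilde s^i$ by smooth fit, $C^0$ across $s^j$, dominates $h^i$ everywhere (again by the majorant property / \ref{asp:MS_noPreemp}), and satisfies $(\mathcal{L}-r)\widetilde v^i\le 0$ in the appropriate distributional sense on the relevant region; an It\^o--Dynkin / optional-sampling argument applied to $e^{-rt}\widetilde v^i(X_t)$ up to $\underline\tau:=\tau^i\wedge\tau^j$ then shows $\widetilde v^i\ge$ the payoff of any $\tau^i$ and that equality holds for the hitting time of $\Gamma(\tilde s^i)$, identifying $\widetilde v^i$ with the value in \eqref{eq:MS_problem}. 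I expect the main obstacle to be the uniqueness/sign-change analysis of \eqref{eq:MS_Threshold}: everything else is bookkeeping with $F,G$ and linear algebra, but showing the free-boundary equation has exactly one admissible solution genuinely relies on unpacking the definitions of $\mathcal{H}_{\text{inc}},\mathcal{H}_{\text{dec}}$ and on the concave-majorant machinery, so that is the step I would treat most carefully (and where I would lean hardest on \cite{aid2017capacity} and \cite{dayanik2003optimal}).
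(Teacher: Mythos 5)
Your proposal is correct and follows essentially the same route as the paper: the paper (in \cref{AP_Methd}) establishes the result via the smallest-concave-majorant characterization of the constrained stopping problem, with \eqref{eq:MS_Coefficient} read off as the slope and intercept of the tangent line and \eqref{eq:MS_Threshold} as the tangency condition, which is exactly the transformed version of your value-matching and smooth-fit system, and both arguments lean on the convex-then-concave shape of the transformed payoff guaranteed by $h^i\in\mathcal{H}_{\text{inc}}$ (resp.\ $\mathcal{H}_{\text{dec}}$) together with \cref{asp:MS_asp}\ref{asp:MS_noPreemp} for uniqueness of the contact point and for $\tilde s^1>s^2$, $\tilde s^2<s^1$. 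Your Cramer's-rule derivation reproduces \eqref{eq:MS_Coefficient}--\eqref{eq:MS_Threshold} exactly, so no gap.
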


\begin{remark}
	The above proposition subsumes the case where only one player is able to act. In this situation we may simply take
	$s^1=\overline{d}$ or $s^2=\underline{d}$, and player $i$ then effectively solves a standard optimal stopping problem as a special case of \eqref{eq:MS_problem}. See related discussion in \cite[Sec 4.1]{aid2017capacity}. These cases arise in the boundary stages $\underline{m}, \overline{m}$ associated to \eqref{eq:ES_BestR}.
\end{remark}

\subsection{Best-Response Verification Theorem}\label{sec:BVT}
By construction of MNEs in  \cref{defn:ne}, game payoffs and threshold-type strategies associated to an equilibrium necessarily solve the local optimizing  problems stated in \eqref{eq:ES_BestR}. Moreover, they necessarily are fixed-points to the following pair of optimizing problems at each regime $m$:
\begin{align}
\begin{cases}
V^1_m(x) =& \displaystyle\sup_{\tau^1_m\in\mathcal{T}}\E_x\Bigg[\int_0^{\underline{\tau}_m}e^{-rt}\pi^1_m(X_t)dt+e^{-r\underline{\tau}_m}\mathds{1}_{\{\tau^1_m>\tau^{2,\ast}_m\}}
\Big(V^1_{m-1}(X_{\tau^{2,\ast}_m})\Big)\\
&\qquad\qquad\qquad\qquad\qquad\qquad+e^{-r\underline{\tau}_m}\mathds{1}_{\{\tau^1_m\leq\tau^{2,\ast}_m\}}\Big(V^1_{m+1}(X_{\tau^1_m})-K^1_m(X_{\tau^1_m})\Big)\Bigg],\\
V^2_m(x) =& \displaystyle\sup_{\tau^2_m\in\mathcal{T}}\E_x\Bigg[\int_0^{\underline{\tau}_m}e^{-rt}\pi^2_m(X_t)dt+e^{-r\underline{\tau}_m}\mathds{1}_{\{\tau^{1,\ast}_m>\tau^2_m\}}
\Big(V^2_{m-1}(X_{\tau^2_m})-K^2_m(X_{\tau^2_m})\Big)\\
&\qquad\qquad\qquad\qquad\qquad\qquad+e^{-r\underline{\tau}_m}\mathds{1}_{\{\tau^{1,\ast}_m\leq\tau^2_m\}}\Big(V^2_{m+1}(X_{\tau^{1,\ast}_m})\Big)\Bigg],
\end{cases}
\end{align}
where $\tau^{1,\ast}_m,\tau^{2,\ast}_m$ are the stopping times associated to the thresholds $s^{1,\ast}_m,s^{2,\ast}_m$. Comparing to the generic problem in \eqref{eq:MS_problem} and subtracting $D^i_m(x)=\mathbb{E}_x\left[\int^\infty_0 e^{-rt}\pi^i_m(X_s)dt\right]$, we then wish to set
\begin{align} \label{eq:leader-follower}
\begin{cases}
h^1_m(x):=V^1_{m+1}(x)-D^1_m(x)-K^1_m(x),\\
l^1_m(x):=V^1_{m-1}(x)-D^1_m(x),\\
\end{cases}\qquad
\begin{cases}
h^2_m(x):=V^2_{m-1}(x)-D^2_m(x)-K^2_m(x),\\
l^2_m(x):=V^2_{m+1}(x)-D^2_m(x).
\end{cases}
\end{align}
Plugging above
into \eqref{eq:MS_Threshold} and \eqref{eq:MS_Coefficient} for all $m$ and combining, we obtain a coupled nonlinear system in $s^i_m, \omega^i_m, \nu^i_m$,
 whose solutions are expected to be a MNE of the switching game. We now propose a verification theorem which confirms that this is indeed the case. Our proof in \cref{AP_equi} follows the methods in \cite{aid2016nonzero} who considered  nonzero-sum games with impulse controls.
\begin{thm}[Verification Theorem]\label{thm:DA_equi}
Let $\Gamma^{1,\ast}_m:=[s^{1,\ast}_m, \overline{d}), \Gamma^{2,\ast}_m:=(\underline{d},s^{2,\ast}_m]$, $s^{1,\ast}_m>s^{2,\ast}_m$ and $\omega^1_m\geq0,  \omega^2_m\leq0, \nu^1_m\leq0, \nu^2_m\geq0$. Define
	\begin{subequations}\label{eq:DA_vFunc}
		\begin{align}
		V^1_m(x)&=\begin{cases}
		V^1_{m+1}(x)-K^1_m(x), & \text{ for } x\in\Gamma^{1,\ast}_m,\\
		V^1_{m-1}(x),& \text{ for } x\in\Gamma^{2,\ast}_m,\\
		D^1_m(x) + \omega^1_m F(x) + \nu^1_m G(x), & \text{ for } x\in\mathcal{D}\setminus\left(\Gamma^{1,\ast}_m\cup\Gamma^{2,\ast}_m\right),
		\end{cases}\label{eq:DA_vFunc1}\\
		V^2_m(x)&=\begin{cases}
		V^2_{m+1}(x), & \text{ for } x\in\Gamma^{1,\ast}_m,\\
		V^2_{m-1}(x)-K^2_m(x),& \text{ for } x\in\Gamma^{2,\ast}_m,\\
		D^2_m(x) + \omega^2_m F(x) + \nu^2_m G(x), & \text{ for } x\in\mathcal{D}\setminus\left(\Gamma^{1,\ast}_m\cup\Gamma^{2,\ast}_m\right).
		\end{cases}
		\end{align}
	\end{subequations}
Assume that (cf.~\cref{asp:MS_asp})
	\begin{itemize}
		\item [--] $D^1_{m+1}-D^1_m-K^1_m\in\mathcal{H}_{\text{inc}}$ \textit{for }$m<\overline{m}$, \textit{and} $D^2_{m-1}-D^2_m-K^2_m\in\mathcal{H}_{\text{dec}}$ \textit{for} $m>\underline{m}$;
		\item [--] $V^1_{m-1}(s^{2,\ast})\geq V^1_{m+1}(s^{2,\ast}_m)-K^1_m(s^{2,\ast}_m)$, \textit{for} $m>\underline{m}$, \textit{and} $V^2_{m+1}(s^{1,\ast})\geq V^2_{m-1}(s^{1,\ast}_m)-K^2_m(s^{1,\ast}_m)$,\textit{ for }$m<\overline{m}$;
		\item [--] \textit{thresholds} $s^{i,\ast}_m$ \textit{and coefficients} $\omega^i_m,\nu^i_m$, $i\in\{1,2\}$,  $m\in\mathcal{M}$ \textit{satisfy a system of non-linear equation stated in \eqref{eq:AP_System} - \eqref{eq:AP_sysbd}}.
	\end{itemize}
Then, $	\left(\bm{s}^{1, \ast},\,\bm{s}^{2,\ast}\right):=\,\left(\Gamma^{1,\ast}_m,\,\Gamma^{2,\ast}_m\right)_{ m\in\mathcal{M}}$ {\textit{ is a Markov Nash Equilibrium}}, and $V^i$'s in \eqref{eq:DA_vFunc} are the corresponding equilibrium payoffs.
\end{thm}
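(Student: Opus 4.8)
The plan is to use a standard verification argument: show that the candidate functions $V^i$ defined piecewise in \eqref{eq:DA_vFunc} coincide with the best-response values $\widetilde{V}^i_m(\cdot;\bm{s}^{j,\ast})$ when the rival plays $\bm{s}^{j,\ast}$, and that the threshold strategy $\bm{s}^{i,\ast}$ attains this supremum. Once both players' threshold strategies are simultaneously best responses, \eqref{eq:PF_ChNE} is satisfied and $(\bm{s}^{1,\ast},\bm{s}^{2,\ast})$ is an MNE by \cref{defn:ne}. First I would establish that each $V^i_m$ is well-defined by the piecewise recursion: on the switching regions $\Gamma^{j,\ast}$ the value is inherited from an adjacent regime, so one must check the recursion terminates — this is where the boundary conditions $s^{1,\ast}_{\overline m}=\overline d$, $s^{2,\ast}_{\underline m}=\underline d$ and the non-overlap condition $\Gamma^{1,\ast}_m\cap\Gamma^{2,\ast}_{m+1}=\emptyset$ (guaranteed by $s^{1,\ast}_m>s^{2,\ast}_m$ together with the ordering in the nonlinear system) are used to rule out infinite chains of instantaneous switches. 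Then, using the nonlinear system \eqref{eq:AP_System}--\eqref{eq:AP_sysbd}, I would verify the $C^1$-pasting of $V^i_m$ across the thresholds $s^{i,\ast}_m$, so that $V^i_m\in C^1(\mathcal D)$ and is $C^2$ off the finitely many thresholds.

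Next comes the core analytic step. Reparametrize via \eqref{eq:leader-follower}: set $h^i_m, l^i_m$ as there, so that $V^i_m - D^i_m$ should equal the value $\widetilde v^i$ of the building-block problem \eqref{eq:MS_problem} with rival threshold $s^{j,\ast}_m$. The hypotheses of \cref{pro:MS_OSP} must be checked: \cref{asp:MS_asp}(i) holds by construction; (ii) is the assumed regularity $D^1_{m+1}-D^1_m-K^1_m\in\mathcal H_{\text{inc}}$ and its analogue — but one needs this to propagate to $h^i_m=V^i_{m\pm1}-D^i_m-K^i_m$, which requires showing $V^i_{m\pm1}$ inherits the needed monotonicity/curvature, done inductively outward from the boundary stages; (iii), the no-preemption inequality $h^i_m(s^{j,\ast}_m)<l^i_m(s^{j,\ast}_m)$, is exactly the second bulleted assumption of the theorem ($V^1_{m-1}(s^{2,\ast}_m)\ge V^1_{m+1}(s^{2,\ast}_m)-K^1_m(s^{2,\ast}_m)$ rearranges to $l^1_m(s^{2,\ast}_m)\ge h^1_m(s^{2,\ast}_m)$). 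Granting these, \cref{pro:MS_OSP} gives that the candidate $V^i_m$ with coefficients $\omega^i_m,\nu^i_m$ solving \eqref{eq:MS_Threshold}--\eqref{eq:MS_Coefficient} is the optimal-stopping value and $s^{i,\ast}_m$ is optimal.

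Then I would convert this pointwise, regime-by-regime optimality into optimality against an arbitrary admissible $\bm\alpha^i$ (not necessarily threshold-type) via a martingale/Itô verification: for such a strategy, apply Itô's formula to $e^{-rt}V^i_{M_t}(X_t)$ between consecutive switch times $\sigma_n$, use $(\mathcal L-r)(D^i_m+\omega^i_m F+\nu^i_m G)=-\pi^i_m$ on the continuation region (since $(\mathcal L-r)F=(\mathcal L-r)G=0$ and $(\mathcal L-r)D^i_m=-\pi^i_m$), the supermartingale-type inequality $V^i_m\ge V^i_{m+1}-K^i_m$ on $\Gamma^{1,\ast}_m$ etc. coming from the variational structure, and the jump conditions at switches. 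Summing over $n$, taking expectations (the local martingale terms are true martingales by the linear-growth bound $V^i_m\le C(1+|x|)$ and regularity of $X$), and letting $n\to\infty$ using $\sigma_n\to\infty$ and discounting, yields $J^i_m(x;\alpha^i,\bm\alpha^{j,\ast})\le V^i_m(x)$ with equality for $\alpha^i=\bm s^{i,\ast}$. The main obstacle I anticipate is the inductive propagation of the regularity classes $\mathcal H_{\text{inc}},\mathcal H_{\text{dec}}$ through the coupled recursion: because $V^i_{m+1}$ enters $h^i_m$, one cannot invoke \cref{pro:MS_OSP} at regime $m$ until the neighboring regimes are controlled, so the argument must be organized as a careful induction (likely outward from $\underline m,\overline m$, or via a monotone iteration on the whole vector), and verifying that the smooth-fit/curvature properties survive each step is the delicate part; the terminal limit $\sigma_n\to\infty$ and the associated uniform integrability are comparatively routine given the standing linear-growth and natural-boundary assumptions.
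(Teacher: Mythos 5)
Your overall skeleton is the same as the paper's: define the candidate $V^i$ piecewise, establish its variational structure, then run an It\^o argument on $e^{-rt}V^1_{\cdot}(X_t)$ between consecutive switching times for an \emph{arbitrary} admissible $\ba^1$ played against $\bm{s}^{2,\ast}$, pass to the limit using the linear-growth bound and $\ts_n\to\infty$, and note that all inequalities become equalities when $\ba^1=\bm{s}^{1,\ast}$. That part of your plan is sound and is exactly how the paper concludes. The gap is in your middle step. You propose to invoke \cref{pro:MS_OSP} regime by regime, which forces you to verify \cref{asp:MS_asp}(ii) for the \emph{value-function} obstacles $h^1_m=V^1_{m+1}-D^1_m-K^1_m$ (and symmetrically for P2), and you concede that this ``propagation'' of the classes $\mathcal{H}_{\text{inc}},\mathcal{H}_{\text{dec}}$ through the coupled recursion is the delicate part, offering only an unspecified ``induction outward from the boundary stages''. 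As written this is a genuine hole: the $V^i_{m\pm1}$ are merely piecewise-smooth functions assembled from $D$'s, $F$, $G$ and switching costs, nothing in the theorem's hypotheses places $h^i_m$ in $\mathcal{H}_{\text{inc}}$ or $\mathcal{H}_{\text{dec}}$, and such a membership is generally false, so a proof routed through re-applying \cref{pro:MS_OSP} at each regime would stall precisely where you say it might.

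The paper's proof shows this detour is unnecessary. Because the nonlinear system \eqref{eq:AP_System}--\eqref{eq:AP_sysbd} and the sign restrictions on $\omega^i_m,\nu^i_m$ are \emph{hypotheses}, one does not re-derive the local stopping solutions; one only checks the variational inequalities \eqref{eq:VI1}--\eqref{eq:VI2} directly from the piecewise definition \eqref{eq:DA_vFunc}. The obstacle ordering $V^1_{m+1}-K^1_m-V^1_m\le 0$ follows from the tangency/pasting equations of the assumed system together with the second bullet assumption (your no-preemption inequality, used at $s^{2,\ast}_m$), and the generator inequality on the switching region requires no curvature information about $V$ at all: for $x\in\Gamma^{1,\ast}_m\setminus\Gamma^{1,\ast}_{m+1}$ one has $V^1_m=D^1_{m+1}+\omega^1_{m+1}F+\nu^1_{m+1}G-K^1_m$, and since $(\mathcal{L}-r)F=(\mathcal{L}-r)G=0$ the only fact needed is $(\mathcal{L}-r)\bigl(D^1_{m+1}-D^1_m-K^1_m\bigr)<0$ there, which is exactly the first bullet assumption on the \emph{cash-flow} differences (with a short further induction over the nested regions $\Gamma^{1,\ast}_{m+1}\setminus\Gamma^{1,\ast}_{m+2}$ to cover simultaneous switches). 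So rather than solving your ``main obstacle'', you should eliminate it: verify the variational inequalities directly from the assumed system and the two bullets, and then feed them into your (correct) It\^o/limiting argument.
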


We slightly abuse the notation in \eqref{eq:DA_vFunc} as $V^1_{\overline{m}+1}$ and $V^2_{\underline{m}+1}$ do not exist. However, since in fact $s^{1,\ast}_{\overline{m}}=\overline{d}$ and $s^{2,\ast}_{\underline{m}}=\underline{d}$, so that $\Gamma^{1,\ast}_{\overline{m}}=\Gamma^{2,\ast}_{\underline{m}}=\emptyset$, the respective equations in \eqref{eq:AP_System} - \eqref{eq:AP_sysbd} are indeed well-defined.

The proof of \cref{thm:DA_equi}
can be repeated to  obtain an analogous verification theorem for the system of equations corresponding to the best-response value function $\widetilde{V}^i_m(x\,; \bm{s}^j)$ as defined in \eqref{eq:ES_BestR} for any threshold-type rival strategy $\bm{s}^j$:

\begin{cor}\label{cor:BestRVT}
Let $\bm{s}^2$ be the fixed switching thresholds of P2 and $\widetilde{V}^1_{\cdot}(\cdot;\bm{s}^2)$ be constructed as in \eqref{eq:DA_vFunc1}. Suppose that
		\begin{itemize}
		\item [---] $D^1_{m+1}-D^1_m-K^1_m\in\mathcal{H}_{\text{inc}}$ \textit{for }$m<\overline{m}$;
		\item [---] $\widetilde{V}^1_{m-1}(s^2_m;\bm{s}^2)\geq \widetilde{V}^1_{m+1}(s^2_m;\bm{s}^2)-K^1_m(s^{2}_m)$, \textit{for} $m>\underline{m}$;
		\item [---] $(\bm{\tilde{s}}^1, \bm{s}^2, \bm{\tilde{\omega}}^1, \bm{\tilde{\nu}}^1)$ is  a solution to \eqref{eq:AP_System1} \& \eqref{eq:AP_sysbd1}.
	\end{itemize}
Then $\tilde{\bm{s}}^1\equiv\tilde{\bm{s}}^1(\bm{s}^2)$ are the best-response thresholds, and $\widetilde{V}^1_m(x)$ are the corresponding best-response value function of P1. 
\end{cor}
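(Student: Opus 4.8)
The plan is to reuse the verification argument behind \cref{thm:DA_equi} almost verbatim, simply discarding every ingredient that concerned Player~2's optimality: since P2 is now committed to the fixed threshold profile $\bm{s}^2$, only a one-sided statement is needed, namely that the candidate function $\widetilde{V}^1_\cdot(\cdot;\bm{s}^2)$ assembled from \eqref{eq:DA_vFunc1} dominates $J^1_m(x;\ba^1,\bm{s}^2)$ for every admissible $\ba^1$ and that this upper bound is attained by the threshold strategy $\tilde{\bm{s}}^1$. Thus the target is $\widetilde{V}^1_m(x)=\sup_{\{\ba^1:(\ba^1,\bm{s}^2)\in\mathcal{A}\}}J^1_m(x;\ba^1,\bm{s}^2)=\widetilde{V}^1_m(x;\bm{s}^2)$, which bypasses the DPP \eqref{eq:ES_BestR} entirely.

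First I would verify that the local building block \cref{pro:MS_OSP} applies regime by regime. With the identification $h^1_m:=\widetilde{V}^1_{m+1}-D^1_m-K^1_m$ and $l^1_m:=\widetilde{V}^1_{m-1}-D^1_m$ as in \eqref{eq:leader-follower}, the three parts of \cref{asp:MS_asp} are checked as follows: part~(i) holds because $\tau^2_m$ is of threshold-type by hypothesis; part~(iii), no incentive to preempt at $s^2_m$, is exactly the second bullet $\widetilde{V}^1_{m-1}(s^2_m;\bm{s}^2)\geq\widetilde{V}^1_{m+1}(s^2_m;\bm{s}^2)-K^1_m(s^2_m)$; and part~(ii), $h^1_m\in\mathcal{H}_{\text{inc}}$, follows inductively from the first bullet $D^1_{m+1}-D^1_m-K^1_m\in\mathcal{H}_{\text{inc}}$ together with the fact that on its continuation set $\widetilde{V}^1_{m+1}-D^1_{m+1}$ equals a linear combination $\tilde\omega^1_{m+1}F+\tilde\nu^1_{m+1}G$, so that membership in $\mathcal{H}_{\text{inc}}$ is stable under the recursion in $m$ (this uses the closure properties of $\mathcal{H}_{\text{inc}},\mathcal{H}_{\text{dec}}$ recorded in \cref{AP_Methd}). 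Granting these, \cref{pro:MS_OSP} supplies at each $m$ the unique threshold $\tilde s^1_m(s^2_m)$ solving \eqref{eq:MS_Threshold} and the coefficients \eqref{eq:MS_Coefficient}; stitching these relations over all $m$ is precisely the nonlinear system \eqref{eq:AP_System1} \& \eqref{eq:AP_sysbd1}, a solution of which is assumed. The outcome is a function $\widetilde{V}^1_m$ that is $C^1$ across $\tilde s^1_m$, continuous at $s^2_m$, solves $(\mathcal{L}-r)\widetilde{V}^1_m+\pi^1_m=0$ on $\mathcal{D}\setminus(\Gamma^{1,\ast}_m\cup\Gamma^{2,\ast}_m)$, and satisfies the obstacle inequalities $\widetilde{V}^1_m\geq\widetilde{V}^1_{m+1}-K^1_m$ and $\widetilde{V}^1_m\geq\widetilde{V}^1_{m-1}$ on all of $\mathcal{D}$ — i.e.\ a supersolution of the coupled obstacle system \eqref{eq:ES_BestR}.

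The verification estimate then proceeds by Dynkin's formula. Fix any admissible $\ba^1$ with $(\ba^1,\bm{s}^2)\in\mathcal{A}$, with induced switch times $\sigma_n$ and regimes $\tM_n$; apply Itô to $t\mapsto e^{-rt}\widetilde{V}^1_{\tM_{\tn(t)}}(X_t)$ on each interval $[\sigma_n,\sigma_{n+1})$, sum, and take expectations. The drift contributes $\E\int_0^T e^{-rt}[(\mathcal{L}-r)\widetilde{V}^1](X_t)\,dt\leq -\E\int_0^T e^{-rt}\pi^1(X_t,M_t)\,dt$ by the supersolution property, while the jump of $\widetilde{V}^1$ at each $\sigma_n$ is at most $-\mathds{1}_{\{P_n=1\}}K^1_{\tM_{n-1}}(X_{\sigma_n})$ by the obstacle inequalities (when P2 acts the value passes to $\widetilde{V}^1_{m-1}$, a piece of the candidate; when P1 acts she pays $K^1$). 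Using the linear-growth bound $\widetilde{V}^1_m\leq D^1_{\overline m}$, the admissibility constraint $\lim_n\sigma_n=+\infty$, and regularity of $X$, the transversality term $\E[e^{-rT}\widetilde{V}^1_{M_T}(X_T)]\to 0$ as $T\to\infty$; rearranging gives $\widetilde{V}^1_m(x)\geq J^1_m(x;\ba^1,\bm{s}^2)$. For the reverse inequality, take $\ba^1=\tilde{\bm{s}}^1$: one checks $(\tilde{\bm{s}}^1,\bm{s}^2)\in\mathcal{A}$ — the inequality $\tilde s^1_m>s^2_m$ from \cref{pro:MS_OSP} rules out simultaneous-switch loops, and $\lim_n\sigma_n=+\infty$ holds because each switch costs at least a positive amount on compacts while $X$ needs positive time to traverse between consecutive thresholds — and along this strategy every inequality above turns into an equality: the smooth-fit condition at $\tilde s^1_m$ makes the stopped discounted process a genuine martingale and value matching holds at each switch time. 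Hence $\widetilde{V}^1_m(x)=J^1_m(x;\tilde{\bm{s}}^1,\bm{s}^2)\leq\widetilde{V}^1_m(x;\bm{s}^2)$, and combining the two bounds yields $\widetilde{V}^1_m(x;\bm{s}^2)=\widetilde{V}^1_m(x)$ with $\tilde{\bm{s}}^1$ the best response, as claimed.

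The step I expect to be the main obstacle is the global regularity/supersolution claim in the second paragraph: propagating $\mathcal{H}_{\text{inc}}$-membership of the leader payoffs $h^1_m$ through the recursion in $m$, and verifying that the locally constructed $\widetilde{V}^1_m$ actually satisfies the obstacle inequalities $\widetilde{V}^1_m\geq\widetilde{V}^1_{m+1}-K^1_m$ and $\widetilde{V}^1_m\geq\widetilde{V}^1_{m-1}$ on the whole of $\mathcal{D}$ rather than just in a neighbourhood of the thresholds. A secondary, more routine difficulty is the transversality and non-clustering bookkeeping needed to extend the Dynkin estimate to the infinite horizon, which leans on the linear-growth hypothesis on the $D^i$ and on the admissibility requirement $\lim_n\sigma_n=+\infty$; it is standard but must be handled carefully since infinitely many switches are a priori permitted.
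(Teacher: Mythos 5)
Your proposal is correct and follows essentially the same route as the paper: the paper itself proves this corollary by noting that the proof of \cref{thm:DA_equi} can be repeated one-sidedly (drop Player 2's optimality, keep the variational-inequality/supersolution check built from \cref{pro:MS_OSP} and the stated assumptions, then the It\^o/Dynkin induction over the switch times with the linear-growth and $\lim_n\sigma_n=+\infty$ transversality argument, and finally equality along the candidate threshold strategy). The step you flag as the main obstacle (propagating the $\mathcal{H}_{\text{inc}}$ structure and the global obstacle inequalities through the recursion in $m$) is handled in the paper at the same level of detail, via the smallest-concave-majorant characterization together with the assumed no-preemption inequality and the sign restrictions on $\tilde{\omega}^1_m,\tilde{\nu}^1_m$.
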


\cref{thm:DA_equi} provides a direct approach to find a MNE of the switching game via solving the system of equations for the threshold vectors $\bm{s}^i$ and equilibrium payoffs defined through $\bm{\omega}^i$ and $\bm{\nu}^i$. Unfortunately, because this is a large system of equations (namely there are $6|\mathcal{M}-1|$ equations in total), the latter is non-trivial even numerically. In particular, most standard root-finding algorithms require a reasonable initial guess. In our experience, providing such a guess is not easy, so that the high-dimensional optimization algorithm frequently does not converge. Thus, in Section \ref{sec:SAM} we propose two approaches to obtain threshold vectors and game payoffs \emph{close} to those in equilibrium.

\begin{remark}\label{rem:general-switching}
	In a more general setting, players are allowed to act on $M_t$ in multiple ways. Thus, to each regime $m$ there is an associated \emph{action set} $C^i_m \subseteq \mathcal{M}$ that determines the potential new regimes that player $i$ can switch $M$ into. This generalizes the presentation above where $C^1_m = \{ m+1\}$ and $C^2_m = \{ m -1 \}$ are singletons. Analogously, one may consider other transition diagrams (e.g.~Player~1 acts by directly ``resetting'' to baseline regime $\overline{m}$, $C^1_m = \{ \overline{m} \}$ for all $m$). When $C^i_m$ has multiple elements, the corresponding player must choose \emph{how to switch}, not just \emph{when}. In the latter case we need to specify the respective switching costs, i.e.~to consider $K^i(m,m')$ which defines the cost of switching from $m$ to $m'$.
Such an extension can be handled by replacing the leader payoff in \eqref{eq:leader-follower} with $h^1_m(x) = \max_{ m' \in C^1_m } [V^1_{m'}(x) -D^1_m(x)- K^1(x,m,m') ]$ and the follower payoff with $\ell^1_m(x) = V^1_{m'}(x) - D^1_m(x),$ where $m' = \arg\max \{ m'' \in C^2_m : V^2_{m'}(s^2_m)-D^2_m - K^2(s^2_m,m,m')\}$.  The above $\max$-terms resemble the intervention operators in impulse control.
\end{remark}

\subsection{Equilibrium Macro Dynamics}\label{sec:SMG_summary}

The macro market evolution $M^\ast$ emerging in equilibrium is a time inhomogeneous \textit{non-Markovian} process with discrete state space $\mathcal{M}$. Thanks to the stationary nature of the threshold-type strategies, the behavior of $M^\ast$ is highly tractable and is the subject of this subsection.

Recall that in \eqref{eq:STS_constructM1} we define the sequence of regimes $M^\ast$ traverses, i.e. $\tM^\ast_n\equiv M^\ast_{\sigma_n}$. According to \eqref{eq:STS_constructM1}, $\tM^\ast_n$ has memory: the next transition of $\tM^\ast_n$ is affected by the last transition. For example, if $\tM^\ast_n = +1$ and the previous regime  was $\tM^\ast_{n-1} = +2$, this implies that the latest switch was due to Player 2, and hence we begin the sojourn in regime $+1$ at location $s^{2,\ast}_{+2}$,  i.e. $\tilde{X}^{(n)}_0 = X^x_{\sigma_n}=s^{2,\ast}_{+2}$, while if the previous state was $\tM^\ast_{n-1} = 0$ then it was Player 1 who switched last and we begin the sojourn at $s^{1,\ast}_0$, i.e. $\tilde{X}^{(n)}_0 = X^x_{\sigma_n}=s^{1,\ast}_{0}$.

To capture this 1-step memory we define the extended state space
\begin{align}\label{eq:state-E}
E := \{ \underline{m}^-, (\underline{m}+1)^-, (\underline{m}+1)^+, \cdots, m^-, m^+, \cdots, (\overline{m}-1)^-, (\overline{m}-1)^+, \overline{m}^+ \} \cup \{ \underline{m}^a,\,\overline{m}^a \},
\end{align}
where the superscript ``$+$'' corresponds to the previous transition being made by Player 1 (``up move in M'') and ``$-$'' corresponds to Player 2 making a ``down move in $M$''. We discuss the last two states $\underline{m}^a,\,\overline{m}^a$ below.

Instead of $M^{\ast}_t$ we now define its extended jump chain $\cM_n$ that takes values in $E$ and represents $(\tM^\ast_{n-1}, \tM^\ast_n)$.  Note that $\cM_0$ is undefined, as we need to know the previous transition to know the state of $\cM$.
Let us use \cref{fig:MS_Sample} to explain how $\cM$ behaves. The macro market starts at $X_0=0$ and $M^\ast_0=0$, while $\cM^\ast$ starts when $(X_t)$ hits $s^{2,\ast}_0$ with $\cM^\ast_1=(-1)^-$. The first sojourn begins at $s^{2,\ast}_0$ and ends when $(X_t)$ hits $s^{2,\ast}_{-1}$, leading us to $\cM^\ast_2=(-2)^-$, and so forth.

We proceed to compute the qualitative behavior of $M^\ast$ via $\cM_n$. In the case that $X$ is recurrent, the nature of threshold strategies implies that $M^\ast$ will also have recurrent dynamics. To quantify the dynamic macro equilibrium we then compute the long-run distribution of $M^\ast$ on $\mathcal{M}$. The latter is summarized via the transition probabilities of $\cM^\ast_n$ and the sojourn times $\xi_m$ of $\cM^\ast_n$.

In the case when $X$ is transient, $M^\ast$ should be transient too. Specifically, we should encounter the situation that $\tilde{\tau}^{1,n} \wedge \tilde{\tau}^{2,n} = +\infty$ (see \eqref{eq:STS_constructsig}), so that no more switches take place and $M^\ast$ remains constant forever or ``absorbed''. Under the assumption that $X$ is continuous and regular, this phenomenon can only occur at the boundary states of $\mathcal{M}$, whereby one player is a priori restricted from switching. This yields a one-sided switching region and hence the possibility of a scenario that $M^\ast_t\equiv \overline{m}$ (or $\underline{m}$), for all $t$ conditional on $M^\ast_0 = \overline{m}$, i.e.~that  $X$ never hits $s^2_{\overline{m}}$ starting at $s^1_{\overline{m}-1}$ (or $s^1_{\underline{m}}$ starting at $s^2_{\underline{m}+1}$). Note that given $X_t=x, M^\ast_t=\overline{m}$ (recall that the pair ($X_t, M^\ast_t$) is Markovian) one can not determine whether $M^\ast$ is absorbed or not. This is handled via \textit{taboo probabilities} \cite[Ch.~Taboo Probabilities]{chung1967markov} which are taken into account by adding the two ``absorbing'' states $\{\underline{m}^a,\,\overline{m}^a\}$ to $E$. Probabilistically, when switching up from $(\overline{m}-1)^\pm$, potential absorption can be captured by nature tossing a coin to decide whether the new state of $\cM$ is $\overline{m}^+$ or $\overline{m}^a$.

Returning to the case of recurrent $X$, let $\vec{\Pi}$ denote the invariant distribution of $\cM^\ast$, solved from $\vec{\Pi}\bm{P}=\vec{\Pi}$, where $\bm{P}$ is the transition probability matrix of $\cM^\ast$. Furthermore, let $\vec{\xi}$ be the vector of expected sojourn times at each state of $\cM$, defined as
\begin{align}\label{eq:SMS_xi}
\xi_{m^-}:=\mathbb{E}\big[\tilde{\tau}^{1,n}\wedge\tilde{\tau}^{2,n}\,|\,\cM^\ast_n=m^-\big],\qquad \xi_{m^+}:=\mathbb{E}\big[\tilde{\tau}^{1,n}\wedge\tilde{\tau}^{2,n}\,|\,\cM^\ast_n=m^+\big],
\end{align}
where the threshold hitting times $\tilde{\tau}^{i,n}$ are defined in \eqref{eq:STS_constructTau}. It follows that the long-run proportion of time that $M^\ast$ spends at regime $m$ (recall that $M^\ast_t=m$ is captured by $\cM^\ast_{\tn(t)} = m^\pm$) is given by:
\begin{align}\label{eq:SMG_longrun}
\rho_m =\frac{\Pi_{m^+}\xi_{m^+}+\Pi_{m^-}\xi_{m^-}}{\sum_{j \in \mathcal{M}} \{\Pi_{j^+} \xi_{j^+} + \Pi_{j^-} \xi_{j^-}  \} },\qquad\text{for all }m\in\mathcal{M}.
\end{align}

Now let us consider $X$ to be \textit{non-recurrent} so that one or both of the boundary regimes are absorbing, w.l.o.g  $\overline{m}^+$ for example.  In the long-run we then trivially have $\lim_{t\to \infty} M^\ast_t = \overline{m}$ and the quantities of interest in this situation are the expected number of controls exercised by player $i$ before $M^\ast$ gets absorbed, i.e.
\begin{align}\label{eq:SMS_numsw}
\mathbb{N}^i_m(x):=\lim_{T\rightarrow\infty}\mathbb{E}_x\bigg[\sum_k\mathds{1}_{\{\sigma^i_k\leq T\}}\,\big|\,M^\ast_0 = m\bigg],\quad i\in\{1,2\},
\end{align}
and the expected time until absorption,
\begin{align}\label{eq:SMS_abTime}
\mathbb{T}_m(x):= \mathbb{E}_x\Big[\min\big\{t\geq 0:\cM^\ast_{\tn(t)}\in\{\underline{m}^a, \overline{m}^a\}\big\}\,\big|\,M^\ast_0 = m\Big].
\end{align}
Analytic evaluation of these quantities is given in \cref{ap:DofM} which also provides expressions for the transition matrix $\bm{P}$ of $\cM^\ast$ and sojourn times $\vec{\xi}$. Computations specific to the OU \cref{eg:PF_MA} and the GBM \cref{eg:PF_LA} processes are also discussed.

\subsection{Stackelberg Switching}
We emphasize that the order of switches is never pre-determined and so the identity of the $n$-th switcher, $P_n$, is resolved endogenously based on game evolution and the realization of $(X_t)$. A variant of the switching game would be to pre-specify the identity of the player making the next switch, but not its timing, akin to a Stackelberg equilibrium where the leader and follower roles are fixed but timing strategy remains. The latter situation also arises organically if we restrict $M_t \in \{-1, +1\}$ which implies that players will alternate in their actions: $... \le \sigma^1_k \le \sigma^2_k \le \sigma^1_{k+1} \le \sigma^2_{k+1} \le ... $ Indeed, at any given stage only one firm can control $(M_t)$ so no consideration of simultaneous competition is needed (See \cite{boyer2012dynamic}).

It is instructive to consider a stationary threshold-type equilibrium in this setting, which reduces to characterizing the two thresholds $s^{1,\ast}_{-1}$ and $s^{2,\ast}_{+1}$. Furthermore, if their profit rates and switching costs depend on the local market environment $(X_t)$ symmetrically around 0 and $(X_t)$ is a process symmetric around 0 (like the OU process), we may search for a symmetric equilibrium with $s^{1,\ast}_{-1}=-s^{2,\ast}_{+1}=: \ms$ and $V^1_{.}(x)=V^2_{.}(-x)$ for any $x\in\mathcal{D}$. In turn this reduces finding the MNE to solving a \emph{single} nonlinear equation in $\ms$, providing some insight into the respective structure.

Examining \cref{thm:DA_equi} for Player~1, the system of equations is simplified to
\begin{align*}
V^1_{-1}(x)&=\begin{cases}
V^1_{+1}(x)-K^1_{-1}(x), & x\geq \ms,\\
D^1_{-1}(x)+\omega^1_{-1} F(x), & x<\ms,
\end{cases}\\
V^1_{+1}(x)&=\begin{cases}
D^1_{+1}(x)+\nu^1_{+1} G(x),&x>-\ms,\\
V^1_{-1}(x), &x\leq -\ms,
\end{cases}
\end{align*}
where $\ms,\omega^1_{-1},\nu^1_{+1}$ satisfy the following system (compare to \eqref{eq:AP_sysbd})
\begin{subequations}\label{eq:sys11}
	\begin{align}
	&\big(V^1_{+1}-D^1_{-1}-K^1_{-1}\big)(\ms)\cdot F'(\ms) -\big(V^1_{+1}-D^1_{-1}-K^1_{-1}\big)'(\ms)\cdot F(\ms) = 0,\label{eq:sys11a}\\
	&\big(V^1_{+1}-D^1_{-1}-K^1_{-1}\big)(\ms) - \omega^1_{-1} F(\ms) = 0,\label{eq:sys11b}\\
	&\big(V^1_{-1}-D^1_{+1}\big)(-\ms) - \nu^1_{+1} G(-\ms)=0.\label{eq:sys11c}
	\end{align}
\end{subequations}
Note that the last two equations specify $\omega^1_{-1},\nu^1_{+1}$ in terms of $V^1_{\pm 1}(\pm\ms)$.  One can now show that this system admits at least one solution.

\begin{cor}\label{cor:11case}
	Suppose that profit rates and switching costs are continuous and depend on the local market environment $(X_t)$ symmetrically about 0, and $(X_t)$ is a process symmetric around 0. Then there exists a threshold-type MNE for the switching game with $\mathcal{M}=\{-1, +1\}$.
\end{cor}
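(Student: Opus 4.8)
The plan is to exploit the symmetry hypotheses to collapse \cref{thm:DA_equi} to a single scalar equation in the common threshold $\ms$, and then to produce a root by an intermediate‑value (equivalently, one‑dimensional fixed‑point) argument. With $\mathcal{M}=\{-1,+1\}$, only Player~1 can act at regime $-1$ and only Player~2 at regime $+1$, so admissibility forces $\Gamma^{1,\ast}_{+1}=\Gamma^{2,\ast}_{-1}=\emptyset$, and under the symmetric ansatz $s^{1,\ast}_{-1}=-s^{2,\ast}_{+1}=:\ms$ the ordering requirement $s^{1,\ast}_{-1}>s^{2,\ast}_{+1}$ forces $\ms>0$. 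The symmetry of $X$ around $0$ is exactly what makes this ansatz self‑consistent: since $(\mathcal{L}-r)u=0$ is invariant under $x\mapsto-x$, one may take $G(\cdot)\propto F(-\cdot)$, so that the functions built in \eqref{eq:DA_vFunc1} at regime $-1$ (carrying the $\omega^1_{-1}F$ term) map under $x\mapsto-x$ onto the corresponding functions at regime $+1$ (carrying the $\nu^1_{+1}G$ term), while the symmetry of $\pi^i,K^i$ propagates through \eqref{eq:leader-follower}. Under this ansatz \cref{thm:DA_equi} reduces to \eqref{eq:sys11}.

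First I would parametrize everything by $\ms$. For fixed $\ms>0$, the value‑matching equations \eqref{eq:sys11b}--\eqref{eq:sys11c} form a $2\times2$ linear system for $(\omega^1_{-1},\nu^1_{+1})$ with determinant $-\mathcal{W}(\ms,-\ms)=-\big(F(\ms)G(-\ms)-F(-\ms)G(\ms)\big)$; regularity of $X$ together with $\ms>-\ms$ gives $\mathcal{W}(\ms,-\ms)>0$, so the system is nonsingular and determines $(\omega^1_{-1}(\ms),\nu^1_{+1}(\ms))$ uniquely, depending continuously on $\ms$ by continuity of $D^i_m,K^i_m,F,G$. Substituting into \eqref{eq:sys11a} yields a continuous function $\Phi$ on $(0,\overline{d})$ whose zeros are exactly the candidate symmetric threshold equilibria; note $\Phi(\ms)=0$ is nothing but the smooth‑fit optimality condition for Player~1's optimal stopping problem at regime $-1$ with reward $x\mapsto V^1_{+1}(x;\ms)-D^1_{-1}(x)-K^1_{-1}(x)$, where $V^1_{+1}(\cdot;\ms)=D^1_{+1}+\nu^1_{+1}(\ms)G$ near $\ms$.

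Next I would produce a root of $\Phi$. Because at each regime only one player can act, there is no simultaneous‑switching interaction, so the no‑preemption condition \cref{asp:MS_asp} \ref{asp:MS_noPreemp} is vacuous here and, for each $\ms$, Player~1's regime‑$(-1)$ problem is a plain single‑agent optimal stopping problem; assuming $D^1_{+1}-D^1_{-1}-K^1_{-1}\in\mathcal{H}_{\text{inc}}$ (the only non‑vacuous structural hypothesis of \cref{thm:DA_equi} in the two‑regime case, inherited from the continuity and monotonicity of the data, with its Player~2 mirror holding by symmetry), that problem has a well‑defined optimal threshold $T(\ms)$, and $\Phi(\ms)=0\iff T(\ms)=\ms$. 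It then remains to show that $T(\ms)-\ms$ changes sign on $(0,\overline{d})$. As $\ms\uparrow\overline{d}$, Player~2 is set to switch $M$ back down only at an essentially unreachable level, so once Player~1 switches up she keeps her dominant regime and the higher profit rate indefinitely; hence $V^1_{+1}(\cdot;\ms)$ stays uniformly above $V^1_{-1}(\cdot;\ms)$ by a positive margin and $T(\ms)$ remains bounded, so $T(\ms)<\ms$ for $\ms$ large. Conversely, as $\ms\downarrow0$, Player~2 switches down almost immediately after any up‑move of Player~1, the incremental value of switching up collapses, and $T(\ms)$ stays bounded away from the lower end, so $T(\ms)>\ms$ for $\ms$ small. (If Player~1's one‑shot gain from switching is never positive then $T(\ms)\equiv\overline{d}$ and the degenerate profile $\ms=\overline{d}$, i.e.\ nobody ever switches and $\Gamma^i_m\equiv\emptyset$, is itself a threshold‑type MNE, so we may assume this is not the case.) By continuity of $T$ and the intermediate value theorem there is $\ms^\ast\in(0,\overline{d})$ with $T(\ms^\ast)=\ms^\ast$, equivalently $\Phi(\ms^\ast)=0$.

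Finally I would verify via \cref{thm:DA_equi}. Setting $\Gamma^{1,\ast}_{-1}=[\ms^\ast,\overline{d})$, $\Gamma^{2,\ast}_{+1}=(\underline{d},-\ms^\ast]$, $\Gamma^{1,\ast}_{+1}=\Gamma^{2,\ast}_{-1}=\emptyset$, with $\omega^i_m,\nu^i_m$ read off from \eqref{eq:sys11b}--\eqref{eq:sys11c}: the ordering $s^{1,\ast}_{-1}=\ms^\ast>-\ms^\ast=s^{2,\ast}_{+1}$ holds; the sign conditions $\omega^1_{-1}\ge0$, $\nu^1_{+1}\le0$ (and their Player~2 mirrors) follow from $D^1_{-1}\le V^1_{-1}$, $V^1_{+1}\le D^1_{+1}$ and positivity of $F,G$; the two chains of inequality hypotheses of \cref{thm:DA_equi} are vacuous here (they quantify over the absent regimes and over boundary stages where the relevant player cannot act); $D^1_{+1}-D^1_{-1}-K^1_{-1}\in\mathcal{H}_{\text{inc}}$ holds by assumption; and the nonlinear system of \cref{thm:DA_equi} reduces to \eqref{eq:sys11}, which $\ms^\ast$ solves by construction. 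Hence $(\bm{s}^{1,\ast},\bm{s}^{2,\ast})$ is a threshold‑type MNE with equilibrium payoffs \eqref{eq:DA_vFunc}. I expect the main obstacle to be the boundary analysis of the previous step: that $T(\ms)$ stays bounded as $\ms\uparrow\overline{d}$ and bounded away from the lower end as $\ms\downarrow0$ is not a black‑box consequence of known optimal‑stopping results, since the reward surface $V^1_{+1}(\cdot;\ms)-D^1_{-1}-K^1_{-1}$ itself moves with $\ms$ through $\nu^1_{+1}(\ms)$; one must track the monotone dependence of $\nu^1_{+1}(\ms)$, hence of $V^1_{+1}(\cdot;\ms)$, on $\ms$ (e.g.\ via the smallest‑concave‑majorant representation of the value) and handle the borderline ``never‑switch'' configuration separately. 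The rest of the verification is routine.
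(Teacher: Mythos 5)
Your reduction matches the paper's own proof in \cref{AP:prof11case}: under the symmetric ansatz you eliminate $(\omega^1_{-1},\nu^1_{+1})$ from the value-matching equations \eqref{eq:sys11b}--\eqref{eq:sys11c} (a nonsingular linear system since $\mathcal{W}(\ms,-\ms)>0$ for $\ms>0$), substitute into the smooth-fit condition \eqref{eq:sys11a}, and seek a root of the resulting continuous scalar function on $(0,\overline{d})$ by the intermediate value theorem. Up to that point the proposal is correct and is essentially identical to the paper: your $\Phi$ is the paper's $\mathcal{Z}$ in \eqref{eq:11_alt}, built from the functions $Q(s)=V^1_{+1}(\ms)\big|_{\ms=s}$ and $q(s)=(V^1_{+1})'(\ms)\big|_{\ms=s}$. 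Your observations that \cref{asp:MS_asp}~\ref{asp:MS_noPreemp} and the preemption inequalities of \cref{thm:DA_equi} are vacuous in the two-regime alternating setting, and that $\Delta D:=D^1_{+1}-D^1_{-1}-K^1_{-1}\in\mathcal{H}_{\text{inc}}$ is the one structural hypothesis that must be carried along, are also consistent with the paper.

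The gap is precisely where you flag it: the endpoint sign analysis, which is the mathematical substance of the proof, is replaced by economic heuristics. You recast $\Phi(\ms)=0$ as a fixed point $T(\ms)=\ms$ of a best-response threshold map and assert that $T(\ms)$ remains bounded as $\ms\uparrow\overline{d}$ and bounded away from the lower end as $\ms\downarrow0$; neither claim is established, and both require controlling how the reward $D^1_{+1}-D^1_{-1}-K^1_{-1}+\nu^1_{+1}(\ms)G$ deforms with $\ms$. The paper closes this without any detour through $T$ and without the monotonicity of $\nu^1_{+1}(\ms)$ that you anticipate needing: the two signs are read directly off the explicit formulas. As $s\downarrow0$ the elimination degenerates --- the denominator $1-\frac{G(s)}{G(-s)}\frac{F(-s)}{F(s)}$ of $Q$ tends to $0$ while its numerator tends to $-K^1_{-1}(0)<0$ --- so $Q(s)\to-\infty$, hence $q(s)\to+\infty$ (because $G'<0$), hence $\mathcal{Z}(0^+)=-\infty$. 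As $s\uparrow\overline{d}$ one has $Q(s)\approx D^1_{+1}(s)$ and $q(s)\approx (D^1_{+1})'(s)$, whence
\begin{align*}
\mathcal{Z}(s)\approx\Big[-\Big(\frac{\Delta D}{F}\Big)'(s)\Big]\cdot F^2(s)>0,
\end{align*}
using $\Delta D\in\mathcal{H}_{\text{inc}}$ together with $\limsup_{s\uparrow\bar d}|\Delta D/F|=0$. If you carry out these two limits your argument is complete; as written, the existence of the sign change --- and hence of the equilibrium --- is asserted rather than proved.
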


\begin{proof}
	See \cref{AP:prof11case}.
\end{proof}

\section{Sequential Approach to MNEs}\label{sec:SAM}
To approximate the system of nonlinear equations \eqref{eq:AP_System} - \eqref{eq:AP_sysbd} proposed in \cref{thm:DA_equi} we provide two sequential approaches. The first approach is through best-response iterations among threshold-type strategies, while the other inducts on equilibrium in finite-switch strategies.  The resulting threshold vectors $\bm{s}^i$  can be used as initial guesses in a root-finding algorithm.

\subsection{Constructing MNE by Best-response Iteration}\label{sec:BRI}
Given the rival's strategy, determining the best-response of one player is similar to a single-agent optimal  switching problem, which has been studied in \cite{bayraktar2010one,carmona2008optimal}. Let us assume that Player 2 implements a threshold-type strategy $\bm{s}^2$ as in  \cref{def:thstrat}. The best-response of Player 1 is then expected to be characterized through \eqref{eq:ES_BestR}, which is a system of \textit{coupled} optimal stopping problems.

We then decouple this system, in particular to apply
\cref{pro:MS_OSP} that provides the best-response threshold and game payoff of Player 1 once the leader/follower payoffs are fully specified. To do so, we consider auxiliary problems where the number of actions/switches available to Player 1 is bounded. Namely, Player 1 is constrained to ever use at most $N^1(\geq 1)$ controls. Her corresponding set of strategies is defined as
\begin{align}\label{eq:BST_finitestrat}
\mathcal{A}^{1, (N^1)}:= \left\{(\ba^1, \bm{s}^2)\in\mathcal{A}: \tau^1(n)=+\infty, n>\tn(1, N^1) \right\},
\end{align}
where $\tau^1(n)$ is the stopping rule Player 1 adopts at the $n$-th ``round'' of the game and $\tn(1,N^1)$ defined in \eqref{eq:PF_constructIk} denotes the round at which Player 1 exercises her $N^1$-th switch. Note that now the stopping sets are allowed to explicitly depend on the remaining number of controls left (equivalent to number of switches already used plus an initial constraint).  The best-response of Player 1 with $N^1$ controls is then admitted as
\begin{align}\label{eq:BTS_BestRn}
\widetilde{V}^{1,(N^1)}_m(x\,;\bm{s}^2):= \sup_{\ba^{1,(N^1)}\in\mathcal{A}^{1,(N^1)}}J^1_{m}(x; \ba^{1,(N^1)},\bm{s}^2),\qquad\forall x\in\mathcal{D},
\end{align}
for all $m\in\mathcal{M}$. When Player 1 has zero controls $N^1=0$, her payoff at any stage $m$ is fully determined by $\bm{s}^2$, for instance at regime $\underline{m}+1$
\begin{align}\label{eq:BRI_zeroctr}
\widetilde{V}^{1, (0)}_{\underline{m}+1}(x\,;\bm{s}^2)=\mathbb{E}_x\left[\int^{\tau^2_{\underline{m}+1}}_0e^{-rt}\pi^1_{\underline{m}+1}\left(X_t\right)dt\right]
+\mathbb{E}_x\left[e^{-r\tau^2_{\underline{m}+1}}\right]\cdot D^1_{\underline{m}}(s^2_{\underline{m}+1}),
\end{align}
where the last term is the NPV of fixed-market-state cashflows defined in \eqref{eq:MS_dcf}.

\begin{pro}\label{pro:BTS_conv}
	Given a threshold-type strategy $\bm{s}^j$ of player $j$,  the best-response game payoffs of player $i$ with finite controls  converge as $N^i \to \infty$, i.e.~$\forall x\in\mathcal{D}$,
	$$\widetilde{V}^{i, (N^i)}_m(x\,;\bm{s}^j)\nearrow\widetilde{V}^i_m(x\,;\bm{s}^j),\quad\text{for all } m\in\mathcal{M}\qquad \text{ as }N^i\nearrow\infty.$$
\end{pro}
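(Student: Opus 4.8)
The plan is to establish the monotone convergence $\widetilde{V}^{i,(N^i)}_m \nearrow \widetilde{V}^i_m$ in two steps: first showing the sequence is nondecreasing and bounded above by $\widetilde{V}^i_m$, so that the limit exists and does not exceed $\widetilde{V}^i_m$; then establishing the reverse inequality by approximating an arbitrary admissible strategy of player $i$ by finite-control truncations. I take player $i=1$ and a fixed threshold-type rival strategy $\bm{s}^2$ throughout; the argument for $i=2$ is symmetric.

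First I would observe that $\mathcal{A}^{1,(N^1)}(\bm{s}^2) \subseteq \mathcal{A}^{1,(N^1+1)}(\bm{s}^2) \subseteq \{\ba^1 : (\ba^1,\bm{s}^2) \in \mathcal{A}\}$: any strategy that uses at most $N^1$ controls is in particular one that uses at most $N^1+1$, and is a fortiori admissible against $\bm{s}^2$. Taking suprema of the common functional $J^1_m(x;\cdot,\bm{s}^2)$ over the nested sets gives both the monotonicity $\widetilde{V}^{1,(N^1)}_m(x;\bm{s}^2) \le \widetilde{V}^{1,(N^1+1)}_m(x;\bm{s}^2)$ and the bound $\widetilde{V}^{1,(N^1)}_m(x;\bm{s}^2) \le \widetilde{V}^1_m(x;\bm{s}^2)$ for every $m,x,N^1$. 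Hence $\widetilde{V}^{1,(N^1)}_m(x;\bm{s}^2)$ increases in $N^1$ to a limit $\widetilde{V}^{1,(\infty)}_m(x;\bm{s}^2) \le \widetilde{V}^1_m(x;\bm{s}^2)$.

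For the reverse inequality, fix $\varepsilon>0$ and choose an admissible $\ba^1 = \{\tau^1(n):n\ge 1\}$ with $(\ba^1,\bm{s}^2)\in\mathcal{A}$ and $J^1_m(x;\ba^1,\bm{s}^2) \ge \widetilde{V}^1_m(x;\bm{s}^2) - \varepsilon$. Define the truncation $\ba^{1,(N^1)}$ by setting $\tau^1(n)=+\infty$ for $n>\tn(1,N^1)$ and leaving the earlier stopping rules unchanged; this lies in $\mathcal{A}^{1,(N^1)}(\bm{s}^2)$ (the remaining admissibility conditions of \cref{def:admstrat} are inherited, and $\lim_n \sigma_n = +\infty$ holds trivially since only finitely many switches by P1 occur, while P2 follows a threshold-type rule for which non-clustering was already assumed). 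Then I would compare $J^1_m(x;\ba^{1,(N^1)},\bm{s}^2)$ with $J^1_m(x;\ba^1,\bm{s}^2)$ termwise: the two coincide on the event that P1 uses at most $N^1$ switches under $\ba^1$, and on the complementary event $A_{N^1}$ (P1 would have switched more than $N^1$ times) the difference is controlled by (i) the discounted switching costs saved, which are nonnegative so they only help, and (ii) the difference in the running-profit integrals plus follower payoffs from regime $\sigma^1_{N^1}$ onward, which is bounded in absolute value by $\E_x[ e^{-r\sigma^1_{N^1}} \mathds{1}_{A_{N^1}} (D^1_{\overline m}(X_{\sigma^1_{N^1}}) - D^1_{\underline m}(X_{\sigma^1_{N^1}})) ]$ using the linear-growth bound on $D^i$ and the ordering of profit ladders. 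Since $\sigma^1_{N^1} \to +\infty$ on $A_{N^1}$ as $N^1\to\infty$ (admissibility of $\ba^1$ forces $\lim_n\sigma_n=+\infty$), and the integrands are dominated by an integrable envelope via the linear growth of $D^i$ and standard moment estimates for the regular diffusion $X$, dominated convergence gives this error term $\to 0$. Therefore $\liminf_{N^1} J^1_m(x;\ba^{1,(N^1)},\bm{s}^2) \ge J^1_m(x;\ba^1,\bm{s}^2) \ge \widetilde{V}^1_m(x;\bm{s}^2)-\varepsilon$, whence $\widetilde{V}^{1,(\infty)}_m(x;\bm{s}^2) \ge \widetilde{V}^1_m(x;\bm{s}^2)-\varepsilon$, and letting $\varepsilon\downarrow 0$ finishes the proof.

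The main obstacle I anticipate is the uniform-integrability / dominated-convergence bookkeeping in the last step: one must produce a single integrable random variable dominating $e^{-r\sigma^1_{N^1}}(D^1_{\overline m}(X_{\sigma^1_{N^1}}) + |D^1_{\underline m}(X_{\sigma^1_{N^1}})|)$ over all $N^1$, which relies on combining the linear growth $D^i_m(x)\le C(1+|x|)$ with exponential-discounting control of $\sup_{t} e^{-rt}(1+|X_t|)$ — legitimate here because $X$ is a regular diffusion on a domain with natural boundaries and the discounted cashflows $D^i_m$ are assumed finite and of linear growth, but it requires care to state cleanly. Everything else is a routine nesting-of-feasible-sets argument plus a truncation.
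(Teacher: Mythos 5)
Your proposal is correct and follows essentially the same route as the paper's proof: nested admissible sets give monotonicity and the upper bound, and the reverse inequality is obtained by truncating an $\varepsilon$-optimal strategy after its $N^i$-th own switch and showing the discounted tail difference vanishes because $\sigma^i_{N^i}\to\infty$ by admissibility. The only cosmetic difference is that the paper splits the tail error into three terms each made smaller than $\varepsilon$ and invokes Fatou's lemma, whereas you bound the tail via a single dominated-convergence envelope built from the linear growth of $D^i_m$; the substance is identical.
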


Proof of  \cref{pro:BTS_conv} is inspired by \cite{bayraktar2010one} and stated in
\cref{AP_BestRgame payoff}. Moreover, strong Markov property of $X$ and Dynamic Programming Principle (DPP) imply that
\begin{equation}\label{eq:ES_BestRDp}
\begin{split}
\widetilde{V}^{1,(N^1)}_m(x\,;\bm{s}^2) & = \sup_{\tau^{1}(1)\in \mathcal{T}}\E_x\Bigg[\int_0^{\underline{\tau}_m}e^{-rt}\pi^1_m(X_t)dt+e^{-r\underline{\tau}_m}\mathds{1}_{\{\tau^{1}(1)>\tau^2_m\}}\cdot\widetilde{V}^{1,(N^1)}_{m-1}(X_{\tau^2_m}\,;\bm{s}^2) \\
&\qquad\qquad+e^{-r\underline{\tau}_m}\mathds{1}_{\{\tau^{1}(1)\leq\tau^2_m\}}\Big(\widetilde{V}^{1,(N^1-1)}_{m+1}(X_{\tau^{1}(1)}\,;\bm{s}^2) -K^1_m(X_{\tau^{1}(1)})\Big)\Bigg],
\end{split}
\end{equation}
for all $m\in\mathcal{M}$, $\forall x\in\mathcal{D}$, with $\tau^2_m$ the first hitting time of $\Gamma^2_m=(\underline{d}, s^2_m]$, and dependence of $\tau^1(1)$ on $N^1$ omitted for brevity. We refer to \cite{bayraktar2010one, carmona2008optimal} who proved that DPP holds in this problem and \cite{aid2017capacity} who analyzed finite-control stopping games.

Notice that game payoffs \eqref{eq:BRI_zeroctr} can be treated as starting points to implement a backward Dynamic Programming scheme to solve the finite-control optimal stopping problem introduced in \eqref{eq:ES_BestRDp}. Suppose that $\widetilde{V}^{1, (N^1)}_{m-1}(\cdot\,;\bm{s}^j)$ and $\widetilde{V}^{1, (N^1-1)}_{m+1}(\cdot\,;\bm{s}^j)$ are determined, and \cref{asp:MS_asp} holds. We denote
\begin{align*}
\widetilde{v}^{1,N^1}(x\,;\tau^2_m) &:= \widetilde{V}^{1, (N^1)}_{m}(x\,;\bm{s}^j) - D^1_m(x),\\
h^{1,N^1}(x) &:=  \widetilde{V}^{1, (N^1-1)}_{m+1}(x\,;\bm{s}^j) - D^1_m(x)-K^1_m(x),\\
l^{1,N^1}(x)& :=  \widetilde{V}^{1, (N^1)}_{m-1}(x\,;\bm{s}^j) - D^1_m(x)，
\end{align*}
and  apply \cref{pro:MS_OSP} with leader/follower payoffs $h^{1,N^1},l^{1,N^1}$ to obtain best-response game payoff $\widetilde{V}^{1, (N^1)}_{m}(x\,;\bm{s}^2)$, which is parameterized by $\widetilde{\omega}^{1, (N^1)}_m$,$\widetilde{\nu}^{1,(N^1)}_m,\tilde{s}^{1, (N^1)}_m$. Thanks to  \cref{pro:BTS_conv} we know $\widetilde{V}^{1, (N^1)}_{m}(x\,;\bm{s}^2)$ converges, thus expect $\tilde{s}^{1, (N^1)}_m \to \tilde{s}^1_m$ would converge as well as $N^1 \to \infty$. Thus, for $N^1$ large, we may use $\bm{\tilde{s}}^{1,(N^1)}$ to define a time-stationary strategy that is a proxy for the best response.

\begin{figure}[b!]
	\centering
	\subfigure[Approximations of Game Values]{
		\label{fig:SA_ConvGV}
		\includegraphics[width=0.45\textwidth]{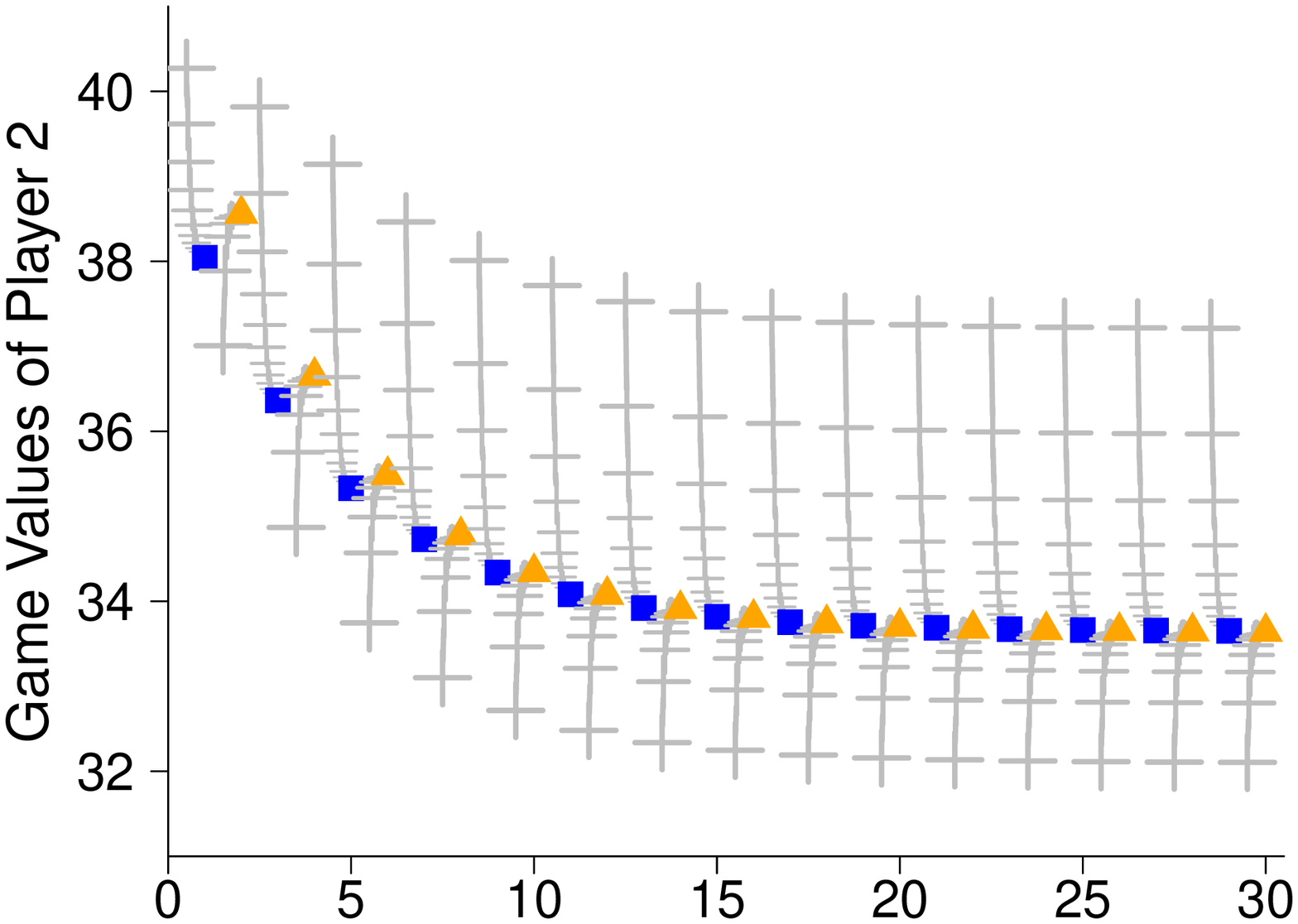}}
	\subfigure[Approximations of Thresholds]{
		\label{fig:SA_ConvTh}
		\includegraphics[width=0.45\textwidth]{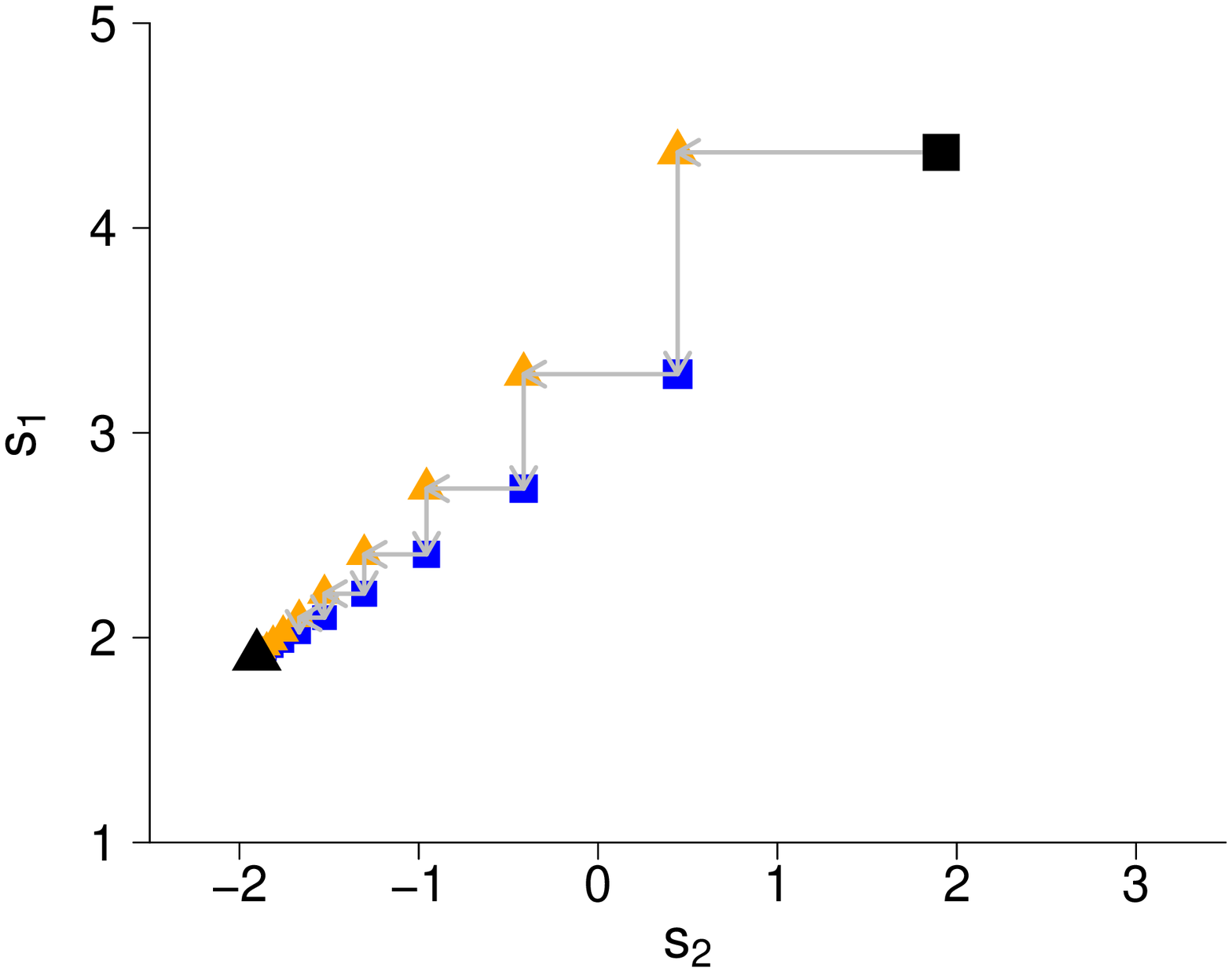}}
	\caption{\label{fig:SA_Cov} Finding fixed point of best-response maps via the tatonnement process over $a$ with $\mathcal{M}=\{-1, 0,+1\}$. Squares represent rounds $a=1,3,\ldots$ where player's 2 strategy is fixed. Triangles represent even rounds $a=2,4,\ldots$ where player's 1 strategy is fixed. (Left): Game values of Player 2 with $M_0=0$ and $X_0=0$ indexed according to $\tilde{V}^{2,N_2}(x; s^{1,a})$ with $N_2 = 1,2,\ldots, 30$. (Right): Thresholds $s^{i,a}_0$ as a function of $a$ at $m=0$ and $N=30$. The enlarged square represents the first round $a=1$ and the enlarged triangle represents the last $a=30$ round which appears to be close to a fixed point.}
\end{figure}
Building upon the preceding convergence result, we propose the following algorithm to determine a threshold-type
Markov Nash equilibrium. Essentially, we apply the t\^atonnement approach, alternating in finding the best-response strategies of the two players, expecting to converge to an associated best-response fixed point. These alternating best-responses are indexed by ``rounds'' $a=1,2,\ldots, A$. At odd rounds, Player 1 solves for her best response $(i=1, j=2)$; at even rounds, Player 2 solves for her best response $(i=2, j=1)$:
\begin{itemize}
	\item [\ding{172}:] Set the strategy of player $j$ to be of threshold-type as $\bm{s}^{j,a}$:
	\begin{itemize}
		\item For $a=1$, set $\bm{s}^{2,1}$ as the monopoly thresholds of P2, i.e.~when P1 is not allowed to switch ($N^1 =0$ case). The thresholds $\bm{s}^{2,1}$ can then be obtained by solving a single-agent optimal switching problem.
		\item For $a>1$ set $\bm{s}^{j,a}=\widetilde{\bm{s}}^{j, a-1}$.
	\end{itemize}
	
	\item [\ding{173}:] Solve for $\widetilde{\bm{s}}^{i,a}$ and value function $\widetilde{V}^{i, (N)}_m(\cdot\,;\bm{s}^{j,a})$ for all $m\in\mathcal{M}$:
	\begin{itemize}
		\item Solve optimal stopping problems when player $i$ is allowed at most $n$ switches and player $j$ applies $\bm{s}^{j,a}$ using \cref{pro:MS_OSP}, iteratively for $n=1,\ldots, N$.
		\item Record $\widetilde{V}^{i,(N)}_m(\cdot)$ and the approximate best-response strategy $\widetilde{\bm{s}}^{i,a}(\bm{s}^{j,a-1}) \simeq \tilde{\bm{s}}^{i,(N)}$
	\end{itemize}
	\item [\ding{174}:] Change the roles of $i$ and $j$ (alternate which player is solving for the best response)
	\item [\ding{175}:] Repeat steps \ding{172} - \ding{174} as $a=1,\ldots,$ until the maximum change in $|\widetilde{\bm{s}}^{i,(N),a}-\widetilde{\bm{s}}^{i,(N),a-2}|$, $i\in\{1,2\}$ are both less than a predetermined tolerance level $Tol$ (or simply for $A$ rounds).
\end{itemize}

\cref{fig:SA_Cov} illustrates the above best-response induction in one of our case-studies. In each round we iterate to find the best response assuming player $i$ has up to $N^i$ switches. During the odd rounds $a=1,3,\ldots$ Player 2 implements the stationary strategy $\bm{s}^{2,a}$ and her game values (gray `+') decrease  as the number of Player 1's controls $N^1=1,\ldots,30$ increases. In contrast, during the even iterations, Player 2 game values  $\widetilde{V}^{2,(N^2)}_m$ converge upwards as $N^2=1,\ldots, 30$. The corresponding thresholds $s^{i,a}_m$ are shown on the right panel. We observe that both game values and thresholds converge after 30 inner iterations over $N^i$, and over $A=30$ outer tatonnement rounds (a total of $30 \times 30 \times 2$ optimal stopping problems solved via  \cref{pro:MS_OSP}). In particular, we may take $s^{i,(N),A}_m$ as an approximation of a best response fixed-point and hence of the equilibrium $s^{i,\ast}_m$.

\subsection{Constructing MNEs by Equilibrium Induction}\label{sec:scheme-localEqm}
Another approach to construct an (approximate) threshold-type MNE of the switching game is to take limits in a finite-control game of timing. This links to the earlier work by the authors in \cite{aid2017capacity}. Suppose that both players are constrained to \textit{finite} control strategies with respective bounds $n^1, n^2$ on total allowed number of switches. Specifically we consider strategies of the form
\begin{align}
\ba^{i, (n^1,n^2)}:=\left(\Gamma^{i,(k^1,k^2)}_m\right)^{k^1\leq n^1, k^2\leq n^2}_{m\in\mathcal{M}},\qquad\text{with }\Gamma^{i,(0, k^j)}_m\equiv \emptyset,
\end{align}
where $k^i \le n^i$ denotes the number of controls remaining for player $i$, and index stages of this game as
$$(M_t, N^1_t, N^2_t):=\{\text{macro market regime, \# controls remaining for P1, \# controls remaining for P2}\},$$
with $M_t\in\mathcal{M}$, and $N^i_t$ is a non-increasing piecewise-constant process on $\mathbb{N}$ with $N^i_0=k^i$ for $i\in\{1,2\}$. Duopoly games of this type are studied by \cite{aid2017capacity}, who determine local equilibria at each game stage by backward dynamic programming and patch them to construct a global one.

At sub-stage $(m, k^1,k^2)$, the local equilibrium is characterized as a fixed point of these players' best-response based on \cref{pro:MS_OSP}. Taking Player~1 as an example again, her leader and follower payoffs are related to her equilibrium game payoffs at adjacent stages which are known when implementing backward dynamic programming:
\begin{align}\label{eq:hl-inductive} \begin{cases}
h^{1,(k^1,k^2)}_m(x) &:=  V^{1, (k^1-1,k^2)}_{m+1}(x)- D^1_m(x)-K^1_m(x),\\
l^{1,(k^1,k^2)}_m(x)& :=  V^{1,(k^1,k^2-1)}_{m-1}(x) - D^1_m(x), \end{cases}
\end{align}
and their equilibrium strategies $(\tau^{1,(k^1,k^2),\ast}_m,\tau^{2,(k^1,k^2),\ast}_m)$ and game payoffs solve a pair of optimal stopping problems:
\begin{align}\label{eq:EI_localequi}
\begin{cases}
V^{1,(k^1,k^2)}_m(x)-D^1_m(x)=&\displaystyle\sup_{\tau^{1,(k^1,k^2)}_m\in\mathcal{T}}\mathbb{E}_x\bigg[\mathds{1}_{\{\tau^{1,(k^1,k^2)}_m<\tau^{2,(k^1,k^2),\ast}_m\}}\cdot h^{1,(k^1,k^2)}_m(X_{\tau^{1,(k^1,k^2)}_m})\\
&\qquad\qquad\qquad\qquad+\mathds{1}_{\{\tau^{1,(k^1,k^2)}_m>\tau^{2,(k^1,k^2),\ast}_m\}}\cdot l^{1,(k^1,k^2)}_m(X_{\tau^{2,(k^1,k^2),\ast}_m})\bigg],\\
V^{2,(k^1,k^2)}_m(x)-D^2_m(x)=&\displaystyle\sup_{\tau^{2,(k^1,k^2)}_m\in\mathcal{T}}\mathbb{E}_x\bigg[\mathds{1}_{\{\tau^{1,(k^1,k^2),\ast}_m<\tau^{2,(k^1,k^2)}_m\}}\cdot l^{2,(k^1,k^2)}_m(X_{\tau^{1,(k^1,k^2),\ast}_m})\\
&\qquad\qquad\qquad\qquad+\mathds{1}_{\{\tau^{1,(k^1,k^2),\ast}_m>\tau^{2,(k^1,k^2)}_m\}}\cdot h^{2,(k^1,k^2)}_m(X_{\tau^{2,(k^1,k^2)}_m})\bigg].
\end{cases}
\end{align}
Note that simultaneous switches can be ruled out since on the event $\{\tau^{i,(k^1,k^2)}_m=\tau^{j,(k^1,k^2),\ast}_m\}$, stopping by Player 1 is strictly dominated by the strategy of first waiting, and then optimally switching as follower. In \cite{aid2017capacity} we show that the local equilibrium exists under some regularity conditions on $D^i$'s and $K^i$'s, however uniqueness cannot be guaranteed. Moreover, such a local equilibrium is not always of threshold-type, as  \textit{preemptive} equilibria may emerge.

\begin{figure}[t!]
	\centering
	\includegraphics[width=0.8\textwidth]{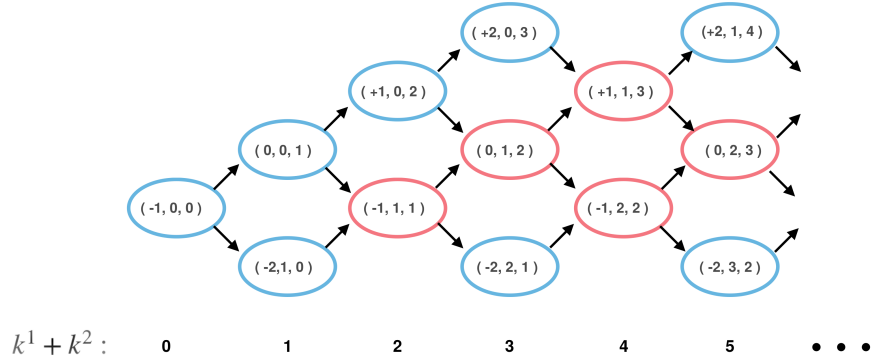}
	\caption{\label{fig:SA_AG1}A schematic diagram illustrating induction on local timing equilibria of Section~\ref{sec:scheme-localEqm}, starting at $(-1,0,0)$ and with $\underline{m}=-2$ and $\overline{m}=+2$, which leads to $\Delta_{-2}=-1,\Delta_{-1}=0, \dots,\Delta_{+2}=3$ in \eqref{eq:DMNE_convEqui}. The diagram illustrates the reachable stages $(m,k^1,k^2)$ relative to $(M_0, 0,0)$ and using the ``forward'' dynamic programming scheme. Blue circles denote single-agent optimization sub-stages that correspond to optimal stopping problems, while red circles denote interior stages where local timing equilibrium is determined according to \eqref{eq:EI_localequi}. Boundary stages are those where $k^1 = 0$ or $k^2 = 0$ or $m \in \{\underline{m}, \overline{m}\}$. Stages not reachable from $(-1,0,0)$ are omitted. }
\end{figure}

\begin{figure}[b!]
	\centering
	\subfigure[Finite-Control Equilibrium Payoffs $V^{i,(n,n+1)}_0(X_0)$]{
		\label{fig:MA_GPEquiInduc}
		\includegraphics[width=0.45\textwidth]{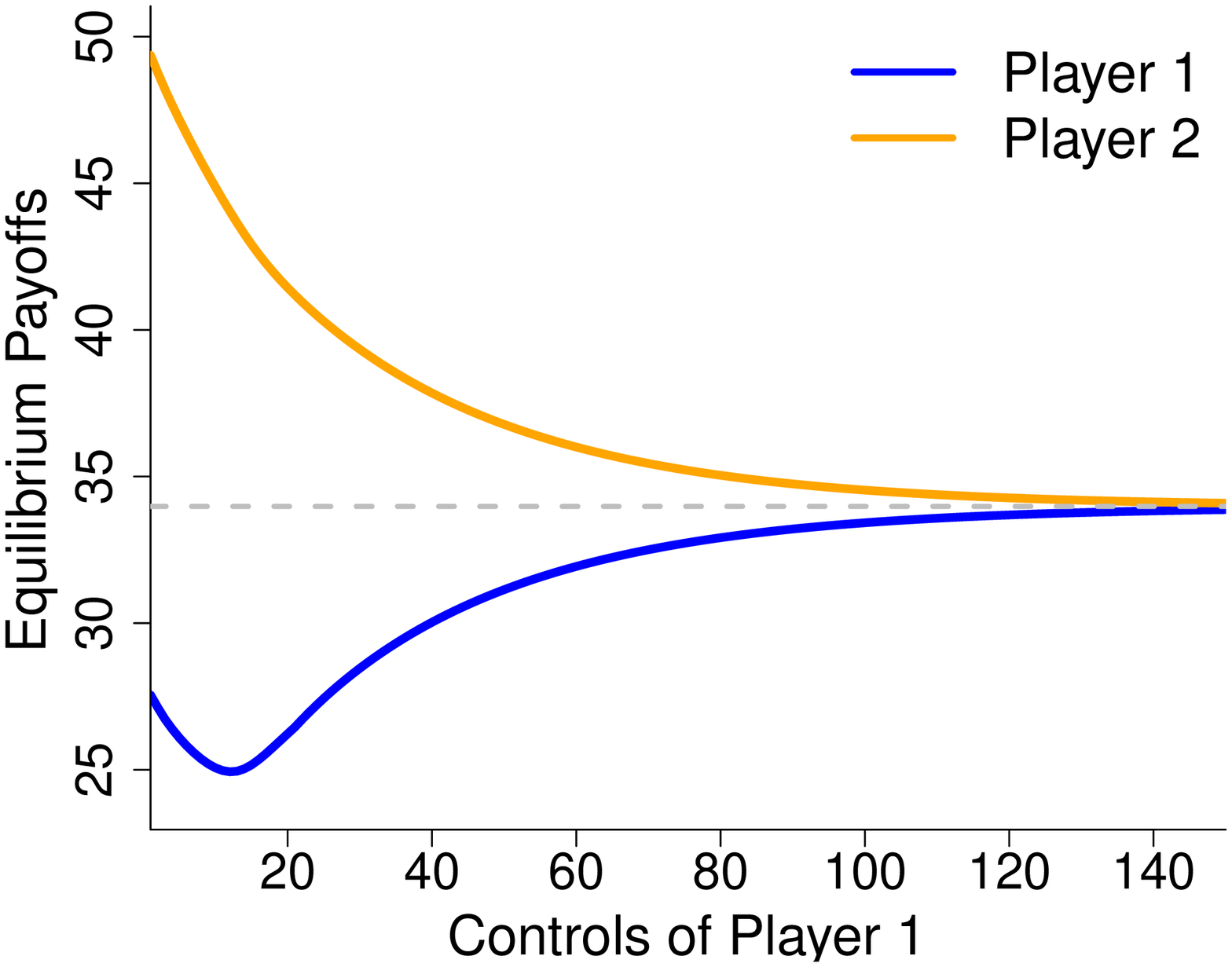}}
	\subfigure[Finite-Control Distribution of $M^\ast$]{
		\label{fig:MA_f2ad_distrib_est}
		\includegraphics[width=0.45\textwidth]{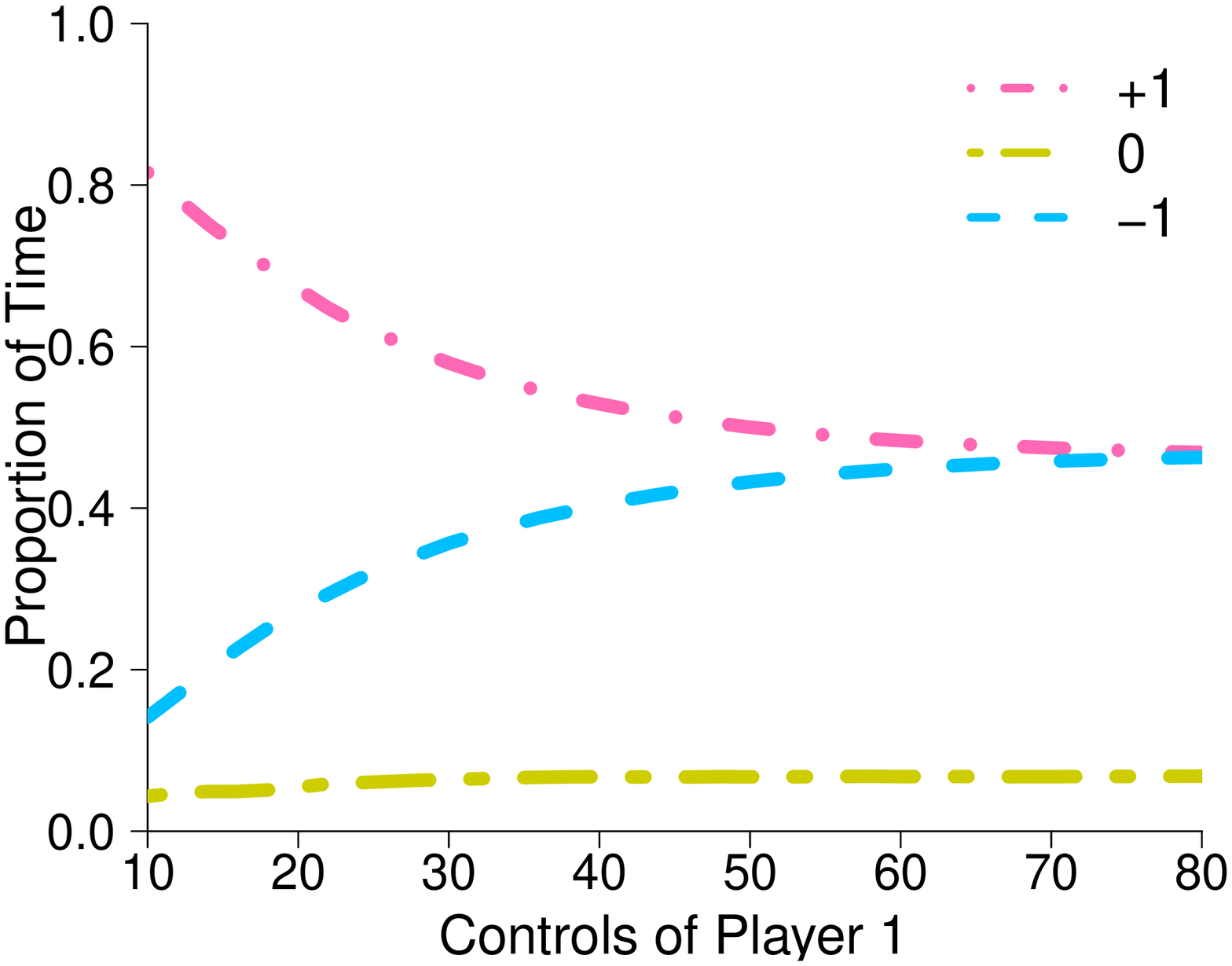}}
	\caption{\label{fig:MA_EquiInduc} Equilibrium induction using the forward scheme in \cref{fig:SA_AG1}. We postulate that Player 2 has one extra control, $N^2_0 = N^1_0 + 1 \Leftrightarrow \Delta_0 = -1$. \emph{Left}: Equilibrium payoffs $V^{i, (N^1_0,N^2_0)}_{M_0}(X_0)$ with $X_0=0$, $M_0=0$ indexed by $N^1_0$. The dashed line denotes the limiting payoff $V^i_0(X_0)$ in the original infinite-control game. \emph{Right:} time-averaged distribution of $M^{\ast,(N^1_0, N^2_0)}_t$ on $[0,\bar{T})$, $\bar{T}=50$. Specifically, we show the values of $\rho_m(\bar{T}):=\E \left[ \frac{1}{\bar{T}}\int^{\bar{T}}_0\mathds{1}_{\{M^\ast_s=m\}}ds \big|\, M^\ast_0 = 0 \right]$ for $m\in \{-1,0,1\}$ as $N^1_0$ increases.}
\end{figure}

In the example sketched in Figures~\ref{fig:SA_AG1}-\ref{fig:MA_f2ad_distrib_est}, we implement a \textit{forward} scheme to generate a sequence of equilibria starting at sub-stage $(m,k^1, k^2) = (-1, 0,0)$ where the payoffs are $V^{i, (0, 0)}_{-1}(x) = D^i_{-1}(x)$. With this known, we can solve for the local equilibria at stages $(0, 0, 1)$ and $(-2, 1, 0)$ utilizing \eqref{eq:hl-inductive}. Iterating, we find local equilibria for all triplets $(m,k^1,k^2)$ shown in the Figure (Throughout, we make the ansatz that local equilibria  are all of threshold-type at any sub-stage $(m, k^1, k^2)$). These triplets can be characterized as $k^2 = k^1 + \Delta_m$, where  the auxiliary parameter $\Delta_m$ is the difference between the number of switches available to the players at regime $m$. For  instance $\Delta_{-1}=0$ in \cref{fig:SA_AG1}, so that the players are  equally endowed whenever they are at regime $M_t=-1$, cf.~the sub-stages  $(-1,1,1), (-1,2,2),\ldots.$ The sub-stages $(m,k^1,k^2)$ that are not reachable from $(-1,0,0)$ are omitted and in this instance  we need not consider the respective local equilibria.

Using the terminal game stage $(-1, 0,0)$ and continuing up to $k^1 \le N$, the above forward scheme iteratively yields a sequence of equilibrium thresholds $s^{i,(n, n+\Delta_m)}_m$ and game coefficients $(\omega^{i, (n,n+\Delta_m)}_m, \nu^{i,(n,n+\Delta_m)})$. The resulting game payoffs are shown in \cref{fig:MA_GPEquiInduc}. As mentioned, the parameter $\Delta_m$ influences all the equilibria in \cref{fig:SA_AG1}. For example, in the presented scheme, the game will eventually end with $M_t = -1$ for $t$ large enough. Nevertheless, as $N$ increases, we expect that this effect vanishes, so that the limits are independent of $\Delta_m$:
\begin{align}\label{eq:DMNE_convEqui}
\begin{pmatrix}
s^{i,(n, n+\Delta_m)}_m\\
\omega^{i, (n,n+\Delta_m)}_m\\
\nu^{i,(n,n+\Delta_m)}
\end{pmatrix}
\,\xrightarrow{\text{ as }n\nearrow\infty\,}\,
\begin{pmatrix}
s^{i,\ast}_m\\
\omega^{i,\ast}_m\\
\nu^{i,\ast}_m
\end{pmatrix},\qquad i\in\{1, 2\}, m\in\mathcal{M}.
\end{align}
This convergence can be observed in~\cref{fig:MA_EquiInduc} where the underlying symmetries imply $V^{1, (n,n+1)}_0(0) = V^{2, (n+1,n)}_0(0)$. Thus, we may interpret the top curve in \cref{fig:MA_GPEquiInduc} as the game payoff in the finite-stage setup when the player has \emph{one more} switch than her rival, and the bottom curve
as her game payoff when she has \emph{one fewer} switch relative to the rival. As $n \to \infty$, the relative benefit vanishes and both
$V^{i, (n,n\pm 1)}_0(x)$ approach $V^{i}_0(x)$. { Similarly, in~\cref{fig:MA_f2ad_distrib_est} the finite-horizon distribution of $M^{\ast,(n,n+1)}$  converges to that of $M^\ast$ on $[0,\bar{T})$, $\bar{T}=50$.}

Two issues arise with the above scheme. First, the associated equilibrium payoffs $V^{i,(N^1_0, N^2_0)}$  are not monotone in terms of $N^1_0$ or $N^2_0$. For instance, higher $N^1_0$ benefits P1, while higher $N^2_0$ harms her since her rival now has more flexibility. Changing both $N^i$'s simultaneously leads to ambiguous results: in \cref{fig:MA_GPEquiInduc} P1's payoff decreases first, then increases in terms of $N^1_0=N^2_0-1$. Thus convergence in \eqref{eq:DMNE_convEqui} is hard to prove. Second, the local timing game might generate multiple threshold-type equilibria~\cite{aid2017capacity}. As a result, equilibrium selection becomes important when inducting on $N^i_0$'s.

\begin{remark}
	Setting  $\Delta_m=\underline{m}$ (resp.~$\Delta_m=\overline{m}$) is equivalent to granting P2 (resp.~P1) infinite number of allowed switches, while her rival is restricted to finite number of controls. This of course confers an ultimate advantage to the privileged player who will ultimately ``win out'' the competition. For example, taking $(M_0, N^1_0, N^2_0) = (0,4,2)$ in the running example means that P2 only has 2 switches, while P1 has four, so she will ultimately succeed in driving to the best possible regime $\lim_{t\to \infty} M_t = +2$ and will never require more than 4 switches anyway (recall that $\overline{m}=+2$). Thus this setting resembles the auxiliary game discussed in Section~\ref{sec:BRI}, except that both players are now dynamically optimizing their thresholds.
\end{remark}

\section{Case Study: Mean-reverting Market Advantage}\label{sec:MA}
Continuing \cref{eg:PF_MA}, we describe the local market fluctuation $(X_t)$ by an Ornstein-Uhlenbeck (OU) process mean-reverting to $\theta=0$:
\begin{align}\label{eq:OU}
dX_t=-\mu X_tdt+\sigma dW_t,
\end{align}
with $\mu=0.15, \sigma=1.5,\,\mathcal{D}=\mathbb{R}$ (i.e.~natural boundaries $\underline{d}=-\infty$ and $\overline{d}=+\infty$). This implies that the stationary distribution of $X$ is Gaussian, $\mathcal{N}(0, 7.5)$. For the discounting rate we take $r=10\%$. The profit rates  $\pi^i_m$ are constant and listed in \cref{tb:NE_pi}. Note that $\pi^i$'s are monotone but \textit{concave}  in terms of the regime $m$. A motivating economic context is the advertising competition between two firms. They can make an advertising campaign  by paying $K^i$, with the cost dependent on the exchange rate $(X_t)$. The effect of advertising (i.e.~exercising a change in $M$)
is to enhance one firm's dominance in the market, bringing her higher profit rates. Due to diminishing returns to scale, improvement in the profit rate decreases as the firm captures more and more market share, so that $\pi^i_m$ is concave in $m$.

In \cref{tb:NE_pi} our intent is that the profit ladder extends to the right and to the left forever, however the above concavity makes it uneconomical to reach extreme levels of dominance. Thus, we progressively enlarge the number of market regimes considered: $M_t \in  \{-1,0,1\}$ (Case I), $ M_t \in \{-2,-1,0,1,2\}$ (Case II), and $M_t \in \{-3, \ldots, 3\}$ (Case III). Below we also consider an asymmetric situation with $M_t \in \{-1,0,1,2\}$.

The switching costs $K^i$'s are affected by $X_t$: \begin{align}
K^i_m(x):= c_i\cdot(1+e^{(-1)^i\beta_ix}), \qquad i\in\{1,2\},
\end{align}
where $c_i=0.5,\,\beta_i=0.5$, $i\in\{1,2\}$. Thus, Player 1 can make cheap switches to dominate the market when $x \gg 0$ and Player 2 has the advantage when $x \ll 0$. For simplification, $K^i$'s are not directly affected by $(M_t)$. By construction, the profit rates and all other parameters are symmetric (about zero), so in equilibrium we expect players to act symmetrically as $X$ fluctuates from positive to negative and vice versa.

\begin{table}[htb]
	\centering
	\begin{tabular}{c|ccccccc}
		\hline
		\hline
		$m$	&  $-3$ & $-2$ & $-1$ & $0$ & $+1$ & $+2$ & $+3$   \\
		\hline
		\hline
		$\pi^1_m$	 & 0.0 & 1.5 & 2.8 & 4.0 & 5.1 & 5.9 & 6.0 \\
		\hline
		$\pi^2_m$	&  6.0 & 5.9 & 5.1 & 4.0 & 2.8 & 1.5 & 0.0 \\
	\end{tabular}
	\caption{\label{tb:NE_pi}Profit rate ladders $\pi^i_m$ for Section~\ref{sec:MA}.}
\end{table}

Best-response induction associated to this case study with $\mathcal{M} = \{-1, 0, 1\}$ was the one sketched in \cref{fig:SA_Cov} and explained in Section \ref{sec:BRI}. We also implement the equilibrium induction (see Section \ref{sec:scheme-localEqm}) for both Case I \& II and observe that players behave aggressively when they have more controls than their rivals in the finite-control scenario. As sketched in \cref{fig:MA_EquiInduc}, Player 2 will have the ``last word'' and $\lim_{t\rightarrow\infty}M^\ast_t=-1$. Consequently, she can behave more aggressively, be the leader more frequently, and reap higher payoff already in the medium-term.

\subsection{Equilibrium Thresholds}
\begin{figure}[b!]
	\centering
	\subfigure[Equilibrium payoff $V^1_0(x)$ of Player 1]{
		\label{fig:MA_GVCaseI-II} 
		\includegraphics[width=0.45\textwidth]{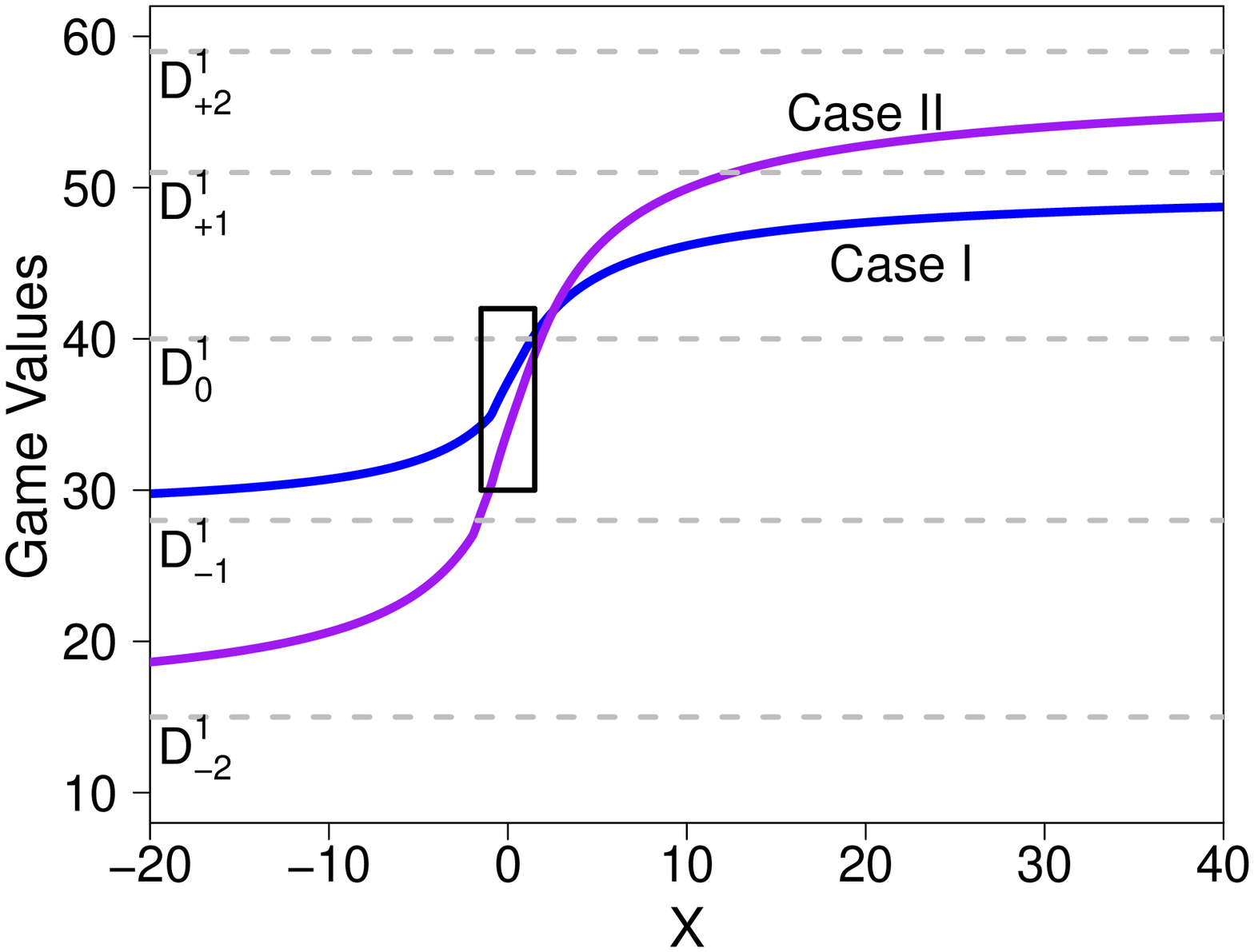}}
	\subfigure[Zoomed-In view of $V^1_0(x)$ ]{
		\label{fig:MA_GVZoomIn}
		\includegraphics[width=0.45\textwidth]{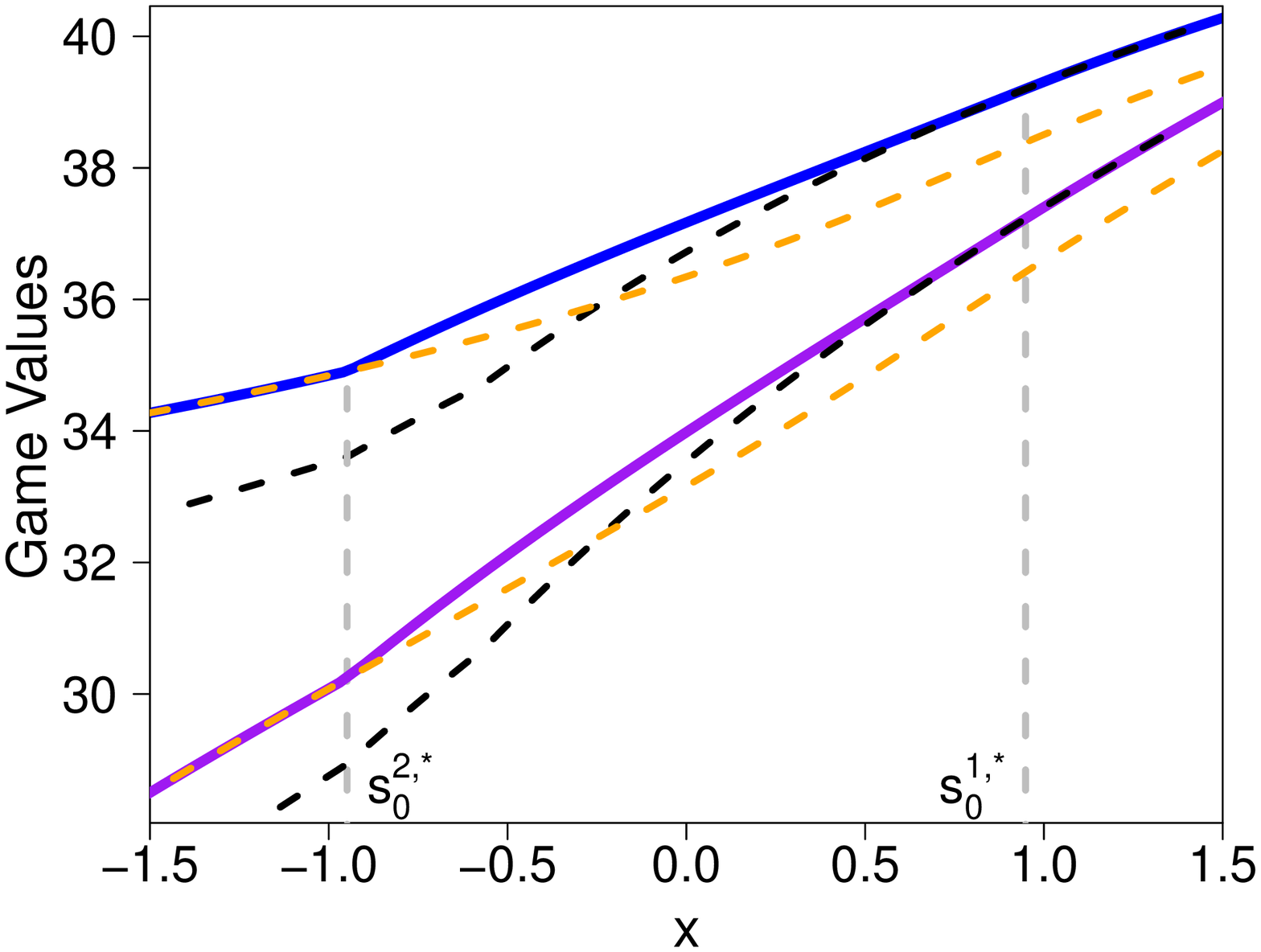}}
	\caption{\label{fig:MA_GV} Equilibrium payoff $V^1_0(x)$ of Player 1 for Case I and Case II. \emph{Left panel}: the dashed levels indicate $D^1_m$; $D^1_{\pm 1}$ are the asymptotes of $V^1_m$ for Case I, and $D^1_{\pm 2}$ are the asymptotes of $V^1_m$ in Case II. The box denotes the region corresponding to the zoomed-in right panel. \emph{Right}: Dashed curves denote the leader payoffs $V^1_{+1}(x)-K^1_0(x)$ and the follower payoffs $V^1_{-1}(x)$ for each Case. The same resulting thresholds $s^{i,\ast}_0$, $i\in\{1,2\}$ are adopted in \emph{both} Cases.}
\end{figure}

\cref{tb:NE_compare} lists the computed equilibrium thresholds for the three cases. Recall that in Case I $M$ is restricted to be in $\{-1,0,1\}$, so Player 1 (P2) is not allowed to act when $M_t = +1$ ($M_t = -1$, respectively), hence there is no $s^{1,\ast}_1$ or $s^{2,\ast}_{-1}$. Thus, there are 4 total thresholds to be computed, and 12 equations in the system \eqref{eq:AP_System}. Due to the  symmetric parameter setting, thresholds of Player 1 are symmetric to thresholds of Player 2 around 0, so in principle the equilibrium is fully characterized by the pair $s^{1,\ast}_0, s^{1,\ast}_{-1}$. Similarly, in Case II there are 8 thresholds (4 unique ones) and 24 equations, and in Case III there are 12 thresholds and 36 equations.

\begin{table}[hbt]
	\centering
	$\begin{array}{c|c|c|r|r|r|r|r|r}
	\multirow{2}{*}{Regime} & \multirow{2}{*}{$\pi^1_m$} & \multirow{2}{*}{$\pi^2_m$}&
	\multicolumn{2}{c|}{\text{Case I}}  & \multicolumn{2}{c|}{\text{Case II}} &  \multicolumn{2}{c}{\text{Case III}}     \\
	&&& \text{Player 1} & \text{Player 2} & \text{Player 1} & \text{Player 2} & \text{Player 1} & \text{Player 2} \\
	\cline{1-3}\cline{8-9}
	+3&6.0&0.0&  &  &  &  &  -  & 3.47853  \\ \cline {6-7}
	+2&5.9&1.5&  &  & - & 0.25184 & 13.16216 & 0.25184   \\ \cline {4-5}
	+1&5.1&2.8& - & -0.56688  & 1.90352 & -0.56688 & 1.90352 & -0.56688  \\
	\,\,0&4.0&4.0& 0.94861 & -0.94861 & 0.94861 & -0.94861 & 0.94861 & -0.94861  \\
	-1&2.8&5.1& 0.56688 &  - & 0.56688 &  -1.90352 & 0.56688 &  -1.90352  \\ \cline {4-5}
	-2&1.5&5.9&  &  & -0.25184 & - & -0.25184 &  -13.16216  \\ \cline {6-7}
	-3&0.0&6.0&  &  &  &  & -3.47853 & -  \\ \cline{8-9}
	\end{array}$
	\caption{Equilibrium thresholds $s^{i, \ast}_m$ for Cases I, II \& III of Section~\ref{sec:MA}. \label{tb:NE_compare}}
\end{table}

A major finding is that the players implement the \textit{same} thresholds at each interior regime in \textit{all} cases.  For example, $s^{1,\ast}_0 = 0.94681$ in all three Cases I/II/III. Therefore, when in regime 0, Player 1 ``does not see'' whether stage +2 is reachable or not, and only makes her decision based on $\pi^i_{\pm 1}$. This phenomenon is un-intuitive in the following two aspects. On the one hand, Player 1 adopts the same equilibrium threshold $s^{1,\ast}_0=0.94861$ despite the fact that she can further exercise switches to enhance her dominance at state +1 in Cases II \& III. The latter would be expected to make the switch from 0 to +1 more valuable and therefore make P1 more aggressive in regime 0. As we see, this intuition, while valid in single-agent contexts, fails in the constructed equilibrium.  On the other hand, players are also myopic about the multi-step threat from the switches of the other player. For example, P2 implements the same threshold $s^{2,\ast}_{+1}=-0.56688$, though in Cases II \& III she is facing the threat that P1 may switch the market to an even more disadvantageous regime.

\cref{fig:MA_GVCaseI-II} plots equilibrium payoffs of Player 1 $x\longmapsto V^{1}_{0}(x)$ (constructed as \eqref{eq:DA_vFunc}) when they are at equal strength $M_0 =0$  for Case I \& II. As expected, $V^1_m$'s are continuous, increasing and bounded: $V^{1, (I)}_0$ is bounded by $D^1_{-1}$ and $D^1_{+1}$, while $V^{1,(II)}_0$ is bounded by $D^1_{-2}$ and $D^1_{+2}$. 
Note that when these players are at equal strength locally (i.e.~$X_0=0$) Player 1 has lower equilibrium payoff in Case II, which can be interpreted as an influence of heavier competition between them.

The right panel of \cref{fig:MA_GVZoomIn} illustrates the nature of the variational inequalities for $V^1_m$.
Dashed lines represent leader payoff $V^1_{+1}(x)-K^1_{0}$ from  switching $(M_t)$ to (+1) and the follower payoff $V^1_{-1}(x)$. We have that $V^1_0(x)$ coincides with $V^1_{-1}(x)$  for $x<s^{2,\ast}_m$ (stopping region of P2), and smooth-pastes to $V^1_{+1}(x)-K^1_0(x)$ at the switching threshold $s^{1,\ast}_0$ of P1. As explained, this switching threshold is the same in Case I and II even though all the game payoffs (in particular $V^1_0$ and $V^1_{\pm 1}$) change.

\subsection{Macroscopic Market Structure in Equilibrium}

Due to the mean-reverting property of the OU process in \eqref{eq:OU}, the resulting $M^\ast$ is recurrent. In particular, at the extreme regimes $\cM$ is guaranteed to move back towards zero. \cref{tb:NE_Pi} presents the resulting long-run proportion of time $M^\ast$ spends at each regime, $\rho_m$ given by \eqref{eq:SMG_longrun}. The table also lists the transition matrix $\mathbf{P}$ of the extended jump chain $\cM$, and mean sojourn times $\vec{\xi}$ for Case I, whence $\cM$ takes values in $\{-1^{-}, 0^-, 0^+, +1^{+}\}$. Note that due to the limited number of regimes, the invariant distribution of $\cM$ is uniform, i.e.~the original jump chain $\tM$ spends half the time at regime 0 and 25\% of its time at regimes $\pm 1$. Of course, the corresponding sojourn times are not equal, so the long-run distribution of $M^\ast$ is more complex.

\begin{table}[ht]
\begin{minipage}{0.62\textwidth}
		\begin{tabular}{c|c|c|r|r|r}
\multicolumn{6}{}{}\\
	{Regime} & {$\pi^1_m$} & {$\pi^2_m$}&  {Case I} & {Case II} & {Case III}  \\
	\hline\hline
	$+3$&6.0&0.0&   &  & 0.00002 \\ \cline{5-5}
	$+2$&5.9&1.5&    & 0.34476 & 0.34474  \\ \cline{4-4}
	$+1$&5.1&2.8&   0.47186 &  0.12711 &   0.12711 \\
	$\,\,0$&4.0&4.0 & 0.05628 &  0.05628 &   0.05628 \\
	$-1$&2.8&5.1&  0.47186 & 0.12711  &    0.12711 \\ \cline{4-4}
	$-2$&1.5&5.9&   &  0.34476 &  0.34474\\ \cline{5-5}
	$-3$&0.0&6.0&   &  & 0.00002  \\
\end{tabular}
\end{minipage}
\begin{minipage}{0.35\textwidth}
		\begin{align*}
	\bm{P}=\begin{blockarray}{ccccc}
	& {(-1)^-} & {(0)^+} & {(0)^-} & {(+1)^+}\\
	\begin{block}{c(cccc)}
	{(-1)^-} & 0 & 1 & 0& 0\\
	{(0)^+} &0.205 &0&0&0.795\\
	{(0)^-} &0.795&0&0&0.205\\
	{(+1)^+} &0&0&1&0\\
	\end{block}
	\end{blockarray}, \\
\vec{\xi} = \!\! \begin{blockarray}{ccccc}
	& {(-1)^-} & {(0)^+} & {(0)^-} & {(+1)^+}\\
	\begin{block}{c(cccc)}
	&4.431 & 0.264 & 0.264 & 4.431 \\
	\end{block}
	\end{blockarray}.
	\end{align*}
\end{minipage}
\caption{Equilibrium stationary distribution $\rho_m$ of $M^\ast$ for Cases I, II \& III. \emph{Right:} Dynamics of $\cM^\ast$ in Case I. \label{tb:NE_Pi}}
\end{table}

From \cref{tb:NE_compare}, we observe that the thresholds $s^{i,\ast}_0$ are quite low, so that $M^\ast$ does not spend much time in regime 0 and the market is typically not at  ``equal strength''. In Case I, only one level of market dominance is possible and so we observe rapid switches from ``equal strength'' to ``P1 dominant'' or ``P2 dominant'', each of which occurs around $\rho^{(I)}_{\pm 1} =47\%$ of the time.  In Case II, because $s^{i,\ast}_0$ remain the same, we have the same $\rho^{(II)}_0$, so the market continues to be dominated (but now by different degrees) by one player around 47\% of time. Thus, the long-run distribution of regimes $\{+1, +2\}$ in Case II can be considered as ``splitting'' of those 47\% of regime $+1$ in Case I. Moreover, when one player dominates the market, she will max-out her dominance most of the time ($\rho^{(II)}_{\pm 2} = 34\%$ out of 47\%).

A second finding is that one can effectively endogenize the domain $\mathcal{M}$ of $M$. Recall that concavity of profit rates $\pi^i_m$ in terms of $m$ reduces players' incentive to make further switches if the game stage is already advantageous. On the contrary, the rival becomes more incentivized to switch $M$ back towards $0$. In the presented example, we make the marginal gain in profit rates minimal when going from +2 to +3 (and -2 to -3 for Player 2, respectively). As a result, in Case III there is very little incentive for P1 to switch from +2 to +3, reflected in the very high equilibrium threshold $s^{2,\ast}_{+2}=13.1621594$.
Because this threshold is far above the mean-reverting level $\theta=0$, it follows that these players are not likely to enhance their dominance up to the maximum level and regimes $\pm 3$ will take place extremely rarely; according to \cref{tb:NE_Pi}, $M^\ast$ spends less than 0.001\% of time in those extreme regimes. Consequently, from a financial perspective it is reasonable to simply restrict $M$ to be in $\{-2,-1, 0, +1,2\}$, since effectively $\rho^{(III)}_m \simeq \rho^{(II)}_m$ for all $m$.

\subsection{Effect of Profit Ladder}
\begin{figure}[b!]
	\centering
	\subfigure[Thresholds of Player 1]{
		\label{fig:Thres1} 
		\includegraphics[width=0.45\textwidth]{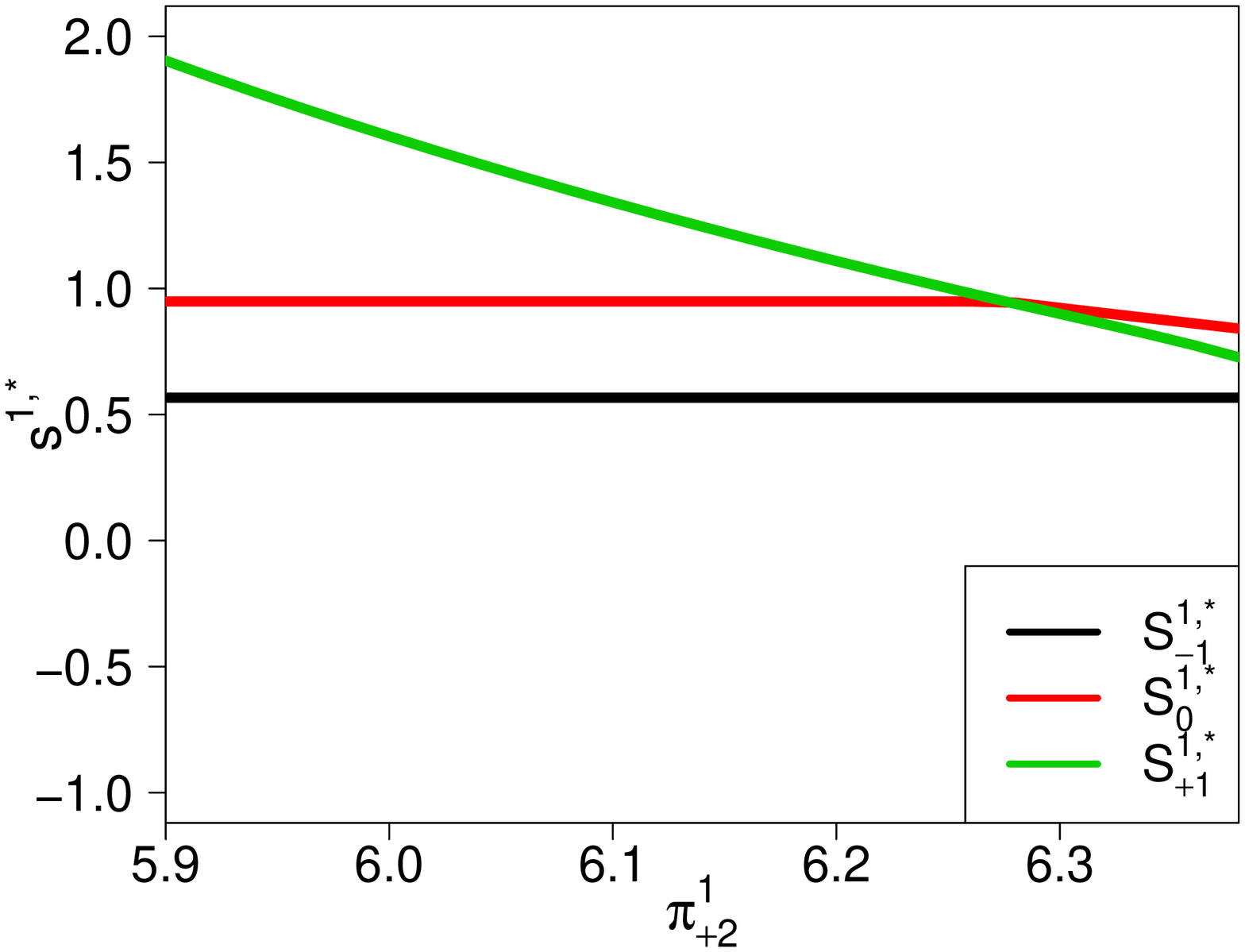}}
	\subfigure[Thresholds of Player 2]{
		\label{fig:Thres2}
		\includegraphics[width=0.45\textwidth]{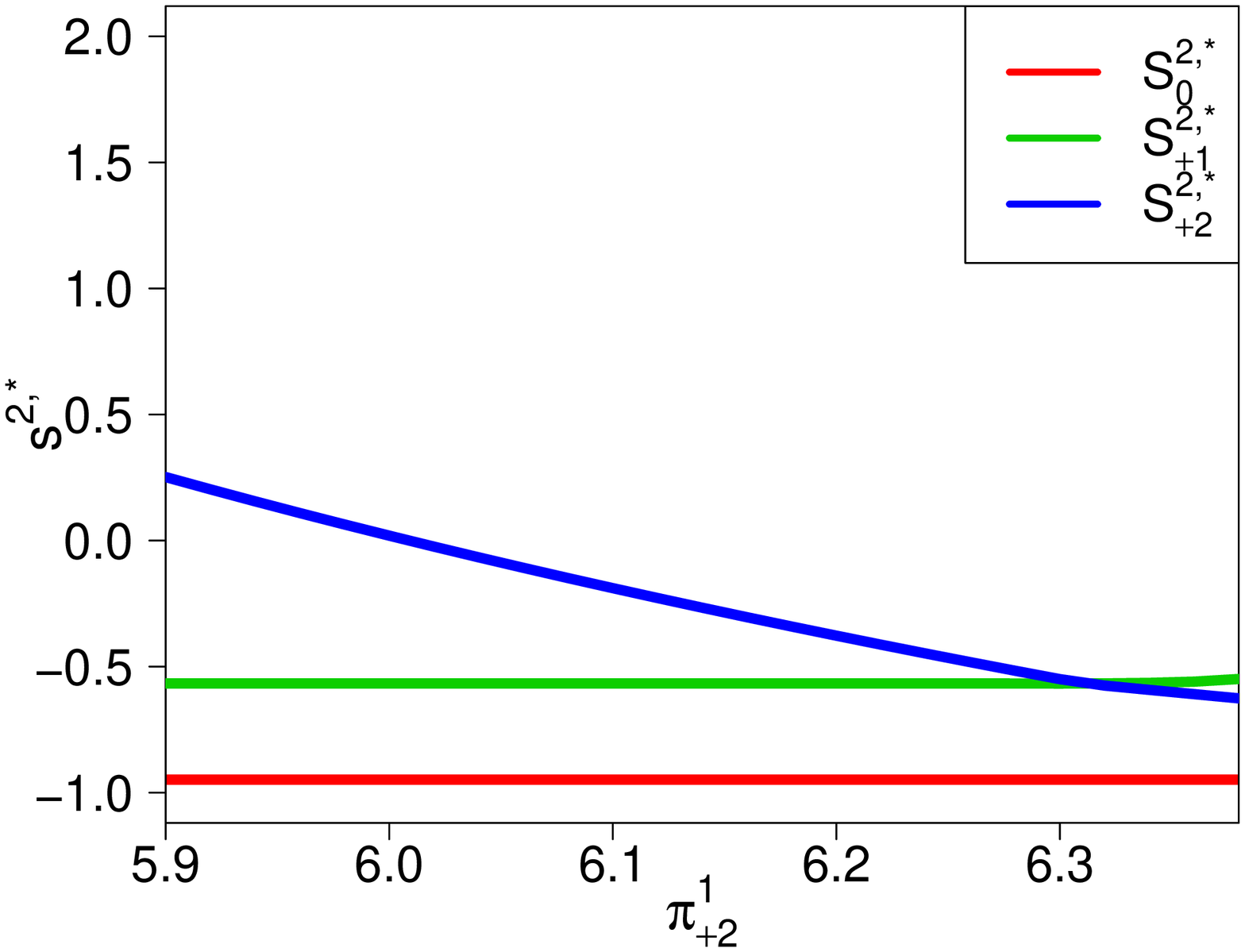}}
	\caption{Equilibrium switching thresholds $s^{i,\ast}_m$ (P1 on the left, P2 on the right) as the profit rate $\pi^1_{+2}$ of P1 varies in Case IV. We use the same type of line for each regime $m$ across the two panels.
}
	\label{fig:EffofPi} 
\end{figure}

To isolate the effect of the profit rates $\pi^i_m$, we construct threshold-type equilibria with $M$ restricted to $\mathcal{M}^{(IV)}=\{-1, 0, +1, +2\}$, and vary profit rate $\pi^1_2$ of Player 1 at stage +2 (all other profit rates remain as in \cref{tb:NE_pi}). Resulting equilibrium thresholds of these players are sketched in \cref{fig:EffofPi} by solid lines.
As expected, when $\pi^1_{+2}$ increases, there is more benefit to being in regime $+2$ and as a result, Player~1 is more willing to make a switch up from +1. Consequently, she implements  lower switching thresholds and $s^{1,\ast}_{+1}$ is decreasing in $\pi^1_{+2}$. On the contrary, she does not change her threshold $s^{1,\ast}_0$ at regime $0$, confirming the myopic nature of equilibrium thresholds discussed in the previous section. However, eventually $\pi^1_{+2} - \pi^1_{+1}$ is large enough (or alternatively $s^{1,\ast}_{+1}$ is low enough) to trigger simultaneous switches ($s^{1,\ast}_0>s^{1,\ast}_{+1}$), so that P1 will pass directly from regime 0 to regime 2 (recall that we assume that this incurs two switching costs, linearly added). In the latter situation, she switches sooner already in regime 0, see the extreme right of \cref{fig:Thres1}, where $s^{1,\ast}_0$ starts changing, as soon as $s^{1,\ast}_{+1} < s^{1,\ast}_0$.

Turning attention to Player~2, her switching threshold  $s^{2,\ast}_0$ in regime 0 is never affected by $\pi^1_{+2}$. Moreover, while her profit rate in regime +2 is unaffected, more aggressive behavior of P1 who switches into $M_t = +2$ more frequently, causes her to respond by lowering $s^{2,\ast}_{+2}$. Additionally, in the situation where P1 goes straight from 0 to +2 ($\pi^1_{+2} \ge 6.25$), P2 increases $s^{2,\ast}_{+1}$, adjusting his strategy in response to a more aggressive strategy of P1 which reduces his anticipated gain from switching $M$ from +1 to 0. These observations illustrate the complex feedback effects between thresholds in different market states and the underlying $\pi^i_m$'s.

\subsection{Effect of Switching Costs}
\begin{figure}[b!]
	\centering
		\subfigure[Switching thresholds of Player 1]{
		\label{fig:SvsK}
		\includegraphics[width=0.45\textwidth]{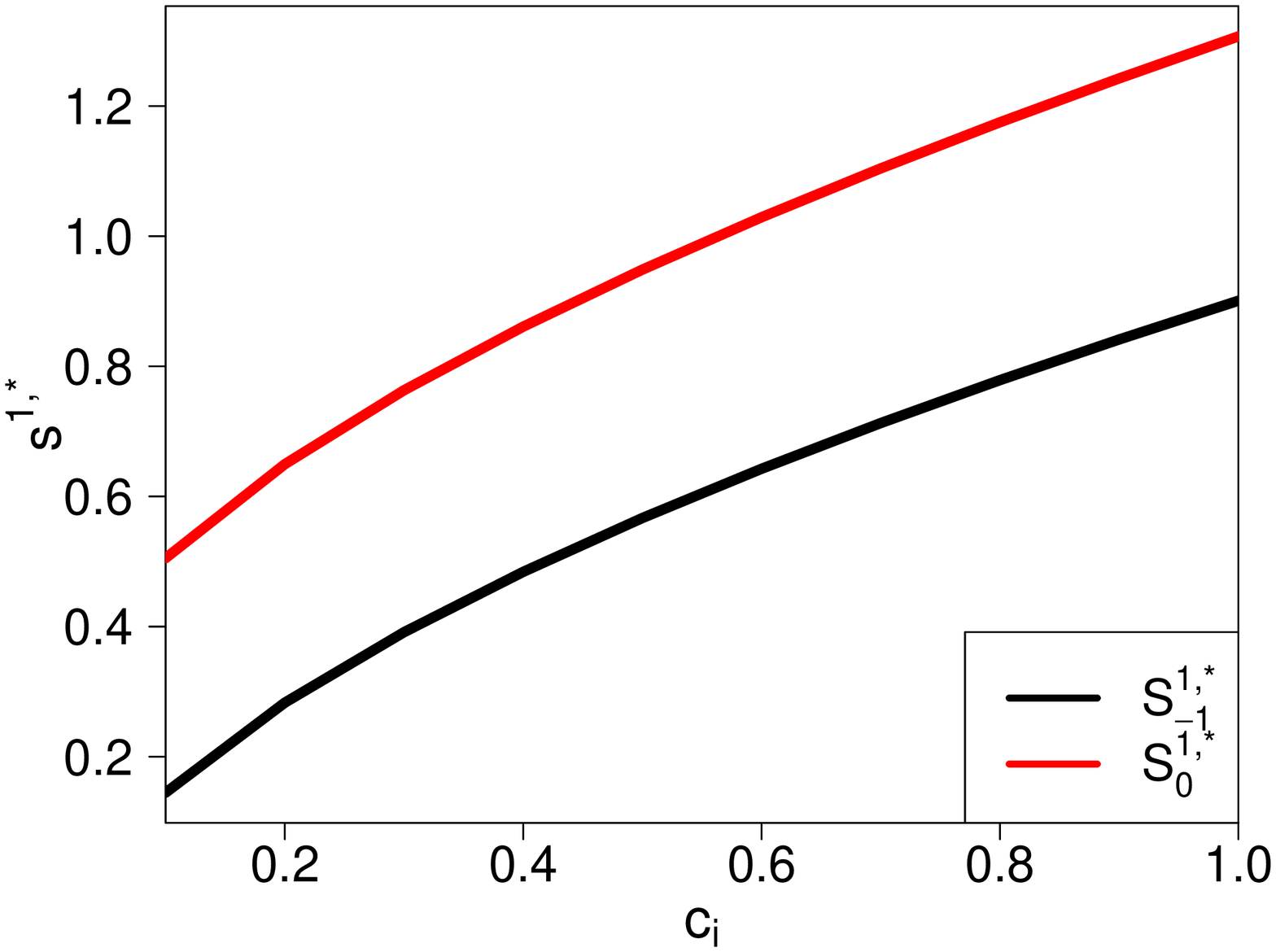}}
	\subfigure[Equilibrium payoff of Player 1 ]{
		\label{fig:GPvsK} 
		\includegraphics[width=0.45\textwidth]{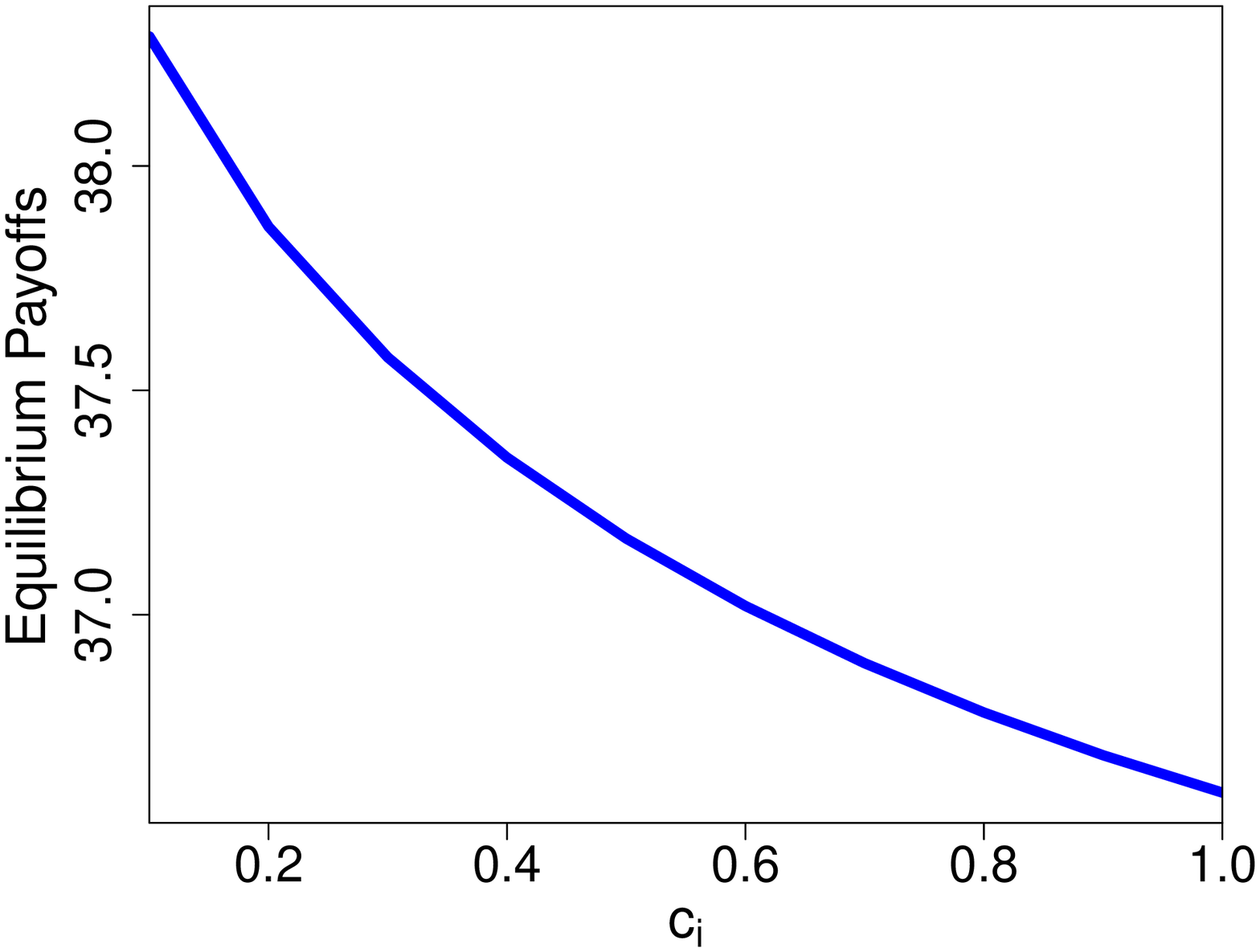}}
	\caption{\emph{Left}: Equilibrium thresholds adopted by P1 as $c_i$ varies in $[0.1, 1]$. Thresholds of P2 are anti-symmetric about 0. \emph{Right}: Equilibrium payoff at ``equal-strength'', i.e.~$V^{1}_0(0)=V^2_0(0)$ due to symmetry.}
	\label{fig:VariatK} 
\end{figure}

Another essential parameter is  the switching cost $K$. To study the effect of $K$, we vary the overall level of switching costs in  $K^i(x)=c_i\cdot(1+e^{(-1)^i\beta_i})$. Specifically, we try $c_i \in [0.1, 1]$, i.e.~from 20\% to 200\% relative to the baseline $c_i=0.5$ used in preceding setup, with all other parameters unchanged. The resulting equilibrium thresholds of Player 1 and her equilibrium payoff at the ``equal strength'', $V^1_0(0)$ are sketched in \cref{fig:VariatK} for Case I where $\mathcal{M}=\{-1, 0, +1\}$. Observe that as $c_i$ decreases (from the right to the left in \cref{fig:VariatK}) P1 adopts lower thresholds while P2 adopts higher thresholds by symmetry, which means they switch the macro market environment more frequently. For instance, the expected sojourn time of $\tM^\ast$ at regime $(+1)^+$ drops from 6.522 at $c_i=1.0$ to 1.887 at $c_i=0.1$, while the expected sojourn time at regime $(0)^+$ drops from 0.422 at $c_i=1.0$ to 0.105 at $c_i=0.1$. Recall that $K^i_0(x)$ is a function of $x$, so that lowering the threshold is equivalent to paying more. While lower (single) switching cost induces more frequent switching, the overall cost of switching still declines, so that equilibrium payoffs increase as $c_i$ declines.

\subsection{Multiple Threshold-type Equilibria}
As mentioned, existence of \emph{multiple} {threshold-type} equilibria is highly likely. According to \cref{thm:DA_equi}, any suitable solution to the system of non-linear equations is an MNE of the switching game. This situation arises in Case II above and is ``detected'' by selecting different local equilibria during the equilibrium induction of Section~\ref{sec:scheme-localEqm}. Specifically, in some sub-stages there are two different threshold-type equilibria in the local stopping game, which can be interpreted as ``Sooner'' (players behave more aggressively and switch quickly once $(X_t)$ deviates from zero) and ``Later'' (players are more relaxed and $s^i_m$ are larger in absolute value). This phenomenon was already documented for stopping games in \cite{aid2017capacity}. Then during equilibrium induction we consistently choose (\textit{i}) later equilibria (this is what was done and reported above in Tables~\ref{tb:NE_compare}-\ref{tb:NE_Pi}); (\textit{ii}) sooner equilibria. This generates two different sequences of $\bm{s}^{i,\ast}_m$, which ultimately yield two different solutions to the nonlinear system, reported in \cref{tb:MultiEqui}. We note that in the ``Sooner'' equilibrium which was described previously, players effectively skip regimes $\pm 1$ as $s^{1,\ast}_0>s^{1,\ast}_{+1}$ and $s^{2,\ast}_0<s^{2,\ast}_{-1}$. For example, starting at $M^\ast_0 = 0$ and $X_0 = 0$, P1 will not switch up until $X_t = 1.389 = s^{1,\ast,S}_0$, but then directly go to $M^\ast = +2$ because $1.389 > 1.029 = s^{1,\ast,S}_{+1}$.
As a result, the alternative stationary distribution $\vec{\rho}^{S}$ of $M^{S,\ast}$ is only supported on  $\{-2,0,2\}$ (interestingly, $\rho^S_0 > \rho^L_0$ so in the aggressive equilibrium the market is more frequently at ``equal strength''). Such multiple instantaneous switches are indicative of their aggressiveness---once $(X_t)$ moves in their preferred direction, players attempt to extract maximum dominance by switching into regime $\pm 2$. In line with previous analysis, the Sooner equilibrium carries \emph{lower} equilibrium payoffs, as players are penalized for aggressive interventions that leads to ``wasted'' effort, e.g.~$V^{i, S}_0(0)=33.65<V^{i,L}_0(0)=33.98$.

%

\begin{remark}
  In Case I there seems to be a unique equilibrium, which we conjecture is due to having only a single interior regime where players compete simultaneously. Thus, with $\mathcal{M}=\{-1,0,1\}$ we always observe a unique local threshold-type equilibrium during either of the finite-control inductions. It remains an open problem to establish more precise conditions regarding equilibrium uniqueness in the infinite-control switching game. Similarly, we do not have the machinery to check whether \emph{further} threshold-type equilibria exist in Case II.
\end{remark}

\begin{table}[hbt]
	\centering
	$\begin{array}{c|c|ccccc|r}
	\multicolumn{2}{r|}{}& -2 & -1 & 0 & +1 & +2 & V^i_0(0) \\\hline\hline
	\multirow{3}{*}{Later}& s^{1,\ast,L}_m&  -0.25184 & 0.56688 & 0.95861 & 1.90352 & -&\multirow{3}{*}{33.98}\\
	& s^{2,\ast,L}_m & - & -1.90352 & -0.94861 & -0.56688 & 0.25184 & \\ \cline{2-7}
	& \rho^L_m & 0.34476 & 0.12711 & 0.05628 & 0.12711 & 0.34476 & \\\hline\hline
	\multirow{3}{*}{Sooner}& s^{1,\ast,S}_m&  0.17983 & -0.19139 & 1.38933 & 1.02891 &-&\multirow{3}{*}{33.65} \\
	& s^{2,\ast,S}_m & - & -1.02891 & -1.38933 & 0.19139 & -0.17983 &\\ \cline{2-7}
	& \rho^S_m & 0.41143 & 0 & 0.17714 & 0 & 0.41143&\\
	\end{array}$
	\caption{Equilibrium thresholds $s^{i,\ast}_m$ and long-run distribution $\vec{\rho}$ of $(M^\ast_t)$ associated to two distinct equilibria in Case II}\label{tb:MultiEqui}
\end{table}

\section{Case Study: Long-run Advantage}\label{sec:LA}
Returning to \cref{eg:PF_LA}, we now consider local market fluctuations $X$ to follow a Geometric Brownian motion \eqref{eq:gbm} with
  drift $\mu=0.08$, volatility $\sigma=0.25,$ and discounting rate $r=10\%$. Because $\mu-\frac{1}{2}\sigma^2>0$, $\lim_{t \to \infty} X_t = +\infty$ a.s., and so in the long-run Player 1 will dominate the market since she will eventually have the advantage in terms of $X$.

The profit rates  $\pi^i_m$ are constant and given by
\begin{align*}
\pi^1_{-1} & =0; \quad \pi^1_0 = 3; \quad \pi^1_{+1} = 5; \\
\pi^2_{-1} & =5; \quad \pi^2_0 = 3; \quad \pi^2_{+1} = 0.
\end{align*}
The switching costs $K^i_m$'s are again independent of $m$ and driven by $X$:
\begin{align}
K^1(x) := (10-x)_+, \qquad K^2(x):= (-2+x)_+.
\end{align}
This case study can be interpreted as competition between an energy producer using a renewable resource (Player 1) and a producer using exhaustible resources (Player 2). The competition is in terms of generating capacity, with $M_t$ denoting the \emph{relative} production capacity. Here $X_t$ represents the marginal cost of exhaustibility which connects to the relative cost of increasing capacity.
We expect that $X_t \to +\infty$ (``peak oil''); as non-renewable resources are depleted, P2 becomes noncompetitive. In the long run, P1 will therefore dominate, however there is no upper bound on how many times the competing investments in new capacity will take place. Thus, the market will first go through a transient phase where both producers compete, and then will eventually enter the high-$X$ regime where the renewable P1 dominates and (endogenously) never relinquishes her advantage.

\begin{figure}[b!]
	\centering
	\subfigure[Sample Trajectory]{
		\label{fig:LA_SampleGBM} 
		\includegraphics[width=0.45\textwidth]{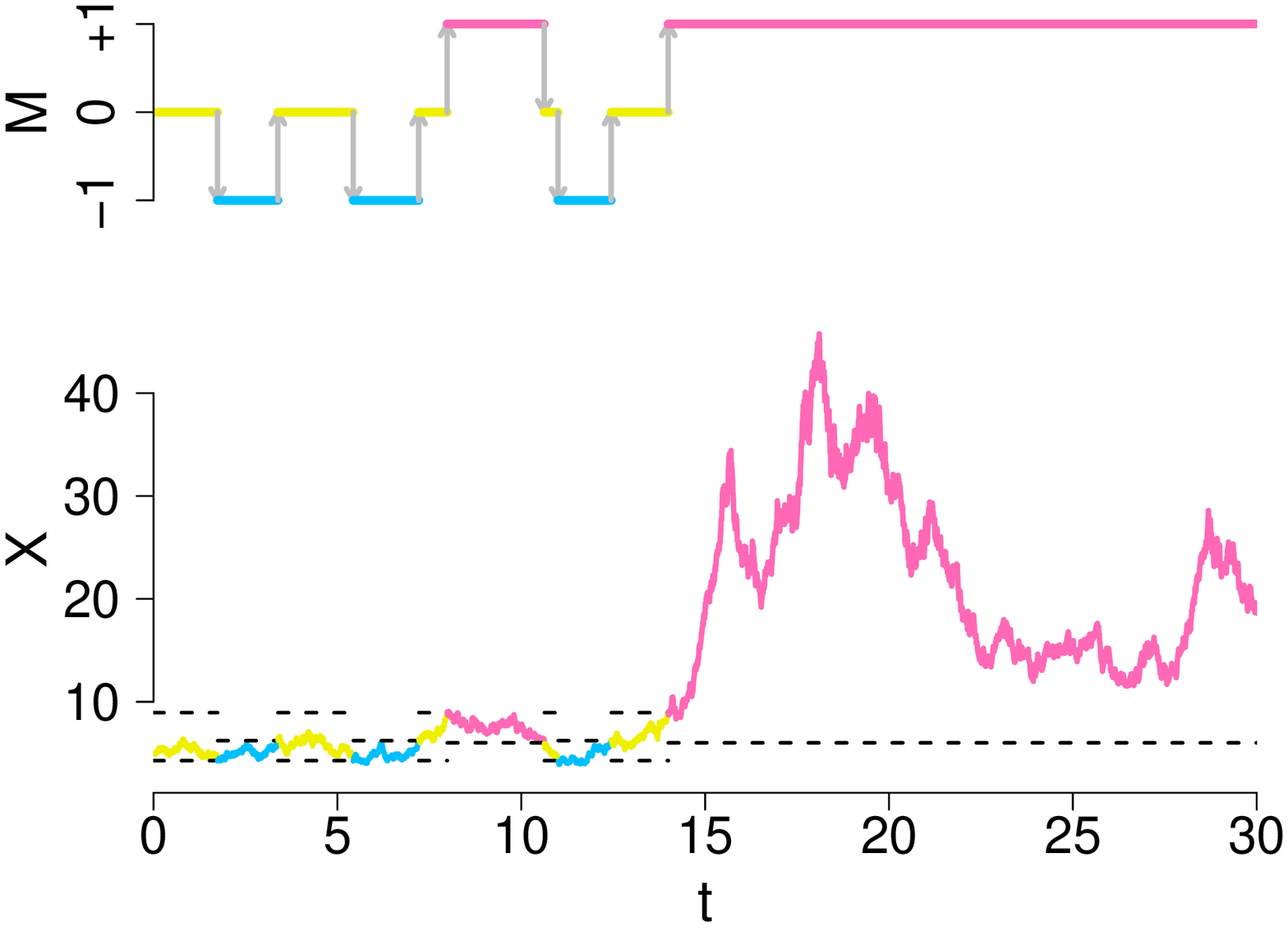}}
	\subfigure[Distribution of $M^\ast_t$]{
		\label{fig:M_dist_est}
		\includegraphics[width=0.45\textwidth]{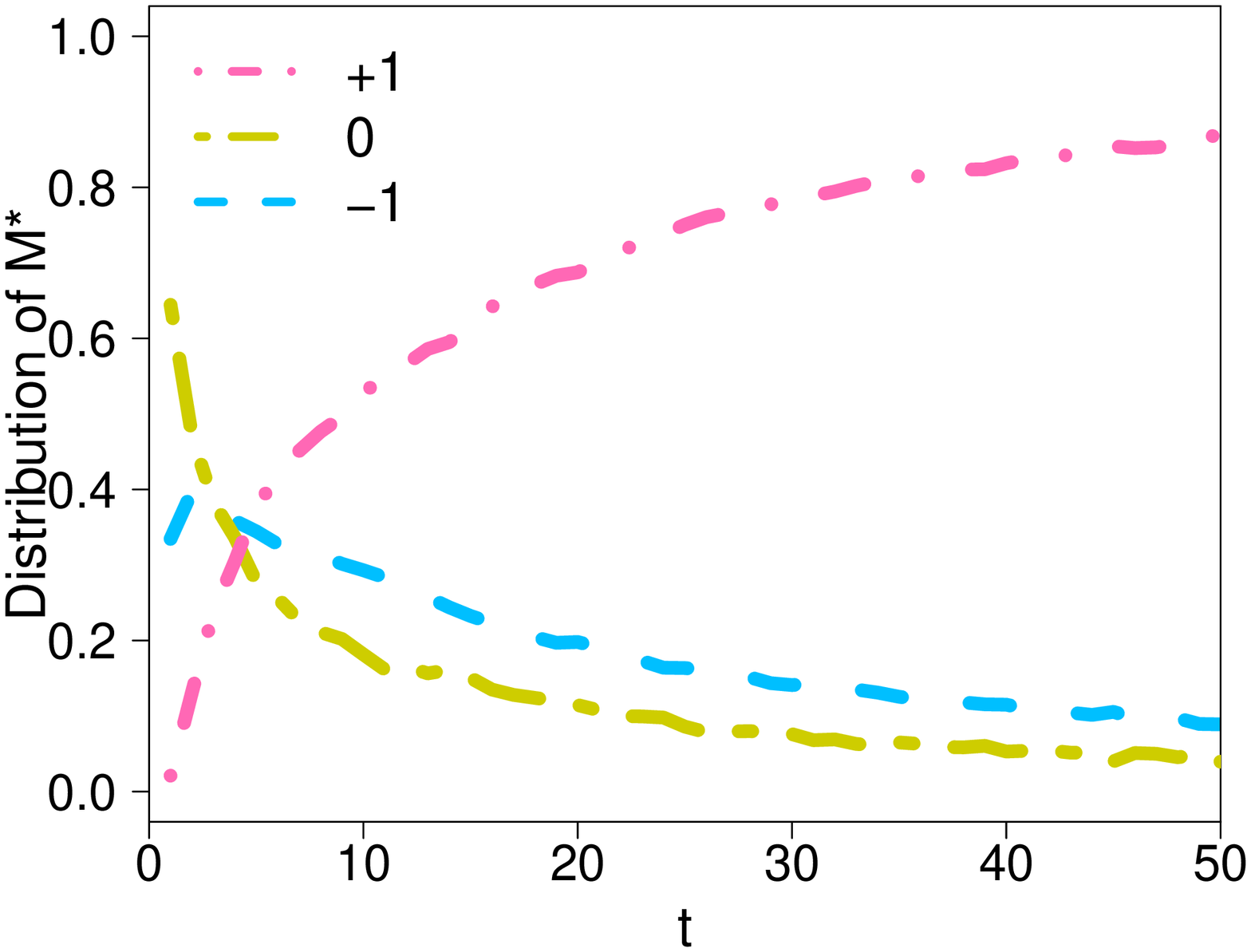}}
	\caption{\emph{Left}: A trajectory of $(X_t, M^\ast_t)$ for the GBM example of Section~\ref{sec:LA}, starting from $X_0=5$, $M^\ast_0=0$. \emph{Right}: Distribution of $M^\ast_t \in \{ -1, 0, 1\}$ as a function of $t$.
	\label{fig:LA_GV&Dist}} 
\end{figure}

\begin{table}[htb]
	\centering
	\begin{tabular}{cccrr}
		& \multicolumn{3}{c}{Expansion Thresholds $s^i_m$ }  & {Ave Plants Built} \\ 
		\cline {2-4}
		& $-1$ & $0$ & $+1$ &    \\
		\hline\hline
		Player 1	& 5.9796 & 8.9594 & - & 3.8151   \\ 
		\hline
		Player 2		& - & 4.1296 & 5.9574 & 2.8151 
	\end{tabular}
	\caption{\label{tb:LA_equi}Equilibrium in the GBM case study of Section~\ref{sec:LA}. Average Plants Built refers to the expected number of switches by player $i$ starting at $X_0=5, M_0 = 0$. }
\end{table}

\cref{tb:LA_equi} shows the resulting equilibrium thresholds, and \cref{fig:LA_SampleGBM}~plots a trajectory of $(X_t)$ and $(M^\ast_t)$ starting at $X_0=5$, $M^\ast_0=0$. The right panel \cref{fig:M_dist_est} shows the distribution of $M^\ast$ via $t \mapsto \mathbb{P}(M^\ast_t=m)$. We observe that Player 2 is likely to make the first expansion ($\mathbb{P}(M^\ast_t = -1)$ increases for low $t$), while in the medium-term Player 1 becomes more and more likely to be dominant. In line with $X_t \to +\infty$ (due to $\mu-\sigma^2/2>0$)  we have $\mathbb{P}(M^\ast_t = +1) \to 1$ as $t$ grows. The probability of absorption for $\cM^\ast$ when moving up from the states $(0)^{\pm}$ is (see \cref{AP_transP})
\begin{align*}
P_{(+1)^a} = \lim_{u\uparrow\infty}\frac{(s^{1,\ast}_0)^{1-\frac{2\mu}{\sigma^2}}-(s^{2,\ast}_{1})^{1-\frac{2\mu}{\sigma^2}}}{u^{1-\frac{2\mu}{\sigma^2}}-(s^{2,\ast}_{+1})^{1-\frac{2\mu}{\sigma^2}}} = 1 - \Big(\frac{s^{1,\ast}_0}{s^{2,\ast}_{1}}\Big)^{1-\frac{2\mu}{\sigma^2}}=0.4709,
\end{align*}
leading to the transition probability matrix $\bm{P}$ of $\cM^\ast$ as
\begin{align}\label{eq:LA_transP}
\bm{P}=\begin{blockarray}{cccccc}
&{(-1)^-}& {(0)^+} & {(0)^-}&{(+1)^+}&{(+1)^a}\\
\begin{block}{c(ccccc)}
{(-1)^-} &0 & 1 & 0 & 0 & 0  \\
{(0)^+} &0.374 & 0 & 0 & 0.331 & {0.295}  \\
{(0)^-} &0.379 & 0 & 0 & 0.329 & {0.292}  \\
{(+1)^+}&0 & 0 & 1 & 0 & 0  \\
{(+1)^a} &0 & 0 & 0 & 0 & 1  \\
\end{block}
\end{blockarray}.
\end{align}
Note that in the scenario plotted in \cref{fig:LA_SampleGBM}, $M^\ast_t = +1$ after $t=15$ which can be interpreted as ``absorption''. The theoretical average time until absorption (defined in \eqref{eq:SMS_abTime}) is $\mathbb{T}_0(5)=30.775$. In the Figure we also note that P1 makes 5 switches up and P2 makes 4 switches down. Recall that on the infinite time horizon P1 will always make one more switch since $M^\ast_t = +1$ eventually. The last column of \cref{tb:LA_equi} shows the \emph{average} number of expansions implemented by each producer, $\mathbb{N}^i_0(5)$ defined in \eqref{eq:SMS_numsw}.

\subsection{Effect of the Drift $\mu$ and Volatility $\sigma$}
\begin{figure}[b!]
	\centering
	\subfigure[$\sigma=0.25$, $\mu\in(0.05, 0.15)$]{
		\label{fig:LA_NumSw_mu} 
		\includegraphics[width=0.45\textwidth]{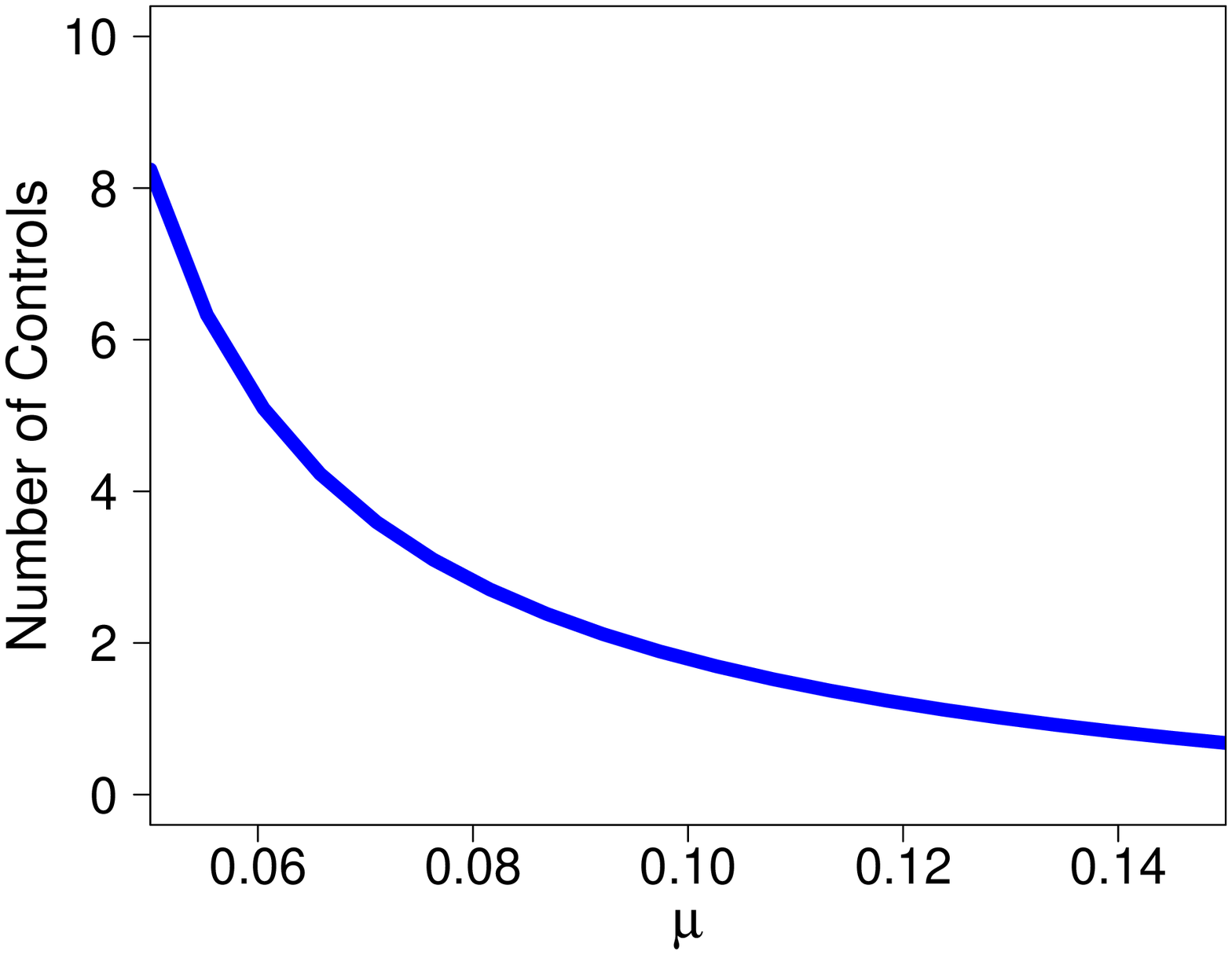}}
	\subfigure[$\sigma=0.25$, $\mu\in(0.05, 0.15)$]{
		\label{fig:LA_distrib_est}
		\includegraphics[width=0.45\textwidth]{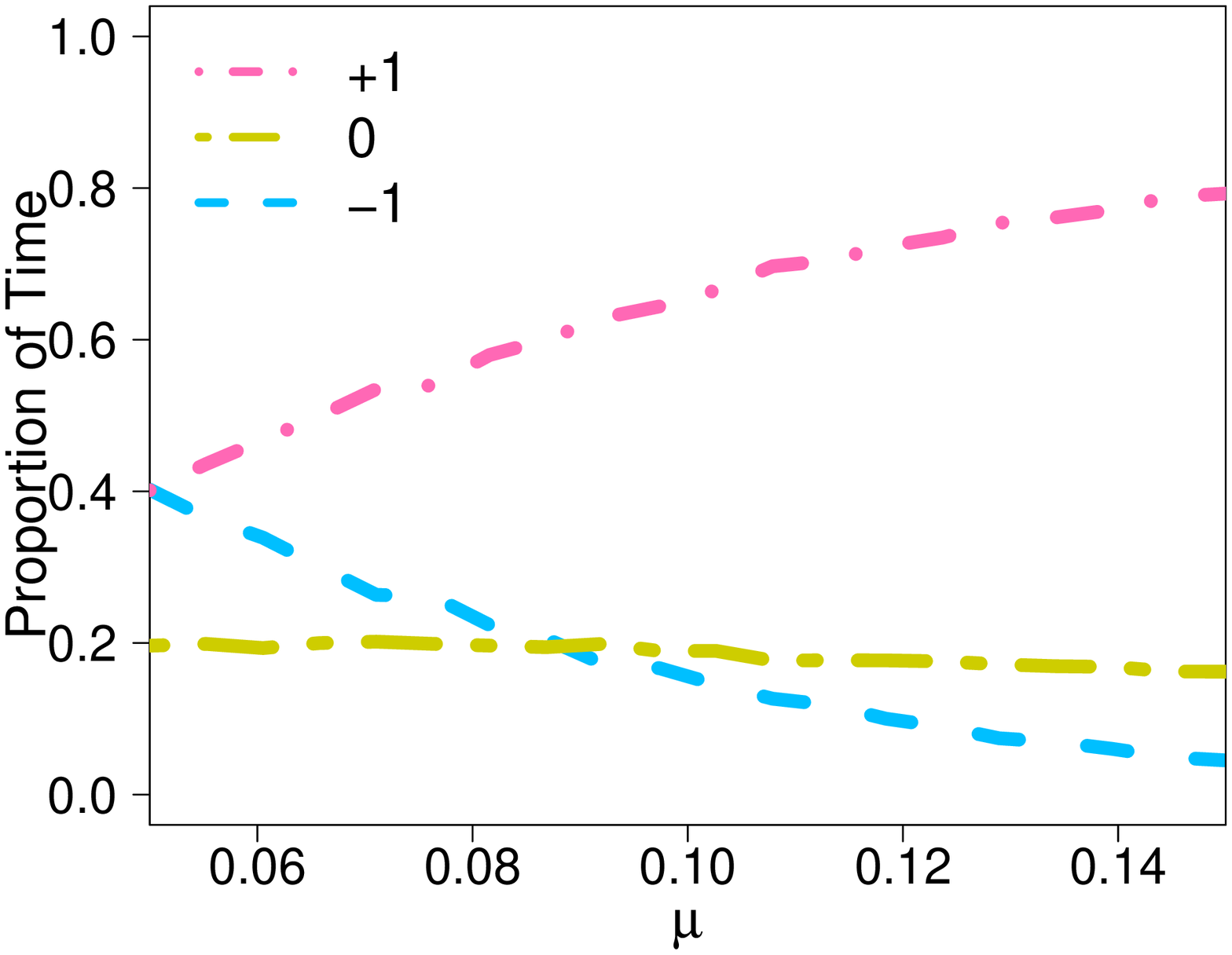}}
	\subfigure[$\mu=0.08$, $\sigma\in(0.20, 0.35)$]{
		\label{fig:LA_NumSw_sig} 
		\includegraphics[width=0.45\textwidth]{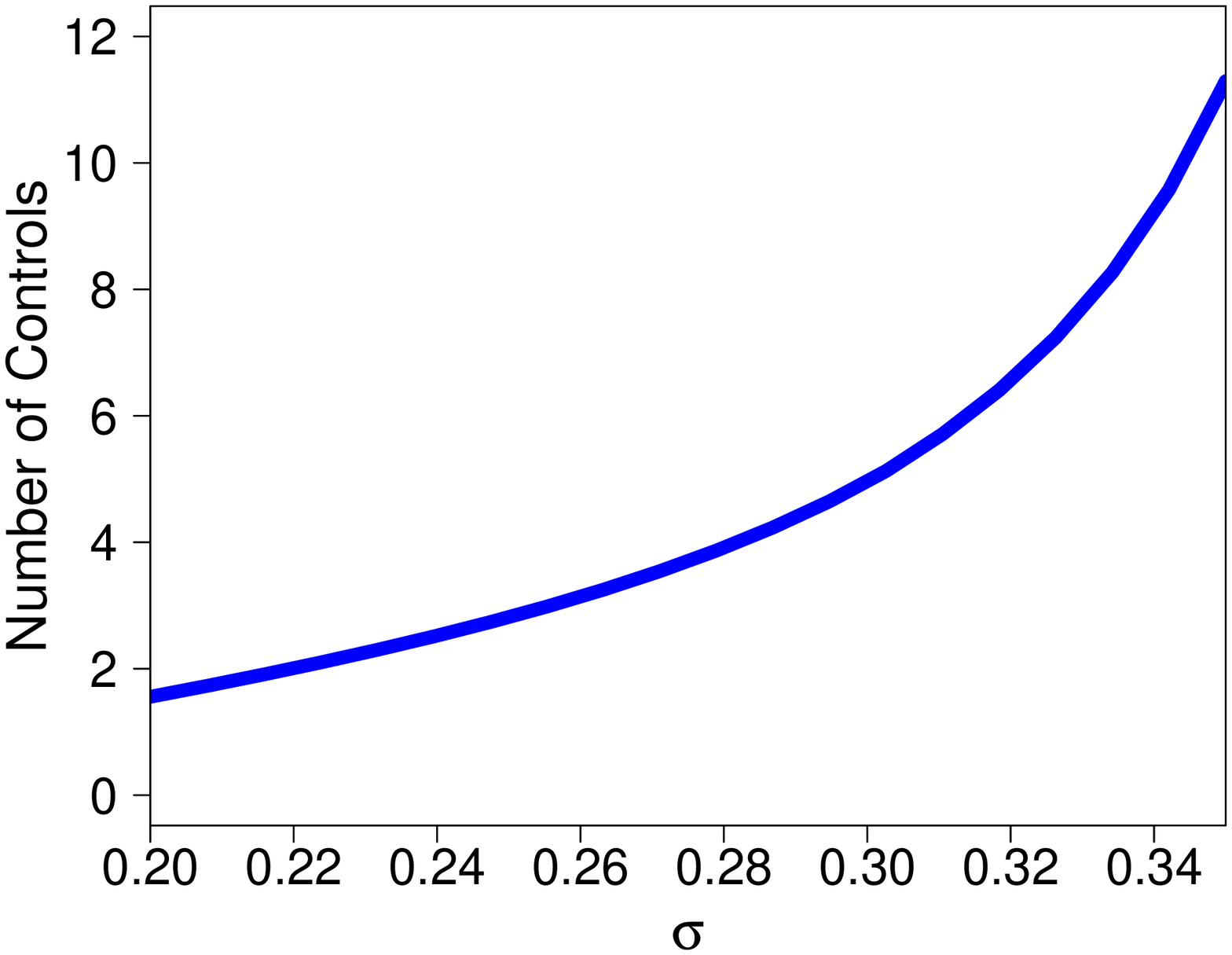}}
	\subfigure[$\mu=0.08$, $\sigma\in(0.20, 0.35)$]{
		\label{fig:LA_distrib_est_sig}
		\includegraphics[width=0.45\textwidth]{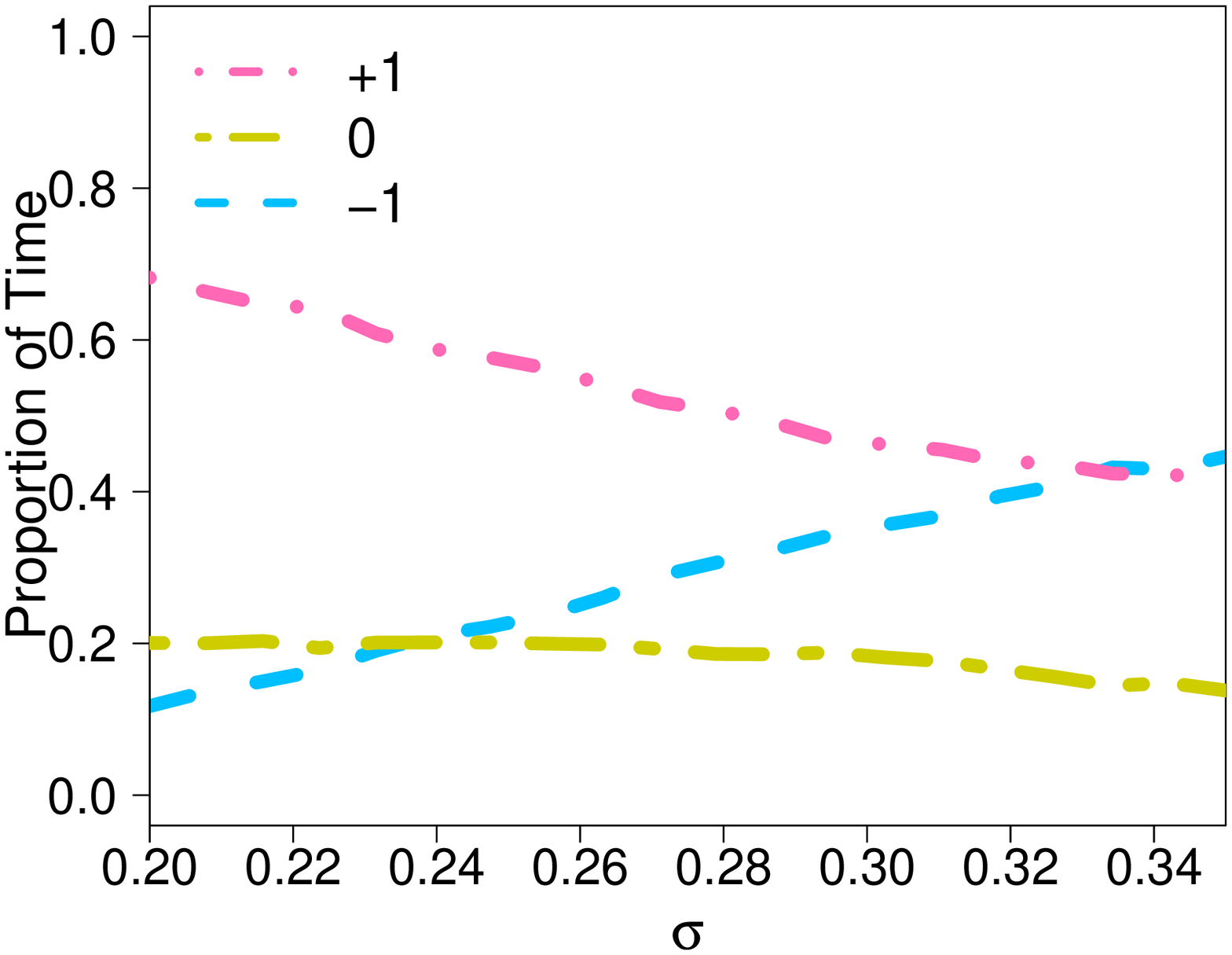}}
	\caption{ \emph{Left}: Average total number of switches exercised by P2. \emph{Right}: Estimated proportion of time that $(M^\ast_t)$ spends in regime $m$ in the next $\bar{T}=30$ years, $\rho_m(\bar{T})$ from \eqref{eq:LA_rhoT}, for each $m\in\mathcal{M}$. In all cases we take $X_0=5$, $M^\ast_0=0$.}
	\label{fig:LA_VariatMu} 
\end{figure}
We examine the effect of the drift  $\mu$ and volatility $\sigma$ in \eqref{eq:gbm} on the equilibrium strategies and the macro market regime $M^\ast$. To do so, we evaluate the expected number of expansions carried out by Player~2 conditional on $M_0 =0$, $\mathbb{N}^2_0(5)$ (which always satisfies $\mathbb{N}^2_0(5)=\mathbb{N}^1_0(5)-1$). Additionally, we also compute the proportion of time that $M^\ast$ spends at each regime $m \in \{ -1, 0, 1\}$ in the next $\bar{T}=30$ years, $\rho_m(\bar{T})$:
\begin{align}\label{eq:LA_rhoT}
\rho_m(\bar{T}):=\E \left[ \frac{1}{\bar{T}}\int^{\bar{T}}_0\mathds{1}_{\{M^\ast_t=m\}}dt \Big|\, M^\ast_0 = 0 \right].
\end{align}

\cref{fig:LA_VariatMu} shows the results as we vary $\mu$ from $0.05$ to $0.15$ with fixed $\sigma=0.25$, or in complement vary $\sigma \in [0.20,0.25]$ with fixed $\mu=0.08$. As expected, higher $\mu$ increases the tendency of $(X_t)$ to go to $+\infty$ and hence enforces the dominance of Player~1; thus Player~2 expands less. A similar effect holds as $\sigma$ falls ---with less fluctuations there are fewer opportunities for P2. As a result, the overall number of switches, which can be viewed as the ``observable competition'', decreases as $\mu$ increases or $\sigma$ decreases. A related effect is observed in \cref{fig:LA_distrib_est}: the dominance of P1, $\rho_{+1}(\bar{T})$, increases as $\mu$ rises or $\sigma$ falls. For $\mu$ low ($\sigma$ high), the transition to long-run advantage takes place more slowly, so the players are more even-handed in the medium term on $[0,\bar{T})$. Note that higher volatility hurts Player~1, intensifying the medium-term competition and causing both players to expend a lot of capital on repeated expansion.
Finally, we remark that the proportion of time $M^\ast$ spends in regime 0, $\rho_0(\bar{T})$ is quite stable with respect to different combinations of $\mu$ and $\sigma$.

\section{Conclusion}\label{sec:Conc}

The above experiments illustrate our constructive approach to dynamic equilibria in repeated timing games. In particular, the discrete nature of the macro regime $M$ allows a high degree of analytic tractability, including precise quantification of its behavior in  equilibrium. The above was a strong motivation for our reliance on threshold-type equilibria which allow structural insights into the strategic interaction between the players, the short-term fluctuations in $X$, and the emerging $M^\ast$.

A natural extension would be to consider other control settings, for instance impulse control on $M$ (allowing for a continuum of regimes) or singular control. Handling such extensions must start with precise analysis of the ``building block'' local problem which in our case was the constrained optimal stopping in \eqref{eq:MS_problem}. While the latter has been extensively investigated for optimal stopping, the existing theory for alternative control settings, such as constrained singular control, remains incomplete.

While our work is rooted in concrete economic applications, we are cognizant that many realistic aspects of the model have been left out. More sophisticated forms of payoff rates and switching costs (for example capturing instantaneous supply-demand competition) would be straightforward to incorporate. Another potential extension is to allow the dynamics of $X$, i.e.~the coefficients $\mu(X_t)$ and $\sigma(X_t)$ in \eqref{eq:PF_sde} to depend on the regime $m$, which implies that the generator $\mathcal{L}$ needs to be re-indexed as $\mathcal{L}_m$. Modulo technical and notational modifications, our construction, in particular the system of equations \eqref{eq:AP_System} - \eqref{eq:AP_sysbd} and the finite-control approximations in Section~\ref{sec:SAM}, should follow straightforwardly.

A much more challenging extension would be to allow for noise in $M$ which amounts to considering a two- (or in general multi-) dimensional stochastic state process. The main challenge is that there is very limited theory about constrained optimal stopping in multiple dimensions---a control now maps into a whole curve $s^i(m)$ and it is far from clear what the best-response to such $s^j(m)$ ought to look like.

Abstractly, it remains an open question to find other feasible equilibria in the switching game. We strongly suspect that there is a general lack of equilibrium uniqueness, so the more fruitful question is to construct other tractable equilibria.

\appendix

\section{Solution of the Constrained Optimal Stopping Problem}\label{AP_Methd}
Let $F(\cdot)$ and a $G(\cdot)$ be the fundamental increasing/decreasing solutions to the ODE:
	\begin{align}\label{eq:MS_ode}
	(\mathcal{L}-r)u(x)=0, \qquad x\in\mathcal{D},
	\end{align}
where $\mathcal{L}=b(x)\frac{d}{dx}+\frac{\sigma^2(x)}{2}\frac{d^2}{dx^2}$ is the infinitesimal generator of $X$. These linearly independent solutions are positive, continuous, strictly monotone and convex and admit the representations (see \cite[vol.II, p.292]{rogers2000diffusions})
	\begin{equation}
	\E_x\left\{e^{-r\tau_{a}} \mathds{1}_{\{\tau_a<\infty\}}\right\}=\begin{cases}
	\frac{F(x)}{F(a)}, &\qquad \text{if } x\leq a,\\
	\frac{G(x)}{G(a)}, &\qquad \text{if } x\geq a.\\
	\end{cases}
	\end{equation}
We recall that for the OU process the fundamental solutions are:
\begin{align*}
F_{OU}(x):=\int^\infty_0u^{\frac{r}{\mu}-1}e^{\sqrt{\frac{2\mu}{\sigma^2}}(x-\theta)u-\frac{u^2}{2}}du,\qquad\text{and}\qquad G_{OU}(x):=\int^\infty_0u^{\frac{r}{\mu}-1}e^{-\sqrt{\frac{2\mu}{\sigma^2}}(x-\theta)u-\frac{u^2}{2}}du,
\end{align*}
and for {geometric Brownian motion} they are:
\begin{align*}
F_{GBM}(x):=x^{\eta^+},\qquad\text{and}\qquad G_{GBM}(x):=x^{\eta^-},
\end{align*}
where $\eta^+$ and $\eta^-$ are the positive and negative roots of the quadratic equation $\frac{\sigma^2}{2}\eta(\eta-1)+\mu\eta-r=0$.

Then it is well known that the solution of the constrained optimal stopping problem \eqref{eq:MS_problem} is the \textbf{smallest concave majorant} of the transformed leader payoff
\begin{align}\label{eq:MS_transH}
\hat{H}^1(y):=\frac{\hat{h}^1}{G}\circ \left(\frac{F}{G}\right)^{-1}(y),\qquad \text{and }\qquad \hat{H}^2(z):=\frac{\hat{h}^2}{F}\circ \left(\frac{G}{F} \right)^{-1}(z),
\end{align}
where $\hat{h}^i(s^j)=\max\{l^i, g^i\}(s^j)$ and $\hat{h}^i(x)=h^i(x)$ otherwise. Specifically, $\{\frac{\tilde{V}^1}{G}\}(F(x)/G(x))$ (resp. $\{\frac{\tilde{V}^2}{F}\}(G(x)/F(x))$) is the concave envelope of $\hat{H}^i(\cdot)$. To ensure that this concave envelope is of threshold-type we invoke the following definition, cf.~de Angelis et al.~\cite{FerrariMoriarty15}:

\begin{defn}\label{def:AP_Hinc}
	Let $\mathcal{H}$ be the class of real valued functions $h\in\mathcal{C}^2(\mathcal{D})$ such that $\displaystyle\limsup_{x\rightarrow\underline{d}}\left|\frac{h(x)}{G(x)}\right|=\displaystyle\limsup_{x\rightarrow\bar{d}}\left|\frac{h(x)}{F(x)}\right|=0$, and $\mathbb{E}_x\left[\int^\infty_0e^{-rt}\left|(\mathcal{L}-r)h\left(X_t\right)\right| dt\right]<\infty$ for all $x\in\mathcal{D}$. Denote by $\mathcal{H}_{\text{inc}}$ (resp.~$\mathcal{H}_{\text{dec}}$) the set of all $h\in\mathcal{H}$ such that $x \mapsto (\mathcal{L}-r)h(x)$ is strictly \textit{positive} (resp. \textit{negative}) on $(\underline{d},x_h)$ and strictly \textit{negative} (resp. \textit{positive}) on $(x_h,\bar{d})$ for some $x_h\in\mathcal{D}$.
\end{defn}

When the leader payoff $h^1$ (resp. $h^2$) is in the class $\mathcal{H}_{\text{inc}}$~(resp. $\mathcal{H}_{\text{dec}}$) and \cref{asp:MS_asp} \ref{asp:MS_noPreemp} holds it follows that the transformed $\hat{H}^i$ is convex and then concave. Consequently, its smallest concave majorant is a straight line which is tangent to $\hat{H}^i$ at a unique point, and then coincides with $\hat{H}^i$.

This construction reduces to determining the tangency point of $\hat{H}^i$, corresponding to the transformed threshold $s^i(s^j)$.
Moreover, the coefficient $\widetilde{\omega}^i$ in \cref{pro:MS_OSP} corresponds to the slope of the above straight line segment and $\widetilde{\nu}^i$ corresponds to the $y$-intercept. From the fact that being a follower is assumed to sub-optimal at $L_i$, it follows that $\widetilde{\omega}^1,\widetilde{\nu}^2\geq0$ and $\widetilde{\nu}^1, \widetilde{\omega}^2\leq0$. We refer the reader to \cite{aid2017capacity} for further details.

The case where \cref{asp:MS_asp} \ref{asp:MS_noPreemp} is violated leads to a unique preemptive equilibrium in~\eqref{eq:MS_problem} see~\cite[Section 3]{aid2017capacity} for detailed discussion.

\section{Proof of \cref{thm:DA_equi}}\label{AP_equi}

We first state the full system of nonlinear equations characterizing the thresholds $\bm{s}^{i,\ast}$ and coefficients $\omega^i,\nu^i$ in \eqref{eq:DA_vFunc}. First for each $m\notin\{\underline{m},\overline{m}\}$ there are 6 equations:
\begin{subequations}\label{eq:AP_System}
	\begin{align}
	\begin{cases}\label{eq:AP_System1}
	\left\{V^1_{m+1}-D^1_m-K^1_m\right\}(s^{1,\ast}_m)\cdot W(s^{1,\ast}_m,s^{2,\ast}_m)-\left\{V^1_{m-1}-D^1_m\right\}(s^{2,\ast}_m)\cdot W(s^{1,\ast}_m,s^{1,\ast}_m) & \\
	\qquad\qquad\qquad\qquad\qquad\qquad\qquad\qquad-\left\{V^1_{m+1}-D^1_m-K^1_m\right\}'(s^{1,\ast}_m)\cdot \mathcal{W}(s^{1,\ast}_m,s^{2,\ast}_m) & =0,\\
	\left\{V^1_{m+1}-D^1_m-K^1_m\right\}(s^{1,\ast}_m)\cdot G(s^{2,\ast}_m)-\left\{V^1_{m-1}-D^1_m\right\}(s^{2,\ast}_m)\cdot G(s^{1,\ast}_m)-\omega^1_m\cdot \mathcal{W}(s^{1,\ast}_m,s^{2,\ast}_m) & =0,\\
	\left\{V^1_{m-1}-D^1_m\right\}(s^{2,\ast}_m)\cdot F(s^{1,\ast}_m)-\left\{V^1_{m+1}-D^1_m-K^1_m\right\}(s^{1,\ast}_m)\cdot F(s^{2,\ast}_m)-\nu^1_m\cdot \mathcal{W}(s^{1,\ast}_m,s^{2,\ast}_m) & =0,
	\end{cases}\\
	\begin{cases}
	\left\{V^2_{m-1}-D^2_m-K^2_m\right\}(s^{2,\ast}_m)\cdot W(s^{2,\ast}_m,s^{1,\ast}_m)-\left\{V^2_{m+1}-D^2_m\right\}(s^{1,\ast}_m)\cdot W(s^{2,\ast}_m,s^{2,\ast}_m) & \\
	\qquad\qquad\qquad\qquad\qquad\qquad\qquad\qquad-\left\{V^2_{m-1}-D^2_m-K^2_m\right\}'(s^{2,\ast}_m)\cdot \mathcal{W}(s^{1,\ast}_m,s^{2,\ast}_m) & =0,\\
	\left\{V^2_{m-1}-D^2_m-K^2_m\right\}(s^{2,\ast}_m)\cdot G(s^{1,\ast}_m)-\left\{V^2_{m+1}-D^2_m\right\}(s^{1,\ast}_m)\cdot G(s^{2,\ast}_m)-\omega^2_m\cdot \mathcal{W}(s^{2,\ast}_m,s^{1,\ast}_m) & =0,\\
	\left\{V^2_{m+1}-D^2_m\right\}(s^{1,\ast}_m)\cdot F(s^{2,\ast}_m)-\left\{V^2_{m-1}-D^2_m-K^2_m\right\}(s^{2,\ast}_m)\cdot F(s^{1,\ast}_m)-\nu^2_m\cdot \mathcal{W}(s^{2,\ast}_m,s^{1,\ast}_m)& =0,
	\end{cases}
	\end{align}
\end{subequations}
where $F$ and $G$ are the solutions to the ODE \eqref{eq:MS_ode}, and $W(\cdot,\cdot), \mathcal{W}(\cdot,\cdot)$ are from \eqref{eq:MS_Wronsk}.
In boundary regimes $m=\underline{m}$ and $m=\overline{m}$ we have the following systems of 3 equations:
\begin{subequations}\label{eq:AP_sysbd}
	\begin{align}
	&\begin{cases}\label{eq:AP_sysbd1}
	s^{1,\ast}_{\overline{m}} = \overline{d},  \omega^1_{\overline{m}}=0,\, \nu^1_{\underline{m}}=0,\\
	\left\{V^1_{\overline{m}-1}-D^1_{\overline{m}}\right\}(s^{2,\ast}_{\overline{m}})-\nu^1_{\overline{m}}\cdot G(s^{2,\ast}_{\overline{m}})=0,\\
	\left\{V^1_{\underline{m}+1}-D^1_{\underline{m}}-K^1_{\underline{m}}\right\}(s^{1,\ast}_{\underline{m}})\cdot F'(s^{1,\ast}_{\underline{m}})-\left\{V^1_{\underline{m}+1}-D^1_{\underline{m}}-K^1_{\underline{m}}\right\}'(s^{1,\ast}_{\underline{m}})\cdot F(s^{1,\ast}_{\underline{m}})=0,\\
	\left\{V^1_{\underline{m}+1}-D^1_{\underline{m}}-K^1_{\underline{m}}\right\}(s^{1,\ast}_{\underline{m}})-\omega^1_{\underline{m}}\cdot F(s^{1,\ast}_{\underline{m}})=0,
	\end{cases} \\
	&\begin{cases}
	s^{2,\ast}_{\underline{m}} = \underline{d},\,\omega^2_{\overline{m}}=0,\,\nu^2_{\underline{m}}=0,\\
	\left\{V^2_{\overline{m}-1}-D^2_{\overline{m}}-K^2_{\overline{m}}\right\}(s^{2,\ast}_{\overline{m}})\cdot G'(s^{2,\ast}_{\overline{m}})-\left\{V^2_{\overline{m}+1}-D^2_{\overline{m}}-K^2_{\overline{m}}\right\}'(s^{2,\ast}_{\overline{m}})\cdot G(s^{2,\ast}_{\overline{m}})=0,\\
	\left\{V^2_{\overline{m}-1}-D^2_{\overline{m}}-K^2_{\overline{m}}\right\}(s^{2,\ast}_{\overline{m}})-\nu^2_{\overline{m}}\cdot G(s^{2,\ast}_{\overline{m}})=0,\\
		\left\{V^2_{\underline{m}+1}-D^2_{\underline{m}}\right\}(s^{1,\ast}_{\underline{m}})-\omega^2_{\underline{m}}\cdot F(s^{1,\ast}_{\underline{m}})=0.
	\end{cases}\label{eq:boundary-vi-cases}
	\end{align}
\end{subequations}

\begin{proof}[Proof of \cref{thm:DA_equi}]
	To begin with, we argue that by construction of $V^i$'s in \eqref{eq:DA_vFunc} we have:
	\begin{enumerate}
		\item  $V^i_m\in C^2\!\left(\mathcal{D}\setminus(\bm{s}^i\cup\bm{s}^j)\right)\cap C^1\left(\mathcal{D}\setminus\bm{s}^j\right)\cap C\left(\mathcal{D}\right)$, for $\forall m\in\mathcal{M}$, $i\in\{1,2\}$, $i\neq j$;

\item $V^i_m$ is at most linear growth, i.e.
\begin{align}\label{eq:VTgrowth}
|V^i_m(x)|\leq C(1+|x|), \text{ for } \forall x\in\mathcal{D};
\end{align}
		\item  $V^i$'s satisfy the following system of variational inequalities (VIs) for $m<\overline{m}$:
		\begin{subequations}\label{eq:VI1}
			\begin{align}
			&V^1_{m+1}-K^1_m-V^1_m\leq0, &&\text{in }\mathcal{D},\label{eq:VI1a}\\	
			&V^1_{m-1}-V^1_m = 0, &&\text{in }\Gamma^{2,\ast}_m,\label{eq:VI1b}\\		
			&V^2_{m+1}-V^2_m = 0, &&\text{in }\Gamma^{1,\ast}_m,\label{eq:VI1c}\\
			&\max\left\{\left(\mathcal{L}-r\right)V^1_m+\pi^1_m, V^1_{m+1}-K^1_m-V^1_m\right\}=0, &&\text{in }\mathcal{D}\setminus\Gamma^{2,\ast}_m,\label{eq:VI1d}
			\end{align}
		\end{subequations}
		and for $m>\underline{m}$,
		\begin{subequations}\label{eq:VI2}
			\begin{align}
			&V^2_{m-1}-K^2_m-V^2_m\leq0, &&\text{in }\mathcal{D},\\
			&V^1_{m-1}-V^1_m = 0, &&\text{in }\Gamma^{2,\ast}_m,\\	
			&V^2_{m+1}-V^2_m = 0, &&\text{in }\Gamma^{1,\ast}_m,\\
			&\max\left\{\left(\mathcal{L}-r\right)V^2_m+\pi^2_m, V^2_{m-1}-K^2_m-V^2_m\right\}=0, &&\text{in }\mathcal{D}\setminus\Gamma^{1,\ast}_m.
			\end{align}
		\end{subequations}
	\end{enumerate}
	The smoothness of $V^i$ follows directly from the regularity of $F(\cdot)$ and $G(\cdot)$ and the piecewise construction. The second statement follows from the linear growth assumption imposed on $D^i$'s in \eqref{eq:MS_dcf} and signs of coefficients ${\omega}^i, {\nu}^i$. Note that this is a natural property of correct equilibrium payoffs since the best-response game payoff of player $i$ satisfies
	\begin{align}
		\min_{m\in\mathcal{M}}D^i_m(x)\leq \widetilde{V}^i_m(x;\ba^j) \leq \max_{m\in\mathcal{M}}D^i_m(x),
	\end{align}
	and a MNE is characterized as a fixed-point of best-responses. The key is the last assertion; we show \eqref{eq:VI1}, with \eqref{eq:VI2}  then following analogously. Comparing the system \eqref{eq:AP_System} for fixed $m$ with \eqref{eq:MS_Threshold} and \eqref{eq:MS_Coefficient}, one can see that $V^1_m-D^1_m$ is indeed the solution to the optimal stopping problem \eqref{eq:MS_problem} with
	\begin{align*}
	h^1_m(x)&:=V^1_{m+1}(x)-D^1_m(x)-K^1_m(x),\\
	l^1_m(x)&:=V^1_{m-1}(x)-D^1_m(x),
	\end{align*}
	which in turn brings the restriction on signs of $\bm{\omega}^i, \bm{\nu}^i$. Taking P1 as an example, $\omega^1_m$ corresponds to the slope of the straight line segment of its transformed smallest concave majorant which ought be positive for $m<\overline{m}$ and equal to zero  for $m=\overline{m}$. Similarly, $\nu^1_m$ corresponds to the $y$-intercept of that line segment which ought to be negative for $m > \underline{m}$ and equal to zero for $m=\underline{m}$. Furthermore, the signs of the derivatives of $F(\cdot), G(\cdot)$ imply that  $V^1_m-D^1_m$ is increasing. The assumption  $V^1_{m-1}(s^{2,\ast})\geq V^1_{m+1}(s^{2,\ast}_m)-K^1_m(s^{2,\ast}_m)$, for $m>\underline{m}$ plus the smallest concave majorant characterization then yields
	\begin{align*}
	& V^1_m-D^1_m = l^1_m \geq h^1_m = V^1_{m+1}-D^1_m-K^1_m, && \text{in }\Gamma^{2,\ast}_m,\\
	& V^1_m-D^1_m = h^1_m = V^1_{m+1}-D^1_m-K^1_m, && \text{in }\Gamma^{1,\ast}_m,\\
	& V^1_m-D^1_m \geq h^1_m = V^1_{m+1}-D^1_m-K^1_m, && \text{in }\mathcal{D}\setminus\left(\Gamma^{1,\ast}_m\cup\Gamma^{2,\ast}_m\right),
	\end{align*}
	which shows \eqref{eq:VI1a}. \eqref{eq:VI1b} and \eqref{eq:VI1c} are obtained directly from the construction of $V^i$'s, reflecting the payoff in $x$-states where the rival switches immediately. Lastly, to check \eqref{eq:VI1d}, recall that the discounted cash flows in \eqref{eq:MS_dcf} satisfy $\left(\mathcal{L}-r\right)D^i_m = -\pi^i_m$,  and by their definition $\left(\mathcal{L}-r\right)F=\left(\mathcal{L}-r\right)G=0$.
For $x \in \mathcal{D}\setminus (\Gamma^{2,\ast}_m \cup \Gamma^{1,\ast}_m)$ (``the no-action region'') we have by \eqref{eq:DA_vFunc} that $V^1_m(x) = D^1_{m}(x)+\omega^1_{m}F(x)+\nu^1_{m}G(x)$, so applying the operator $(\mathcal{L}-r)$ we get
 	\begin{align*}
	\left(\mathcal{L}-r\right)V^1_m+\pi^1_m = \left(\mathcal{L}-r\right)D^1_m+\pi^1_m = -\pi^1_m+\pi^1_m=0, \qquad x \in \mathcal{D}\setminus (\Gamma^{2,\ast}_m \cup \Gamma^{1,\ast}_m).
	\end{align*}
For $x \in \Gamma^{1,\ast}_m\setminus \Gamma^{1,\ast}_{m+1}$, we have $V^1_m(x) = V^1_{m+1}(x) - K^1_{m}(x) = D^1_{m+1}(x)+\omega^1_{m+1}F(x)+\nu^1_{m+1}G(x) - K^1_{m}$ so that
	\begin{align}
	\left(\mathcal{L}-r\right)V^1_m+\pi^1_m &= \left(\mathcal{L}-r\right)\left(V^1_{m+1}-K^1_m\right)+\pi^1_m\nonumber\\
	& = \left(\mathcal{L}-r\right)D^1_{m+1} +\left(\mathcal{L}-r\right)\left(-K^1_m\right)+\pi^1_m\label{eq:d1}\\
	& = \left(\mathcal{L}-r\right)\left(D^1_{m+1}-D^1_m-K^1_m\right)   < 0,\label{eq:d2}
	\end{align}
	where the last inequality  \eqref{eq:d2} is due to $D^1_{m+1}-D^1_m-K^1_m\in\mathcal{H}_{\text{inc}}$. Similar arguments apply to $x \in \Gamma^{1,\ast}_{m+1} \setminus \Gamma^{1,\ast}_{m+2}$ where two simultaneous switches by P1 will take place; by induction we conclude that $\left(\mathcal{L}-r\right)V^1_m+\pi^1_m < 0$ for $x \in \Gamma^{1,\ast}_m$, establishing \eqref{eq:VI1d}.
%

	We now prove $\left(\bm{s}^{1,\ast},\bm{s}^{2,\ast}\right)$ is a Nash equilibrium. To do so, we first consider the point of view of P1, letting  $\ba^1=\{\tau^1(n):n \geq 1 \}$ be her arbitrary strategy satisfying $(\ba^1,\bm{s}^{2,\ast})\in\mathcal{A}$, and $(\ts_n)_{n\geq 0}$ be the sequence of resulting switching times defined in \eqref{eq:PF_construct}, with $X_0=x$, $\tM_0=m$. As a first step, we use induction to establish that
\begin{align}
V^1_m(x) \geq  \mathbb{E}\Big[\int^{\ts_n}_0e^{-rt}\pi^1(X^x_t, \tM_{\tn(t)})dt-\sum^n_{k=1}\mathds{1}_{\{\tp_k=1\}}e^{-r\ts_k}\cdot K^1\big(X_{\ts_k}, \tM_{k-1}\big)+ e^{-r\ts_n}V^1_{\tM_n}(X^x_{\ts_n})\Big] \quad \forall n \ge 1.\label{eq:sigman-induction}
\end{align}

For $n=1$, since $\ts_1=\tau^1(1)\wedge\tau^{2,\ast}_m$, applying It\^{o}'s formula to the process $e^{-rt}V^1_{m}(X^x_t)$ over the interval $[0, \ts_1]$ and taking expectations yields
	\begin{subequations}
	\begin{align}
	V^1_m(x) & = \mathbb{E}\Big[-\int^{\ts_1}_0e^{-rt}\left(\mathcal{L}-r\right)V^1_{m}\left(X^x_t\right)dt+e^{-r\ts_1}V^1_m(X^x_{\ts_1})\Big]\nonumber\\
	&\geq  \mathbb{E}\Big[\int^{\ts_1}_0e^{-rt}\pi^1(X^x_t, \tM_0)dt+e^{-r\ts_1}V^1_m(X^x_{\ts_1})\Big]\label{eq:VT1}\\
	&\geq \mathbb{E}\Big[\int^{\ts_1}_0e^{-rt}\pi^1(X^x_t, \tM_0)dt + e^{-r\ts_1} \{ -K^1(X^x_{\ts_1}, \tM_0)\cdot \mathds{1}_{\{\ts_1=\tau^1(1)\}}+ V^1_{\tM_1}(X^x_{\ts_1})\} \Big]\label{eq:VT2},
	\end{align}
\end{subequations}
	where the inequality \eqref{eq:VT1} follows from \eqref{eq:VI1d} and the fact that $\ts_1\leq\tau^{2,\ast}_m$, and the inequality \eqref{eq:VT2} is due to \eqref{eq:VI1a} and \eqref{eq:VI1b}:
	\begin{align}
	\mathbb{E}\Big[V^1_m(X^x_{\ts_1})\Big]
	& \geq \mathbb{E}\Big[\mathds{1}_{\{\ts_1=\tau^1(1)\}}\Big\{V^1_{m+1}(X^x_{\ts_1})-K^1(X^x_{\ts_1}, \tM_0)\Big\}+\mathds{1}_{\{\ts_1=\tau^{2,\ast}_{m}\}}V^1_{m-1}(X^x_{\ts_1})\Big]\label{eq:VT3}.
	\end{align}
	
Next we show \eqref{eq:sigman-induction} for $n=2$.
	By construction, we have $\ts_2 = \tau^1(2)\wedge(\ts_1+\tau^{2,\ast}_{\tM_1})$. Consider the second-round sub-game started at initial state $X^x_{\ts_1}$; applying It\^{o}'s formula to the process $e^{-rt}V^1_{m+1}\big(X^{X^x_{\ts_1}}_t\big)$ over the interval $[0, \ts_2-\ts_1]$ and taking expectation conditional on $\tf^{(2)}_{\ts_1}$, cf.~\eqref{eq:PF_constructF}, we obtain
	\begin{align}
	V^1_{m+1}(X^x_{\ts_1})
	& \geq\mathbb{E}\Big[\int^{\ts_2-\ts_1}_0e^{-rt}\pi^1\big(X^{X^x_{\ts_1}}_t, m+1\big)dt+e^{-r(\ts_2-\ts_1)}V^1_{\tM_2}\big(X^x_{\ts_2}\big)\nonumber\\
	&\qquad\qquad\qquad\qquad\qquad\qquad -\mathds{1}_{\{\ts_2=\tau^1(2)\}}e^{-r(\ts_2-\ts_1)}\cdot K^1(X^x_{\ts_2}, m+1)\Big|\,\tf^{(2)}_{\ts_1}\Big],\nonumber
	\end{align}
analogously to \eqref{eq:VT2} and replacing $X^{X^x_{\ts_1}}_{\ts_2-\ts_1}$ by $X^x_{\ts_2}$ based on the strong Markov property of $(X_t)$. Furthermore, using $\int^{\ts_2}_{\ts_1}e^{-rt}\pi^1(X^x_t, m+1)dt = e^{-r\ts_1}\int^{\ts_2-\ts_1}_0e^{-rt}\pi^1\big(X^{X^x_{\ts_1}}_s, m+1\big)ds$
  we have
\begin{align}
\mathbb{E}\Big[\mathds{1}_{\{\ts_1=\tau^1(1)\}}e^{-r\ts_1}V^1_{m+1}(X^x_{\ts_1})\Big]
\geq\,& \mathbb{E}\Bigg[\mathds{1}_{\{\ts_1=\tau^1(1)\}}\cdot\Big[\int^{\ts_2}_{\ts_1}e^{-rt}\pi^1\big(X^x_t, m+1\big)dt+e^{-r\ts_2}V^1_{\tM_2}\big(X^x_{\ts_2}\big)\Big]\nonumber\\
&\qquad\qquad\qquad-\mathds{1}_{\{\ts_2=\tau^1(2)\}}\mathds{1}_{\{\ts_1=\tau^1(1)\}}e^{-r\ts_2}\cdot K^1(X^x_{\ts_2}, m+1)\Bigg].\label{eq:VT4}
\end{align}
Similarly, we have
	\begin{align}
\mathbb{E}\Big[\mathds{1}_{\{\ts_1=\tau^{2,\ast}_m\}}e^{-r\ts_1}V^1_{m-1}(X^x_{\ts_1})\Big] & \geq \mathbb{E}\Bigg[\mathds{1}_{\{\ts_1=\tau^{2,\ast}_m\}}\cdot\Big[\int^{\ts_2}_{\ts_1}e^{-rt}\pi^1\big(X^x_t, m-1\big)dt+e^{-r\ts_2}V^1_{\tM_2}\big(X^x_{\ts_2}\big)\Big]\qquad\nonumber\\
&\qquad\qquad\qquad\qquad-\mathds{1}_{\{\ts_2=\tau^1(2)\}}\mathds{1}_{\{\ts_1=\tau^{2,\ast}_{m}\}}e^{-r\ts_2}\cdot K^1(X^x_{\ts_2}, m-1)\Bigg].\label{eq:VT5}
\end{align}
Substituting \eqref{eq:VT4} - \eqref{eq:VT5} into  \eqref{eq:VT2}, we obtain
\begin{align}
V^1_m(x) & \geq \mathbb{E}\Big[\int^{\ts_1}_0e^{-rt}\pi^1(X^x_t, \tM_0)dt +\int^{\ts_2}_{\ts_1}e^{-rt}\pi^1(X^x_t, m+1)dt \nonumber\\
&\qquad + e^{-r\ts_2}V^1_{\tM_2}(X^x_{\ts_2}) - \mathds{1}_{\{\ts_1=\tau^1(1)\}}e^{-r\ts_1}K^1(X^x_{\ts_1}, \tM_0) \nonumber\\
&\qquad-\mathds{1}_{\{\ts_2=\tau^1(2)\}}\mathds{1}_{\{\ts_1=\tau^1(1)\}}e^{-r\ts_2}\cdot K^1(X^x_{\ts_2}, m+1)
-\mathds{1}_{\{\ts_2=\tau^1(2)\}}\mathds{1}_{\{\ts_1=\tau^{2,\ast}_{m}\}}e^{-r\ts_2}\cdot K^1(X^x_{\ts_2}, m-1)\nonumber\\
& = \mathbb{E}\Big[\int^{\ts_2}_0e^{-rt}\pi^1(X^x_t, \tM_{\tn(t)})dt-\sum^2_{k=1}\mathds{1}_{\{\tp_k=1\}}e^{-r\ts_k}\cdot K^1\big(X_{\ts_k}, \tM_{k-1}\big)+ e^{-r\ts_2}V^1_{\tM_2}(X^x_{\ts_2})\Big].\nonumber
\end{align}

Iterating this argument for $n=3,\ldots,$ establishes \eqref{eq:sigman-induction}. Let us remark that the above works without any modifications in the boundary regimes $ m \in \{\underline{m}, \overline{m}\}$, where $\tau^1(n)$ or $\tau^{2,\ast}_m$ are set to be infinite. Since $V^1_m$ is at most of linear growth from \eqref{eq:VTgrowth} and admissibility of $(\ba^1, \bm{s}^{2,\ast})$ requires $\lim_{n\rightarrow\infty}\ts_n = +\infty$, dominated convergence theorem implies
\begin{align}\label{eq:NEineq}
V^1_m(x) \geq \mathbb{E}\Big[\int^{\infty}_0e^{-rt}\pi^1(X^x_t, \tM_{\tn(t)})dt-\sum_{k=1}\mathds{1}_{\{\tp_k=1\}}e^{-r\ts_k}\cdot K^1\big(X_{\ts_k}, \tM_{k-1}\big)\Big]=J^1_m(x;\ba^1, \bm{s}^{2,\ast}).
\end{align}
Similarly for P2 we obtain that
\begin{align*}
V^2_m(x)\geq J^2_m(x;\bm{s}^{1,\ast},\ba^2), \text{ for }\forall (\bm{s}^{1,\ast}, \ba^2)\in\mathcal{A}.
\end{align*}
{Last but not least, one can verify that replacing $\ba^1$ by $\bm{s}^{1,\ast}$ in above argument leads to $\ts_1=\tau^{1,\ast}_{m}\wedge\tau^{2,\ast}_{m}$ so that $\left(\mathcal{L}-r\right)V^1_{m}\left(X^x_t\right)=-\pi^1(X^x_t, \tM_0)$ on $[0, \ts_1)$ and $V^1_{m}(X^x_t)=V^1_{m+1}(X^x_t)-K^1_{m}(X^x_t)$ at $\ts_1 =\tau^{1,\ast}_{m}$ and $V^1_{m}(X^x_t)=V^1_{m-1}(X^x_t)$ at $\ts_1 =\tau^{2,\ast}_{m}$. These turn inequalities in \eqref{eq:VT1} and \eqref{eq:VT2} into equalities, and inductively yield
\begin{align*}
V^1_m(x) = J^1_m(x;\bm{s}^{1,\ast}, \bm{s}^{2,\ast}),
\end{align*}
which, combining with \eqref{eq:NEineq},  completes the proof.}
\end{proof}

\section{Proof of \cref{pro:BTS_conv}}\label{AP_BestRgame payoff}
Let $X_0=x\in\mathcal{D}$ , $M_0=m\in\mathcal{M}$, and fix $\bm{s}^j$.  The best-response of player $i$ with $N^i\geq 1$ controls is
\begin{align}
\widetilde{V}^{i,(N^i)}_m(x\,;\bm{s}^j) = \sup_{\ba^{i,(N^i)}\in\mathcal{A}^{i,(N^i)}}J^i_m(x\,;\ba^{i,(N^i)},\bm{s}^j),
\end{align}
where
\begin{align}
\mathcal{A}^{i, (N^i)}:= \left\{(\ba^i, \bm{s}^j)\in\mathcal{A}: \tau^i(n)=+\infty, n > \tn(i, N^i) \right\},
\end{align}
with $\tn(i,N^i)$ defined in \eqref{eq:PF_constructIk} denotes the round at which player $i$ exercises her $N^i$-th switch. Since $\mathcal{A}^{i, (N^i)}\subseteq\mathcal{A}^{i, (N^i+1)}$ we have that $N^i \mapsto \widetilde{V}^{i, (N^i)}_m(x\,;\bm{s}^j)$ is \textit{non-decreasing}. Moreover, since $\widetilde{V}^{i,(N^i)}_m(x\,;\bm{s}^j)$ is bounded from above by $\max_m D^i_m(x)$, $\lim_{n\rightarrow\infty}\widetilde{V}^{i,(N^i)}_m(x\,;\bm{s}^j)$ is well-defined. It remains to show that this limit is $\widetilde{V}^i_m(x\,;\bm{s}^j)$.

Because $\mathcal{A}^{i, (N^i)}\subseteq\{\ba^i:(\ba^i,\bm{s}^j)\in\mathcal{A}\}$, we trivially obtain
\begin{align}\label{eq:AP4}
\lim_{N^i\rightarrow\infty}\widetilde{V}^{i, (N^i)}_m(x\,;\bm{s}^j)\leq\widetilde{V}^i_m(x\,;\bm{s}^j).
\end{align}
To obtain the opposite inequality, for any $\varepsilon>0$, let $\bm{\alpha}^{i}_\varepsilon:=\{\tau^i_{\varepsilon}(n): n\geq1\}$ (which depends on $x$) be a $\varepsilon$-optimal strategy satisfying $(\ba^i_\varepsilon,\bm{s}^j)\in\mathcal{A}$ and
\begin{align}\label{eq:AP3}
J^i_m(x\,;\bm{\alpha}^i_\varepsilon,\bm{s}^j) \geq \widetilde{V}^i_m(x\,;\bm{s}^j)-\varepsilon.
\end{align}
Now for a fixed $N^i\geq1$ we define the respective truncated $N^i$-finite strategy $\ba^{i,(N^i)}_\varepsilon:=\{\tau^{i,(N^i)}_{\varepsilon}(n):n\geq1\}$ as
\begin{align}
\tau^{i,(N^i)}_{\varepsilon}(n)=\begin{cases}
\tau^i_{\varepsilon}(n), & n\leq \tn(i, N^i)\\
+\infty,&o.w.
\end{cases}
\end{align}
Thus, the truncated strategy stops switching completely after the first $N^i$ switches. Denote by $M^{(N^i)}_t$ the resulting macro regime and by $(\sigma^{i, (N^i)}_k)_{k\leq N^i}$ the sequence of switching times of player $i$, cf.~\eqref{eq:PF_constructIk}, based on $(\bm{\alpha}^{i,(N^i)}_\varepsilon, \bm{s}^j)$, which we compare against the corresponding $M^{(\infty)}_t$ and $(\sigma^{i,(\infty)}_k)_{k\geq1}$ based on the non-truncated $(\bm{\alpha}^{i}_\varepsilon, \bm{s}^j)$. By the construction of the truncation,
\begin{align*}
\sigma^{i, (N^i)}_k=\sigma^i_k, \;\text{ for }k\leq N^i,\qquad M^{(N^i)}_t = M_t, \; \text{ for }t\leq\sigma^{i}_{N^i}, 
\end{align*}
and the two cashflows completely match up to $\sigma^{i,(\infty)}_{N^i}$. In the truncated version, thereafter only the other player $i$ applies her controls.
Since $\sigma^{i,(\infty)}_{N^i}\rightarrow\infty$ as $N^i\rightarrow\infty$ from admissibility of $\ba^i_\varepsilon$, it follows  that there exists $N^\varepsilon>1$ s.t.~for $\forall N >N^\varepsilon$
\begin{subequations}\label{eq:AP1}
\begin{align}
\mathbb{E}_{x,m}\left[\int^\infty_{\sigma^i_{N}}e^{-rt}|\pi^i\bigl(X_t,M^{(\infty)}_t\bigr)| \, dt \right] < \varepsilon; \\
\mathbb{E}_{x,m}\left[\int^\infty_{\sigma^i_{N}}e^{-rt}|\pi^i(X_t,M^{(N)}_t) |\, dt \right] < \varepsilon; \\
\mathbb{E}_{x,m}\left[\sum_{k=N+1}^\infty e^{-r\sigma^{i,(\infty)}_k} K^i\left(X_{\sigma^{i,(\infty)}_k}, \tM^{(\infty)}_{\tn(i, k)-1}\right) \cdot \right] < \varepsilon.
\end{align}
\end{subequations}
For the second bound we use the fact that $M$ has a finite state space so that $|\pi^i(X_t,M^{(N)}_t)| \le \max_m |\pi^i(X_t,m)|$ which still satisfies the growth condition. Using \eqref{eq:AP1} and \eqref{eq:PF_game payoff} we have for $N > N^\varepsilon$
\begin{multline}\label{eq:AP10} \big| J^i_m(x\,;\bm{\alpha}^{i, (N)}_\varepsilon,\bm{s}^j) - J^i_m(x\,;\bm{\alpha}^{i}_\varepsilon,\bm{s}^j) \big| \le
\E_{x,m}\Bigl[ \int^\infty_{\sigma^{i,\infty}_{N}}e^{-rt} \big(|\pi^i\left(X_t,M^{(\infty)}_t\right)| + |\pi^i(X_t,M^{(N)}_t)|\big) dt  \\
+ \sum_{k=N+1}^\infty e^{-r\sigma^{i,(\infty)}_k} K^i\left(X_{\sigma^{i,(\infty)}_k}, \tM^{(\infty)}_{\tn(i,k)-1}\right) \Bigr] \le 3\varepsilon.
\end{multline}
By Fatou's lemma and \eqref{eq:AP10} we obtain
\begin{align}
\liminf_{N\rightarrow\infty}J^i_m(x\,;\bm{\alpha}^{i, (N)}_\varepsilon,\bm{s}^j) &=\liminf_{N\rightarrow\infty}\E_{x,m}\left[\int_0^{\infty}e^{-rt}\! \pi^i(X_t,M^{(N)}_t)dt-\sum_{k=1}^{N} K^i\left(X_{\sigma^{i,(N)}_k}, \tM^{(N)}_{\tn(i, k)-1}\right) \cdot e^{-r\sigma^{i,(N)}_k}\right]\nonumber\\
&\geq J^i_m(x\,;\bm{\alpha}^{i}_\varepsilon,\bm{s}^j)-3\varepsilon.
\end{align}
In turn, from \eqref{eq:AP3}, we get
\begin{align}
\liminf_{N\rightarrow\infty}\widetilde{V}^{i,(N)}_m(x\,;\bm{s}^j)\geq\liminf_{N\rightarrow\infty}J^i_m(x\,;\bm{\alpha}^{i, (N)}_\varepsilon,\bm{s}^j)\geq \widetilde{V}^i_m(x\,;\bm{s}^j)-4\varepsilon,
\end{align}
which along with \eqref{eq:AP4} and letting $\varepsilon\downarrow0$ completes the proof.

\section{Dynamics of $\tM^\ast$ in Threshold-type Equilibrium}\label{ap:DofM}
In this Appendix, we present computational details related to the macro market equilibrium described in Section~\ref{sec:SMG_summary}. While the computations are largely classical, we state them for the completeness and the reader's convenience. For ease of presentation we consider the case where $s^{i,\ast}_m$'s are in strictly ascending/descending order in terms of $m$, so that all transitions of $M^\ast$ are by $\pm 1$. 

\subsection{Transition probabilities of $\cM^\ast$ in interior states}\label{AP_transP}
Conditional on $\cM^\ast_{n-1}\in\{m^-,m^+\}$ and $m\notin\partial\mathcal{M}$, we have that ($\tilde{X}^{(n)}$ being defined in \eqref{eq:STS_constructX})
\begin{align}
\cM^\ast_{n} =\begin{cases}
(m+1)^+,&\text{ if $\tilde{X}^{(n)}_t$ hits $s^{1,\ast}_m$ before $s^{2,\ast}_m$},\\
(m-1)^-,&\text{ if $\tilde{X}^{(n)}_t$ hits $s^{2,\ast}_m$ before $s^{1,\ast}_m$},
\end{cases}
\end{align}
with the starting position $\tilde{X}^{(n)}_0 = s^{1,\ast}_{m-1}$ if $\cM^\ast_n = m^+$ and $\tilde{X}^{(n)}_0 = s^{2,\ast}_{m+1}$ if $\cM^\ast_n = m^-$. Let us use $X^x$ to denote a generic copy of $X$ started at $X_0 = x$ and consider the two-sided passage times
\begin{align*}
\tau(x;a,b):=\inf\{t\geq0\,:\,X^x_t\leq a \text{ or } X^x_t\geq b\}, \qquad (a, b)\supset x.
\end{align*}
Thus we have:
\begin{equation}\label{eq:SMG_transP}
\begin{split}
\bm{P}_{m^+, (m+1)^+}=\mathbb{P}\big[X^{s^{1,\ast}_{m-1}}_{\tau(s^{1,\ast}_{m-1};s^{2,\ast}_m,s^{1,\ast}_m)}=s^{1,\ast}_m\big], \qquad \bm{P}_{m^+, (m-1)^-}=\mathbb{P}\big[X^{s^{1,\ast}_{m-1}}_{\tau(s^{1,\ast}_{m-1};s^{2,\ast}_m,s^{1,\ast}_m)}=s^{2,\ast}_m\big],\\
\bm{P}_{m^-, (m+1)^+}=\mathbb{P}\big[X^{s^{2,\ast}_{m+1}}_{\tau(s^{2,\ast}_{m+1};s^{2,\ast}_m,s^{1,\ast}_m)}=s^{1,\ast}_m\big], \qquad \bm{P}_{m^-, (m-1)^-}=\mathbb{P}\big[X^{s^{2,\ast}_{m+1}}_{\tau(s^{2,\ast}_{m+1};s^{2,\ast}_m,s^{1,\ast}_m)}=s^{2,\ast}_m\big].\\
\end{split}
\end{equation}
In turn we recall that \eqref{eq:SMG_transP} can evaluated via the scale function $S(\cdot)$ of $(X_t)$ (see Ch~VII.3 of \cite{revuz2013continuous}):
\begin{align}	\mathbb{P}\big[X^x_{\tau(x;a,b)}=b\big]=\frac{S(x)-S(a)}{S(b)-S(a)},\qquad\mathbb{P}\big[X^x_{\tau(x;a,b)}=a\big]=\frac{S(b)-S(x)}{S(b)-S(a)}.
\end{align}
Recall that $S$ is the continuous,  increasing general solution to the ODE $\mathcal{L}S=0$ that is available in closed-form for linear diffusions.

\textit{Ornstein-Uhlenbeck Process}. The scale function $S(\cdot)$ solves
\begin{align*}
\mu(\theta-x)S'(x)+\frac{1}{2}\sigma^2S''(x)=0, \qquad \Rightarrow \quad
S_{OU}(x)=\int_{-\infty}^{x}e^{\frac{\mu}{\sigma^2}(z-\theta)^2}dz,\quad x\in\mathbb{R}.
\end{align*}

\textit{Geometric Brownian motion}. The scale function $S(\cdot)$ solves
\begin{align*}
\mu xS'(x)+\frac{1}{2}\sigma^2x^2S''(x)=0, \qquad \Rightarrow \quad
S_{GBM}(x)=x^{1-2\mu/\sigma^2},\quad x\in\mathbb{R}_{+}.
\end{align*}

\subsection{Transition probabilities of $\cM^\ast$ in boundary regimes}

Recall that at regimes $\underline{m}^-, \overline{m}^+$ only one player can switch.
For a recurrent $X$, she is guaranteed to do so eventually and  we simply have
\begin{align}\label{eq:AP_bdProb}
\bm{P}_{\underline{m}^-, (\underline{m}+1)^+}=\bm{P}_{\overline{m}^+, (\overline{m}-1)^-}=1.
\end{align}

When $X$ is transient, one player will be permanently dominant in the long-run and at least one of the following absorbing probabilities
\begin{equation}\label{eq:SMS_abprob}
\begin{split}
P_{\underline{m}^a}&:=\mathbb{P} \big[ X^{s^{2,\ast}_{\underline{m}+1}}_t \le s^{1,\ast}_{\underline{m}} \forall t \big] = \lim_{d\downarrow\underline{d}}\mathbb{P}\big[X^{s^{2,\ast}_{\underline{m}+1}}_{\tau(s^{2,\ast}_{\underline{m}+1};d,s^{1,\ast}_{\underline{m}})}=d\big],\\ 
P_{\overline{m}^a}&:=\lim_{u\uparrow\overline{d}}\mathbb{P}\big[X^{s^{1,\ast}_{\overline{m}-1}}_{\tau(s^{1,\ast}_{\overline{m}-1};s^{2,\ast}_{\overline{m}},u)}=u\big],\quad 
\end{split}
\end{equation}
are strictly positive. Namely, when $M^\ast$ enters a boundary regime, there is a positive probability that $M^\ast$ will stay constant henceforth. To address this, we use the states $\{\underline{m}^a,\,\overline{m}^a\}$ of the extended $\cM$ that are entered from the regime adjacent to the corresponding boundary. For instance, three transitions are possible from  $\cM^\ast_{n-1}\in\{(\overline{m}-1)^-, (\overline{m}-1)^+\}$:
\begin{align}
\cM^\ast_{n}=\begin{cases} \text{up to}\quad
\overline{m}^a,&\text{ if $\tilde{X}^{(n)}_t$ hits $s^{1,\ast}_{\overline{m}-1}$ before $s^{2,\ast}_{\overline{m}-1}$ and $\cM^\ast$ \textit{gets absorbed}},\\
\text{up to}\quad \overline{m}^+,&\text{ if $\tilde{X}^{(n)}_t$ hits $s^{1,\ast}_{\overline{m}-1}$ before $s^{2,\ast}_{\overline{m}-1}$ and $\cM^\ast$ \textit{is not absorbed}},\\
\text{down to}\quad (\overline{m}-2)^-,&\text{ if $\tilde{X}^{(n)}_t$ hits $s^{2,\ast}_{\overline{m}-1}$ before $s^{1,\ast}_{\overline{m}-1}$ }.
\end{cases}
\end{align}
Probabilistically, we may interpret absorption as an independent ``coin toss'' at the transition out of $(\overline{m}-1)^{\pm}$, so that
using \eqref{eq:SMS_abprob}
\begin{align}
\bm{P}_{(\overline{m}-1)^+, \overline{m}^a} &= \mathbb{P}\big[X^{s^{1,\ast}_{\overline{m}-2}}_{\tau(s^{1,\ast}_{\overline{m}-2};s^{2,\ast}_{\overline{m}-1},s^{1,\ast}_{\overline{m}-1})}=s^{1,\ast}_{\overline{m}-1}\big]\times P_{\overline{m}^a},\\
\bm{P}_{(\overline{m}-1)^+, \overline{m}^+} &= \mathbb{P}\big[X^{s^{1,\ast}_{\overline{m}-2}}_{\tau(s^{1,\ast}_{\overline{m}-2};s^{2,\ast}_{\overline{m}-1},s^{1,\ast}_{\overline{m}-1})}=s^{1,\ast}_{\overline{m}-1}\big]\times (1-P_{\overline{m}^a}),\\
\bm{P}_{(\overline{m}-1)^+, (\overline{m}-2)^-} & = 1 - \bm{P}_{(\overline{m}-1)^+, \overline{m}^a} - \bm{P}_{(\overline{m}-1)^+, \overline{m}^+}.
\end{align}
Similar computations are used for $\bm{P}_{(\overline{m}-1)^-, \cdot}, \bm{P}_{(\underline{m}+1)^-, \cdot}, \bm{P}_{(\underline{m}-1)^+, \cdot}$.

\subsection{Average sojourn times of $\cM^\ast$}\label{AP_sojourntime}
The expected sojourn times $\vec{\xi}$ of $M^\ast$ in \eqref{eq:SMS_xi}, or equivalently expected inter-arrival times between jumps of $\cM^\ast$ correspond to the mean two-sided exit time, $\delta_{ab}(x):=\mathbb{E}\big[\tau(x; a, b)\big], x\in(a,b)$, namely
\begin{align}\label{eq:xi}
\xi_{m^-}:=\mathbb{E}\big[\tau(s^{2,\ast}_{m+1};s^{2,\ast}_m,s^{1,\ast}_m)\big],\quad  \xi_{m^+}:=\mathbb{E}\big[\tau(s^{1,\ast}_{m-1};s^{2,\ast}_m,s^{1,\ast}_m)\big],
\end{align}
for $m\notin\partial\mathcal{M}$. Applying Dynkin's formula, it is well known that $\delta_{ab}(\cdot)$ solves the ODE $\mathcal{L}\delta+1=0$ with boundary conditions $\delta_{ab}(a)=\delta_{ab}(b)=0$.

\textit{Geometric Brownian motion.} The expected exit time $\delta_{ab}(\cdot)$ is a solution to
\begin{align*}
\mu x\delta'_{ab}(x)+\frac{1}{2}\sigma^2 x^2\delta''_{ab}(x)+1=0,\qquad x\in(a,b),\quad\text{and }\quad \delta_{ab}(a)=\delta_{ab}(b)=0.
\end{align*}
Solving we obtain
\begin{align*}	\delta_{ab}(x)=\left(\frac{1}{2}\sigma^2-\mu\right)^{-1}\left\{  \ln\left(\frac{x}{a}\right)+\ln(\frac{a}{b}) \frac{\left(x^{1-2\mu/\sigma^2}-a^{1-2\mu/\sigma^2}\right)}{b^{1-2\mu/\sigma^2}-a^{1-2\mu/\sigma^2}} \right\}, \qquad x \in (a,b).
\end{align*}

\subsection{One-sided exit times and sojourn times in boundary regimes}
To compute mean sojourn times $\xi_{\underline{m}^-},\xi_{\overline{m}^+}$ we make use of the one-sided passage times \begin{align*}
\tau(x;s):=\inf\{t\geq0\,:\,X^x_t = s\}.
\end{align*}
If  the corresponding absorbing probability \eqref{eq:SMS_abprob} is zero, we have
\begin{align*}
\xi_{\underline{m}^-}:=\mathbb{E}\big[\tau(s^{2,\ast}_{\underline{m}+1};s^{1,\ast}_{\underline{m}})\big],\qquad\quad \xi_{\overline{m}^+}:=\mathbb{E}\big[\tau(s^{1,\ast}_{\overline{m}-1};s^{2,\ast}_{\overline{m}})\big].
\end{align*}
Otherwise,  we condition on the exit time $\tau$ being finite, denoting
$ \delta_s(x) = \E[ \tau(x;s)\mathds{1}_{\{\tau(x;s)<\infty\}}]$. Then, e.g.
\begin{align}\label{eq:AP_NRsoTime}
\xi_{\overline{m}^+} = \mathbb{E}\big[\tau(s^{1,\ast}_{\overline{m}-1};s^{2,\ast}_{\overline{m}})\big|\tau(s^{1,\ast}_{\overline{m}-1};s^{2,\ast}_{\overline{m}})<\infty\big]
=\frac{1}{1-P_{\overline{m}^a}} \delta_{s^{2,\ast}_{\overline{m}}}(s^{1,\ast}_{\overline{m}-1}).
\end{align}

\textbf{Computing $\delta_{s}(x)$.}
We re-write \begin{align*}
\delta_s(x)= -\frac{\partial}{\partial \rho}\mathbb{E}_x\Big[e^{-\rho\tau(x;s)}\mathds{1}_{\{\tau(x;s)<\infty\}}\Big]\Bigg|_{\rho=0},
\end{align*}
and use the  well-known result~\cite{darling1953first} about the Laplace transform of  $\tau(x;s)$,
\begin{equation}\label{eq:AP_soTimeRec}
\E_x\Big[e^{-\rho\tau(x;s)}\mathds{1}_{\{\tau(x;s)<\infty\}}\Big]=\begin{cases}
\frac{F(x;\rho)}{F(a;\rho)}, &\qquad \text{if } x\leq s,\\
\frac{G(x;\rho)}{G(a;\rho)}, &\qquad \text{if } x\geq s,
\end{cases}
\end{equation}
where $F(\cdot;\rho)$ and $G(\cdot;\rho)$ are solutions to $(\mathcal{L}-\rho)u=0$ (recall \cref{pro:MS_OSP}) and we emphasize their dependence on the Laplace parameter $\rho$.

\textit{Ornstein-Uhlenbeck process.} Following from \eqref{eq:AP_soTimeRec},  the expected first passage time $\delta_s(x)$ to a level $s$ is admitted as
\begin{align}\label{eq:passage-a}
\delta_s(x) =\frac{\sqrt{2\pi}}{\mu} \left\{ \left[\int^{(s-\theta)\sqrt{\frac{2\mu}{\sigma^2}}}_{(x-\theta)\sqrt{\frac{2\mu}{\sigma^2}}}\Phi\left(z\right)e^{\frac{1}{2}z^2}dz \right] 1_{\{ s \geq x\}} + \left[ \int^{(\theta-s)\sqrt{\frac{2\mu}{\sigma^2}}}_{(\theta-x)\sqrt{\frac{2\mu}{\sigma^2}}}\Phi\left(z\right)e^{\frac{1}{2}z^2}dz
 \right] 1_{\{ s < x\}} \right\},
\end{align}
where $\Phi$ is the standard Gaussian cumulative distribution function. The expected exit time from an interval $x \in (a,b)$, $\delta_{ab}(x)$ can then be obtained via
\begin{align}\label{eq:passage-ab}
\delta_{ab}(x)=\frac{\delta_a(x)\delta_b(a)+\delta_b(x)\delta_a(b)-\delta_a(b)\delta_b(a)}{\delta_b(a)+\delta_a(b)}.
\end{align}

\textit{Geometric Brownian motion.}
GBM is non-recurrent; suppose that $\mu-\frac{1}{2}\sigma^2>0$ so that	$\overline{m}^a$ is the absorbing regime. Then from \eqref{eq:AP_soTimeRec} we compute
\begin{align*}
\delta_{s^{2,\ast}_{\overline{m}}}(s^{1,\ast}_{\overline{m}-1}) =\mathbb{E}\Big[\tau(s^{1,\ast}_{\overline{m}-1};s^{2,\ast}_{\overline{m}})\mathds{1}_{\{\tau(s^{1,\ast}_{\overline{m}-1};s^{2,\ast}_{\overline{m}})<\infty\}}\Big]
= \frac{1}{\mu-\frac{1}{2}\sigma^2} \cdot  \ln\Big(\frac{s^{1,\ast}_{\overline{m}-1}}{s^{2,\ast}_{\overline{m}}}\Big)\cdot\Big(\frac{s^{1,\ast}_{\overline{m}-1}}{s^{2,\ast}_{\overline{m}}}\Big)^{1-\frac{2\mu}{\sigma}}.
\end{align*}

\subsection{Expected number of switches until absorption under non-recurrent $(X_t)$}
Without loss of generality, let us assume that $\overline{m}$ is the absorbing regime, so that $\lim_{t \to \infty}M^\ast_t=\overline{m}$. Define
\begin{align*}
\upsilon^{up}_e:=&\mathbb{E}\bigg[\text{\#\textit{up-moves} before }\cM^\ast\text{ hits }\overline{m}^a\,|\,\cM^\ast_0=e\bigg], \qquad e\in E\setminus\{\overline{m}^a\},\\
\upsilon^{dn}_e:=&\mathbb{E}\bigg[\text{\#\textit{down-moves} before }\cM^\ast\text{ hits }\overline{m}^a\,|\,\cM^\ast_0=e\bigg], \qquad e\in E\setminus\{\overline{m}^a\},
\end{align*}
where $E$ is the state space of $\cM$ from \eqref{eq:state-E} and $\bm{P}$ is the transition matrix of $\cM^\ast$. Let  $\bm{P}_{-a}$ be the sub-matrix with the row and column corresponding to $\overline{m}^a$ removed.  Define $\vec{\upsilon}^{up}:=[\upsilon^{up}_{\underline{m}^-}, \cdots, \upsilon^{up}_{\overline{m}^+}]^T$, $\vec{\upsilon}^{dn}:=[\upsilon^{dn}_{\underline{m}^-}, \cdots, \upsilon^{dn}_{\overline{m}^+}]^T$,   $\vec{P}^{up}:=[P^{up}_{\underline{m}^-},\cdots, P^{up}_{\overline{m}^+}]^T$ and $\vec{P}^{dn}:=[P^{dn}_{\underline{m}^-},\cdots, P^{dn}_{\overline{m}^+}]^T$, with
\begin{align*}
\begin{cases}
P^{up}_{m^{\pm}}:= \bm{P}_{m^{\pm}, (m+1)^+},\text{ for } m<\overline{m}-1,&  \qquad P^{up}_{\overline{m}^+}:= 0,\\
P^{dn}_{m^{\pm}}:= \bm{P}_{m^{\pm}, (m-1)^-},\text{ for } m>\underline{m}, &\qquad P^{dn}_{\underline{m}^-}:= 0,\\
\end{cases}
\end{align*}
and $P^{up}_{(\overline{m}-1)^{\pm}}:= \bm{P}_{(\overline{m}-1)^{\pm}, \overline{m}^+}+\bm{P}_{(\overline{m}-1)^{\pm}, \overline{m}^a}.$ Then we obtain
\begin{align}
\vec{\upsilon}^{up}=(\bm{I}-\bm{P_{-a}})^{-1}\vec{P}^{up}\quad\text{and}\qquad \vec{\upsilon}^{dn}=(\bm{I}-\bm{P_{-a}})^{-1}\vec{P}^{dn},
\end{align}
and after taking care of the initial condition $X_0 = x$ which leads to a non-standard first transition probability, obtain the expected number of switches defined in \eqref{eq:SMS_numsw}
\begin{align}\notag
\mathbb{N}^1_m(x)&=\mathbb{P}_{x, m}\big[\cM^\ast_1=(m+1)^+\big]\times \big(\upsilon^{up}_{(m+1)^+}+1\big) +\mathbb{P}_{x,m}\big[\cM^\ast_1=(m-1)^-\big]\times \upsilon^{up}_{(m-1)^-},\\
\mathbb{N}^2_m(x)&=\mathbb{P}_{x, m}\big[\cM^\ast_1=(m+1)^+\big]\times \upsilon^{dn}_{(m+1)^+} +\mathbb{P}_{x,m}\big[\cM^\ast_1=(m-1)^-\big]\times \big(\upsilon^{dn}_{(m-1)^-}+1\big).\label{eq:expected-nim}
\end{align}
In general $\mathbb{P}_{x,m}\big[\cM^\ast_1=(m+1)^+\big]=\mathbb{P}(X^x_{\tau(x;s^{2,\ast}_m,s^{1,\ast}_m)}=s^{1,\ast}_m)$; however one must also consider the situation when $M^\ast_0=\overline{m}-1$, so that
$\cM^\ast_1 = \overline{m}^a$ becomes possible, and also $M^\ast_0=\overline{m}$, in which case one must assign $\cM_0 = \overline{m}^a$ or $\cM_0 = \overline{m}^+$ according to the probability $P_{\overline{m}^a}(x) :=\lim_{u\uparrow\overline{d}}\mathbb{P}\big[X^{x}_{\tau(x;s^{2,\ast}_{\overline{m}},u)}=u\big].$

\subsection{Non-recurrent $(X_t)$: expected time until absorption} To begin with, we need the expected number of visits to each non-absorbing regime. Define
\begin{align*}
\mathcal{V}_{e_1,e_2}:=\mathbb{E}\bigg[\text{\#\textit{visits} to $e_2$ before }\cM^\ast\text{ reaches }\overline{m}^a\,\bigg|\,\cM^\ast_0=e_1\bigg], \quad\text{ for all }e_1, e_2\in E\setminus\{\overline{m}^a\},
\end{align*}
and let $\bm{\mathcal{V}}$ denote the matrix of $\mathcal{V}_{e_1, e_2}$ with rows
$\vec{\mathcal{V}}_{e_1,\cdot}:=\big[\mathcal{V}_{e_1,\underline{m}^-}, \ldots, \mathcal{V}_{e_1, \overline{m}^+}\big]$, for  $e_1\in E\setminus\{\overline{m}^a\}$.
Then from standard Markov chain arguments,
\begin{align*}
\bm{\mathcal{V}}=(\bm{I}-\bm{P_{-a}})^{-1},
\end{align*}
where $\bm{P}_{-a}$ is the transient transition sub-matrix defined in the preceding subsection. Multiplying by the respective sojourn times $\xi_m$, the expected absorption time starting from an arbitrary regime $e\in E\setminus\{\overline{m}^a\}$ is
$\widetilde{\mathbb{T}}_{e}:= \vec{\mathcal{V}}_{e, \cdot} \cdot \vec{\xi}_{-a}$,
where $\vec{\xi}_{-a}$ is the vector of expected sojourn times excluding $\xi_{\overline{m}^a}$. Finally, the expected time until $M^\ast$ gets absorbed, as defined in \eqref{eq:SMS_abTime}, is admitted as (cf.~\eqref{eq:expected-nim})
\begin{align*}
\mathbb{T}_m(x)=\mathbb{E}_x\big[\tau(x;s^{2,\ast}_m, s^{1,\ast}_m)\big] + \mathbb{P}_{x, m}\big[\cM^\ast_1=(m+1)^+\big]\times \widetilde{\mathbb{T}}_{(m+1)^+} +\mathbb{P}_{x, m}\big[\cM^\ast_1=(m-1)^-\big]\times \widetilde{\mathbb{T}}_{(m-1)^-}.
\end{align*}
Again further adjustments are needed when $M^\ast_0=\overline{m}-1$  or $M^\ast_0=\overline{m}$ as discussed in the preceding subsection.

\section{Proof of \cref{cor:11case}}\label{AP:prof11case}
\begin{proof}
From \eqref{eq:sys11b} and \eqref{eq:sys11c}, we write $V^1_{+1}(x)$ explicitly for $x\geq \ms$ by substituting in the respective expressions for $\nu^1_{+1}$ and $\omega^1_{-1}$:
\begin{align*}
	V^1_{+1}(x) & = D^1_{+1}(x)+\nu^1_{+1} G(x)\\
	&=D^1_{+1}(x)+\frac{V^1_{-1}(-\ms)-D^1_{+1}(-\ms)}{G(-\ms)}G(x)\\
	&=D^1_{+1}(x)+\frac{D^1_{-1}(-\ms)+\omega^1_{-1} F(-\ms)-D^1_{+1}(-\ms)}{G(-\ms)}G(x)\\
	&=D^1_{+1}(x)+\frac{G(x)}{G(-\ms)}\Big[(V^1_{+1}-D^1_{-1}-K^1_{-1})(\ms)\frac{F(-\ms)}{F(\ms)}+D^1_{-1}(-\ms)-D^1_{+1}(-\ms)\Big].
\end{align*}
The above gives an equation relating $V^1_{+1}(x)$ to $V^1_{+1}(\ms)$; therefore, if one defines
\begin{align} Q(s):=\frac{D^1_{+1}(s)-\frac{G(s)}{G(-s)}\Big[\big(D^1_{-1}(s)+K^1_{-1}(s)\big)\frac{F(-s)}{F(s)}+D^1_{+1}(-s)-D^1_{-1}(-s)\Big]}{1-\frac{G(s)}{G(-s)}\frac{F(-s)}{F(s)}},
\end{align}
and let $\ms$ be a solution to the system \eqref{eq:sys11}, then it holds $Q(\ms)=V^1_{+1}(\ms)$. Similarly, after differentiating with respect to $x$ (guaranteed by \cref{cor:11case} which requires smoothness of $D^1_m(\cdot)$ and $K^1_m(\cdot)$),
one can define
\begin{align}
	q(s):=D^{1'}_{+1}(s)+\frac{G'(s)}{G(-s)}\Big[(Q(s)-D^1_{-1}(s)-K^1_{-1}(s))\frac{F(-s)}{F(s)}+D^1_{-1}(-s)-D^1_{+1}(-s)\Big],
\end{align}
and conclude $q(\ms)=V^{1'}_{+1}(\ms)$. Then replacing $V^1_{+1}(x)$ by $Q(x)$ and $(V^{1}_{+1})'(x)$ by $q(x)$ in \eqref{eq:sys11a} we obtain that  solving the system \eqref{eq:sys11} is equivalent to finding the root(s) of
\begin{align}\label{eq:11_alt}
	\mathcal{Z}(s):=\big[Q(s)-D^1_{-1}(s)-K^1_{-1}(s)\big]F'(s)-\big[q(s)-(D^{1}_{-1})'(s)-(K^{1}_{-1})'(s)\big]F(s) = 0,
\end{align}
Since  $\ms>s^{2,\ast}_{+1}=-\ms\Longrightarrow \ms>0$ (otherwise the switching regions would overlap), we seek positive solutions to \eqref{eq:11_alt}. We shall show that $\mathcal{Z}(0) < 0$ and $\mathcal{Z}(s) > 0$ for $s$ large enough, which by continuity (as each term in \eqref{eq:11_alt} is continuous) implies the existence of a root.

{ On the one hand, the numerator of $Q(s)$ at $s=0$ is admitted as $$D^1_{+1}(0)-\frac{G(0)}{G(0)}\Big[\big(D^1_{-1}(0)+K^1_{-1}(0)\big)\frac{F(0)}{F(0)}+D^1_{+1}(0)-D^1_{-1}(0)\Big]=-K^1_{-1}(0)<0,$$
	while the denominator $1-\frac{G(s)}{G(-s)}\frac{F(-s)}{F(s)}=1-\big(\frac{F(-s)}{F(s)}\big)^2$ is strictly positive ($F(\cdot)$ is increasing) for $s>0$ and tends to zero as $s \downarrow 0$, so that $\lim_{s\downarrow 0 }Q(s) =-\infty$. Furthermore,
	\begin{align*}
		\lim_{s\downarrow 0 }q(s)= (D^{1}_{+1})'(0)+\frac{G'(0)}{G(0)}\Big[(\lim_{s\downarrow 0 }Q(s)-D^1_{-1}(0)-K^1_{-1}(0))\frac{F(0)}{F(0)}+D^1_{-1}(0)-D^1_{+1}(0)\Big]=+\infty,
	\end{align*}
	since $G(\cdot)$ is positive and decreasing ($G'(\cdot) < 0$) while all other terms beyond $\lim_{s\downarrow 0 }Q(s)$ are finite. Putting everything together,
	$$\lim_{s\downarrow 0 }\mathcal{Z}(s)=\big[\lim_{s\downarrow 0 }Q(s)-D^1_{-1}(0)-K^1_{-1}(0)\big]F'(0)-\big[\lim_{s\downarrow 0 }q(s)-(D^{1}_{-1}-K^{1}_{-1})'(0)\big]F(0) =-\infty,$$
	since $F(\cdot)$ is positive and increasing, and all other terms are finite.
	
	On the other hand, for $s$ large enough  and using that~\cite[Sec.~2]{borodin2012handbook}
	\begin{align*} \lim_{x\downarrow\underline{d}}F(x)=0,\qquad\lim_{x\downarrow\underline{d}}G(x)=+\infty,\qquad\lim_{x\uparrow\bar{d}}F(x)=+\infty,\qquad\lim_{x\uparrow\bar{d}}G(x)=0,
	\end{align*}
 we have $Q(s)\approx D^1_{+1}({s}), q({s})\approx (D^{1}_{+1})'({s})$ asymptotically as $s\uparrow \bar{d}$ and hence
	\begin{align*}
		\mathcal{Z}(s)&\approx \big[D^1_{+1}({s})-D^1_{-1}({s})-K^1_{-1}(\bar{s})\big]F'({s})-\big[D^{1'}_{+1}({s})-D^{1'}_{-1}({s})-K^{1'}_{-1}({s})\big]F({s}),\\
		&= \Big[-\big(\frac{D^1_{+1}({s})-D^1_{-1}({s})-K^1_{-1}({s})}{F({s})}\big)'\Big]\cdot F^2({s}) > 0.
	\end{align*}
	The last inequality follows from $\Delta D := D^1_{+1}-D^1_{-1}-K^1_{-1} \in  \mathcal{H}_{\text{inc}}$ (cf.~\cref{def:AP_Hinc}), thus
	\begin{align*}
		\begin{cases}
			\displaystyle\limsup_{s\uparrow\bar{d}} \frac{\Delta D(s)}{F(s)}=0,\\
			\Delta D(s) >0,\text{ for $s$ large},
		\end{cases}\quad\Longrightarrow\quad \bigg(\frac{\Delta D({s})}{F({s})}\bigg)'<0 \quad \text{as} \quad s \uparrow \bar{d}.
	\end{align*}
}  \end{proof}


\bibliographystyle{siamplain}
{\large \bibliography{mybib}}

\end{document}